\newif\ifdraft\draftfalse   
\newif\ifanon\anonfalse     
\newif\ifcamera\cameratrue  
\newif\ifshepherding\shepherdingfalse 
\newif\ifappendix\appendixfalse 
\newif\iffull\fullfalse     
\newif\iflongrefs\longrefsfalse 
\newif\ifbackref\backreffalse 
\newif\ifsooner\soonerfalse
\newif\iflater\laterfalse
\newif\ifprob\probfalse     
\makeatletter \@input{texdirectives.tex} \makeatother
\def\@copyrightpermission{\ifcamera\\\\\\\fi This work is licensed under a \href{https://creativecommons.org/licenses/by/4.0/}{Creative Commons Attribution 4.0 International License}}
  \def\@authorsaddresses{}
\def\@authorsaddresses{}
  \renewcommand{\headrulewidth}{\z@}%
  \renewcommand{\footrulewidth}{\z@}%
  \renewcommand{\headrulewidth}{\z@}%
  \renewcommand{\footrulewidth}{\z@}%
\def\@mkbibcitation{}
\newcommand\citepos[1]{\citeauthor{#1}'s\ [\citeyear{#1}]}
\let\cite=\citep
\definecolor{darkblue}{rgb}{0.0,0.0,0.3}
\newcommand{\aset}[1]{{\ensuremath{\{#1\}}}}
\let\ls\lstinline
\def\Snospace~{\S{}}
\newcommand\fstar{F$^\star$\xspace}
\newcommand\DM{\textsc{DM}\xspace}
\newcommand\SM{\textsc{SM}\xspace}
\definecolor{dkblue}{rgb}{0,0.1,0.5}
\definecolor{dkgreen}{rgb}{0,0.4,0}
\definecolor{dkred}{rgb}{0.6,0,0}
\definecolor{dkpurple}{rgb}{0.7,0,1.0}
\definecolor{purple}{rgb}{0.9,0,1.0}
\definecolor{olive}{rgb}{0.4, 0.4, 0.0}
\definecolor{teal}{rgb}{0.0,0.4,0.4}
\definecolor{azure}{rgb}{0.0, 0.5, 1.0}
\definecolor{gray}{rgb}{0.5, 0.5, 0.5}
\definecolor{dkgrey}{rgb}{0.2, 0.2, 0.2}
\definecolor{lilac}{rgb}{0.70, 0.04, 0.08}
\newcommand{\comm}[3]{\ifdraft{{\color{#1}[#2: #3]}}\fi}
\newcommand{\ch}[1]{\comm{teal}{CH}{#1}}
\newcommand{\guido}[1]{\comm{blue}{Guido}{#1}}
\newcommand{\km}[1]{\comm{dkred}{Kenji}{#1}}
\newcommand{\da}[1]{\comm{azure}{DA}{#1}}
\newcommand{\et}[1]{\comm{dkblue}{ET}{#1}}
\newcommand{\bob}[1]{\comm{olive}{Bob}{#1}}
\newcommand*{\EG}{e.g.,\xspace}
\newcommand*{\IE}{i.e.,\xspace}
\newcommand*{\ETAL}{et al.\xspace}
\newtheorem{definition}{Definition}
\newtheorem{theorem}{Theorem}
\newcommand*{\ii}[1]{\ensuremath{\mathit{#1}}}
\newcommand{\eqdef}[0]{=}
\newcommand\alt[0]{\;|\;}
\newcommand\Pure{\mathrm{Pure}}
\newcommand\Div{\mathrm{Div}}
\newcommand\St{\mathrm{St}}
\newcommand\StT{\mathrm{StT}}
\newcommand\Exc{\mathrm{Exc}}
\newcommand\ExcT{\mathrm{ExcT}}
\newcommand{\Id}{\mathrm{Id}}
\newcommand{\Cont}{\mathrm{Cont}}
\newcommand\EXC{\mathrm{EXC}}
\newcommand\PURE{\mathrm{PURE}}
\newcommand{\pre}{\mathit{pre}}
\newcommand{\post}{\mathit{post}}
\newcommand{\WP}[0]{\mathrm{wp}}
\newcommand\prop{\mathbb{P}}
\newcommand\Cprop{\Cont_\prop}
\newcommand{\pred}[0]{\mathcal{P}\!\mathrm{red}}
\newcommand{\strpost}[0]{\mathcal{SP}\!\mathrm{ost}}
\newcommand{\prepost}[0]{\mathcal{P}\!\mathrm{re}\mathcal{P}\!\mathrm{ost}}
\newcommand{\relprepost}[0]{\mathcal{R}\!\mathrm{el}\mathcal{P}\!\mathrm{re}\mathcal{P}\!\mathrm{ost}}
\newcommand\M{M}
\newcommand\W{W}
\newcommand\T{\mathcal{T}}
\newcommand{\ret}{\texttt{ret}}
\newcommand\retD{\ret^{\mathcal{D}}}
\newcommand\retM{\ret^{\M}}
\newcommand\retW{\ret^{\W}}
\newcommand{\bind}{\texttt{bind}}
\newcommand\bindD{\bind^{\mathcal{D}}}
\newcommand\bindM{\bind^{\M}}
\newcommand\bindW{\bind^{\W}}
\newcommand{\weaken}{\texttt{weaken}}
\newcommand\hoare[3]{\{\,#1\,\}\;#2\;\{\,#3\,\}}
\newcommand{\algop}{\mathtt{op}}
\newcommand{\geneff}{\mathtt{gen}_{\algop}}
\newcommand\retT{\ret^{T}}
\newcommand{\handlewith}{\mathtt{handle}\text{-}\mathtt{with}}
\newcommand{\handlewithi}[1]{\handlewith^{#1}}
\newcommand{\geneffT}{\geneff^{\T}}
\newcommand{\geneffM}{\geneff^{\M}}
\newcommand{\geneffW}{\geneff^{\W}}
\newcommand{\geneffD}{\geneff^{\mathcal{D}}}
\newcommand{\algopT}{\algop^{T}}
\newcommand{\algopM}{\algop^{\M}}
\newcommand{\algopW}{\algop^{\W}}
\newcommand{\algopD}{\algop^{\mathcal{D}}}
\newcommand\bool[0]{\mathbb{B}}
\newcommand{\true}[0]{\texttt{true}}
\newcommand{\false}[0]{\texttt{false}}
\newcommand{\choice}[0]{\texttt{pick}}
\newcommand{\fail}[0]{\texttt{fail}}
\newcommand{\NDet}[0]{\mathrm{NDet}}
\newcommand{\IO}[0]{\mathrm{IO}}
\newcommand{\Prob}[0]{\mathrm{Prob}}
\newcommand{\supp}[1]{\mathrm{supp}(#1)}
\newcommand{\inr}{\texttt{inr}}
\newcommand{\inl}{\texttt{inl}}
\newcommand{\stget}[0]{\texttt{get}}
\newcommand{\stput}[0]{\texttt{put}}
\newcommand{\stmod}[0]{\texttt{modify}}
\newcommand{\throw}[0]{\texttt{throw}}
\newcommand{\ioread}[0]{\texttt{input}}
\newcommand{\iowrite}[0]{\texttt{output}}
\newcommand{\flip}{\texttt{flip}}
\newcommand{\HistST}{\mathrm{HistST}}
\newcommand{\Hist}{\mathrm{Hist}}
\newcommand{\Fr}{\mathrm{Fr}}
\newcommand\bL{\mathcal{L}}
\newcommand\bM{\mathbb{M}}
\newcommand\vdL{\vdash_{\bL}}
\newcommand\vdD{\vdash_\SM{}}
\newcommand\pair[2]{\left \langle  #1,\;#2 \right \rangle}
\newcommand\proj[1]{\pi_{#1}}
\newcommand\app[2]{#1\;{#2}}
\newcommand\abs[3]{\lambda^{#1} #2.~#3}
\newcommand\substp[2]{{#1}[#2]}
\newcommand\subst[3]{\substp {#1} {#2/#3}}
\newcommand{\tone}[0]{t_{1}}
\newcommand{\ttwo}[0]{t_{2}}
\newcommand{\Type}[0]{\mathrm{Type}}
\newcommand{\one}[0]{\mathbb{1}}
\newcommand{\zero}[0]{\mathbb{0}}
\newcommand{\elab}[2]{\llbracket #2 \rrbracket_{#1}}
\newcommand{\llbrace}[0]{\lbrace\!|}
\newcommand{\rrbrace}[0]{|\!\rbrace}
\newcommand{\elabrelbase}[1]{\llbrace #1 \rrbrace}
\newcommand{\elabrel}[4]{\elabrelbase{#4}^{#3}_{#1,#2}}
\newcommand{\lift}[0]{\texttt{lift}}
\renewcommand{\paragraph}[1]{\smallskip{\bf #1.}\;}
\begin{document}


\setlength{\abovedisplayskip}{2pt}
\setlength{\belowdisplayskip}{2pt}

\def\titlestr{Dijkstra Monads for All}
\title[\titlestr]{\titlestr}




\ifanon
\author{}
\else
\author{Kenji Maillard} 
\affiliation{
  \ifcamera\institution{Inria}\city{Paris}\country{France}
  \else\institution{Inria Paris}\fi}
\affiliation{
  \ifcamera\institution{ENS}\city{Paris}\country{France}
  \else\institution{ENS Paris}\fi}
\author{Danel Ahman} 
\affiliation{
  \department{Faculty of Mathematics and Physics}
  \ifcamera\institution{University of Ljubljana}\city{Ljubljana}\country{Slovenia}
  \else\institution{University of Ljubljana}\fi}
\author{Robert Atkey} 
\affiliation{
  \ifcamera\institution{University of Strathclyde}\city{Glasgow}\country{UK}
  \else\institution{University of Strathclyde}\fi}
\author{Guido Mart\'inez} 
\affiliation{
  \ifcamera\institution{CIFASIS-CONICET}\city{Rosario}\country{Argentina}
  \else\institution{CIFASIS-CONICET Rosario}\fi}
\author{Cătălin Hriţcu}
\affiliation{
  \ifcamera\institution{Inria}\city{Paris}\country{France}
  \else\institution{Inria Paris}\fi}
\author{Exequiel Rivas} 
\affiliation{
  \ifcamera\institution{Inria}\city{Paris}\country{France}
  \else\institution{Inria Paris}\fi}
\author{\'Eric Tanter} 
\affiliation{
  \department{Computer Science Department (DCC)}
  \ifcamera
  \institution{University of Chile}\city{Santiago}\country{Chile}
  \else\institution{University of Chile}\fi}
\affiliation{
  \ifcamera\institution{Inria}\city{Paris}\country{France}
  \else\institution{Inria Paris}\fi}
\makeatletter
\renewcommand{\@shortauthors}{Maillard~\ETAL}
\makeatother
\fi


\begin{abstract}
This paper proposes a general semantic framework for verifying
programs with arbitrary monadic side-effects using Dijkstra monads,
which we define as monad-like structures indexed by a specification
monad.  We prove that any monad morphism between a computational monad
and a specification monad gives rise to a Dijkstra monad, which
provides great flexibility for obtaining Dijkstra monads tailored to
the verification task at hand.  We moreover show that a large variety
of specification monads can be obtained by applying monad transformers
to various base specification monads, including predicate transformers
and Hoare-style pre- and postconditions.  For defining correct monad
transformers, we propose a language inspired by Moggi's monadic
metalanguage that is parameterized by a dependent type theory.
We also develop a notion of algebraic operations for Dijkstra monads,
and start to investigate two ways of also accommodating effect
handlers. We implement our framework in both Coq and \fstar{}, and
illustrate that it supports a wide variety of verification styles for
effects such as exceptions, nondeterminism, state, input-output, and
general recursion.
\end{abstract}
 \begin{CCSXML}
<ccs2012>
<concept>
<concept_id>10003752.10010124.10010138.10010140</concept_id>
<concept_desc>Theory of computation~Program specifications</concept_desc>
<concept_significance>500</concept_significance>
</concept>
<concept>
<concept_id>10003752.10010124.10010138.10010142</concept_id>
<concept_desc>Theory of computation~Program verification</concept_desc>
<concept_significance>500</concept_significance>
</concept>
<concept>
<concept_id>10003752.10010124.10010131</concept_id>
<concept_desc>Theory of computation~Program semantics</concept_desc>
<concept_significance>300</concept_significance>
</concept>
<concept>
<concept_id>10003752.10010124.10010138.10010141</concept_id>
<concept_desc>Theory of computation~Pre- and post-conditions</concept_desc>
<concept_significance>300</concept_significance>
</concept>
<concept>
<concept_id>10003752.10003790.10011740</concept_id>
<concept_desc>Theory of computation~Type theory</concept_desc>
<concept_significance>100</concept_significance>
</concept>
</ccs2012>
\end{CCSXML}

\ccsdesc[500]{Theory of computation~Program specifications}
\ccsdesc[500]{Theory of computation~Program verification}
\ccsdesc[300]{Theory of computation~Program semantics}
\ccsdesc[300]{Theory of computation~Pre- and post-conditions}
\ccsdesc[100]{Theory of computation~Type theory}

\ifcamera\ifshepherding\else
\keywords{program verification,
side-effects,
monads,
dependent types,
foundations}
\fi\fi

\maketitle

\section{Introduction}
\label{sec:intro}

%


The aim of this paper is to provide a semantic framework for specifying
and verifying programs with arbitrary side-effects modeled by
computational monads~\cite{Moggi89}.
We base this framework on Dijkstra monads, which have already proven valuable
in practice for verifying effectful code~\cite{mumon, evercrypt-msr}.
A Dijkstra monad $\mathcal{D}\;A\;w$ is a monad-like structure that
classifies effectful computations returning values in $A$ and
specified by $w : \W A$, where $W$ is what we call a
\emph{specification monad}.\footnote{Prior work has used the term
  ``Dijkstra monad'' both for the indexed structure $\mathcal{D}$ and for
  the index $\W$~\cite{fstar-pldi13, mumon, dm4free, jacobs14dijkstra, Jacobs15}.
  In order to prevent confusion, we use the term ``Dijkstra monad''
  exclusively for the indexed structure $\mathcal{D}$ and the term
  ``specification monad'' for the index $W$.}
A typical specification monad contains predicate transformers mapping postconditions to preconditions.
For instance, for computations in the state monad $\St\,A = S \to A \times S$,
a natural specification monad is 
$W^{\St}A = (A \times S \to \prop) \to (S \to \prop)$,
mapping postconditions,
which in this case are predicates on final results and states, to
preconditions, which are predicates on initial states
(here $\prop$ stands for the internal type of propositions).
%
%
However, given an {\em arbitrary} monadic effect,
how do we find such a specification monad?
Is there a {\em single} specification monad
that we can associate to each effect?
If not, what are the
{\em various} alternatives, and what 
are the constraints on this {\em association} for obtaining a proper
Dijkstra monad?

A partial answer to this question was provided by the 
{\em Dijkstra Monads for Free} ({\em DM4Free}) approach
of \citet{dm4free}: from a computational monad defined as a term in a
metalanguage called \DM{}, a (single) canonical specification monad is
automatically derived through a syntactic translation. Unfortunately, 
 while this approach works for stateful and
exceptional computations, it cannot handle several other effects, such as input-output (IO), 
due to various syntactic restrictions in \DM{}.

To better understand and overcome such limitations, we make
the novel observation that a computational monad in \DM{} is essentially 
a monad transformer applied to the identity monad; and that the 
specification monad is obtained by applying this monad transformer to the
continuation monad 
${\Cprop A = (A \to \prop) \to \prop}$.
Returning to the example of state, the specification monad $W^{\St}A$
can be obtained from the state monad transformer
$\StT\,M\,A = S \to M (A \times S)$.
This reinterpretation of the {\em DM4Free} approach sheds light on its
limitations: For a start, the class of supported computational monads
is restricted to those that can be decomposed as a monad transformer
applied to the identity monad. However, this rules out various effects such as
nondeterminism or IO, for which no proper monad
transformer is known~\cite{HylandLPP07, AdamekMBL12, BowlerGLS13}.
\ch{So what's the status of this?
  (\href{https://fstar.zulipchat.com/\#narrow/stream/186739-dm4all/topic/Nondeterminism.20transformer}{Zulip discussion})
  It seems to be confusing many people!}

Further, obtaining both the computational and 
specification monads from the same monad transformer introduces
a very tight coupling. In particular, in {\em DM4Free}
one cannot associate {\em different}
specification monads with a particular effect.
For instance, the exception monad $\Exc\,A = A + E$ is associated by {\em DM4Free}
with the specification monad ${W^\Exc A = ((A + E) \to \prop) \to \prop}$, by applying the 
exception monad transformer $\ExcT\,M\,A = M(A + E)$ to $\Cprop$.
This specification monad requires the postcondition to account for 
both the success and failure cases.
While this is often desirable, at times it may be more
convenient to use the simpler specification monad $\Cprop$ directly,
allowing exceptions to be thrown freely, without having to explicitly
allow this in specifications.
Likewise, for IO, one may wish to have rich specifications that 
depend on the {\em
  history} of interactions with the external world, or simpler {\em
  context-free} specifications that are as
local as possible.
In general, one should have the freedom to choose a specification monad that is
expressive enough for the verification task at hand, but also simple
enough so that verification is manageable in practice.

Moreover, even for a fixed computational monad and a fixed
specification monad there can be more than one way to associate the two in a
Dijkstra monad.
For instance, to specify exceptional computations using $\Cprop$, we
could allow all exceptions to be thrown freely---as explained above,
which corresponds to a {\em partial correctness} interpretation---but
a different choice is to prevent any exceptions from being
raised at all---which corresponds to a {\em total correctness} interpretation.
%
Similarly, for specifying nondeterministic computations,
two interpretations are possible for $\Cprop$: a {\em demonic} one, in which the
postcondition should hold for {\em all} possible result values~\cite{Dijkstra75},
and an {\em angelic} one, in which the postcondition should hold for
{\em at least one} possible result~\cite{Floyd67}.

{\em The key idea of this paper is to decouple the computational monad and
the specification monad}: instead of insisting on deriving both from
the same monad transformer as in {\em DM4Free},
we consider them independently and only require that they are related
by a {\em monad morphism}, \IE a mapping between two monads that
respects their monadic structure.
%
%
%
%
For instance, a monad morphism from nondeterministic computations could map a
finite set of possible outcomes 
to a predicate transformer in $(A \to \prop) \to \prop$.
%
Given a finite set $R$ of results in $A$ and a postcondition
$post : A \to \prop$, there are only two reasonable ways to obtain a
single proposition: either take the {\em conjunction} of $post~v$ for every
$v$ in $R$ (demonic nondeterminism), or the {\em disjunction}
(angelic nondeterminism).
For the case of IO, in our framework we can consider at least two
monad morphisms relating the $\IO$ monad to two different
specification monads, $\W^{\Fr}$ and
$\W^{\Hist}$, where $\mathcal{E}$ is the alphabet of IO events:
\[
  \W^{\Fr}X = {(X \times \mathcal{E}^{*} \to \prop) \to \prop} \quad
  \longleftarrow \quad
  \IO
  \quad
  \longrightarrow \quad
  \W^{\Hist}X = {(X \times \mathcal{E}^{*} \to \prop) \to
  (\mathcal{E}^{*} \to \prop)}
\]
While both specification monads take postconditions of the same type
(predicates on the final value and the produced IO events), the produced
precondition of $\W^{\Hist}X$ has an additional argument
$\mathcal{E}^{*}$, which denotes the history of of interactions 
(i.e., IO events) with the external world.

%
This paper makes the following {\bf contributions}:
\begin{itemize}[leftmargin=*,nosep,label=$\blacktriangleright$]
\item We propose a new semantic framework for verifying programs with
  arbitrary monadic effects using Dijkstra monads.
  By decoupling the computational monad from the specification monad
  we remove all previous restrictions on supported computational
  monads. 
  Moreover, this decoupling allows us to flexibly choose the
  specification monad and monad morphism most suitable for the
  verification task at hand.
  We investigate a large variety of specification monads that
  are obtained by applying monad transformers to various base monads,
  including \iffull monads of \fi
  predicate transformers (\EG weakest preconditions
  and strongest postconditions) and Hoare-style pre- and postconditions.
  This flexibility allows \iffull our framework to support \fi a wide
  range of verification styles for nondeterminism, \ifprob
  probabilities, \fi IO, and general recursion---none of which
  was possible with {\em DM4Free}.


\item We give the first general definition of Dijkstra monads as
  a monad-like structure indexed by a specification monad 
  ordered by precision.
  We show that any monad morphism 
  gives rise to a Dijkstra monad, and that from any
  such Dijkstra monad we can recover the monad morphism.
  More generally, we construct an adjunction between
  Dijkstra monads and a generalization of monad morphisms, 
  monadic relations, which induces the above-mentioned equivalence.

\item We recast {\em DM4Free}
  as a special case of our new framework.
  For this, we introduce \SM{}, a principled metalanguage for defining
  correct-by-construction monad transformers.
  The design of \SM{} is inspired by \DM{} and Moggi's monadic metalanguage,
  but it is parameterized by an arbitrary dependent type theory instead of a 
  set of simple types.
  We show that under a natural linearity condition 
  \SM{} terms give rise to correct-by-construction monad transformers
  (satisfying all the usual laws)
  as well as canonical monadic relations, defined from a logical relation.
  This allows us to reap the benefits of the {\em
    DM4Free} construction when it works well (\EG state, exceptions), and
  to explicitly provide monad morphisms when it does not (\EG nondeterminism, IO).

\item We give an account of Plotkin and Power's algebraic operations 
for Dijkstra monads.
We show that a monad morphism equips both its specification monad 
and the corresponding Dijkstra monad with algebraic operations.
We also start to investigate two approaches to effect handlers.
The first approach, in which the specification of operations is induced by the handler, 
allows us to both provide a uniform treatment of {\em DM4Free}'s hand-rolled 
examples of exception handling, and subsume the prior work on 
weakest exceptional preconditions. However, this approach  
seems inherently limited to exceptions.
The second approach, in which operations have to be given specifications
upfront, enables us to also accommodate handlers for effects other than
exceptions, for instance for general recursion, based on McBride's
free monad technique.



\item We illustrate the generality of our semantic framework by
  applying it to the verification of simple monadic programs in both
  Coq and \fstar{}.
\iffull
  We expect the ideas developed here could also be applied to other
  dependently typed programming languages~\cite{agda, DiehlFS18,
    WeirichVAE17, MouraKADR15, nuprl}.
\fi
\end{itemize}

\paragraph{Paper structure}
We start by reviewing
the use of monads in effectful programming and the closest related approaches
for reasoning about such programs (\autoref{sec:background}).
We then give a gentle overview of our approach through illustrative
examples (\autoref{sec:overview}).
After this, we dive into the technical details:
First, we show how to obtain a wide range of specification
monads by applying monad transformers 
to base specification monads (\autoref{sec:metalanguage}).
Then, we show the tight and natural correspondence between
Dijkstra monads, and monadic relations and monad morphisms (\autoref{sec:dijkstraMonads}).
We also study algebraic operations and effect handlers for Dijkstra
monads (\autoref{sec:algebraic}).
Finally, we outline our implementations of these ideas in \fstar{} and Coq
(\autoref{sec:implementation}), before discussing related
(\autoref{sec:related}) and future work (\autoref{sec:conclusion}).

\ifanon
The anonymized supplementary materials for this paper include:
(1)~an appendix\iffull with details we had to omit for space\fi,
(2)~verification examples and a formalization of \SM{} in Coq, and
(3)~verification examples in \fstar{}.
The non-anonymized materials include (4)~the implementation of our
framework in \fstar{}.
\else
Supplementary materials include:
(1) verification examples and implementation of our framework in \fstar{}
(\url{https://github.com/FStarLang/FStar/tree/dm4all/examples/dm4all}); 
(2) verification examples and a formalization in Coq
    (\url{https://gitlab.inria.fr/kmaillar/dijkstra-monads-for-all}); 
(3) an \ifappendix\else online \fi appendix with further technical
details\ifappendix\else{} (\url{https://arxiv.org/abs/1903.01237})\fi.
\fi




\section{Background: Monads and Monadic Reasoning}
\label{sec:background}

We start by briefly reviewing the use of monads in effectful
programming, as well as the closest related approaches for verifying
monadic programs.

\subsection{The Monad Jungle Book}
\label{sec:jungle}

Side effects are an important part of programming.
They arise in a multitude of shapes, be it imperative algorithms,
nondeterministic operations, potentially diverging computations, or interactions with the external world.
These various effects can be uniformly captured by the algebraic structure known 
as a {\em computational monad}~\cite{Moggi89, BentonHM00}.
This uniform interface is provided via a type $\M A$ of computations
returning values of type $A$; 
a function ${\retM : A \to \M A}$ that coerces a value $v:A$ to a trivial computation, for
instance seeing $v$ as a stateful computation leaving the state untouched; 
and a function $\bindM\;m\;f$ that sequentially composes the monadic computations
$m:\M A$ with $f : A \to \M B$, for instance threading through the state.
Equations specify that $\retM$ does not have any computational
effect, and that $\bindM$ is associative.

The generic monad interface ($\M,\retM,\bindM$) is, however, not enough to write 
programs that exploit the underlying effect. To this end, each computational 
monad also comes with {\em operations} for causing effects.
%
We briefly recall a few examples of computational monads and their operations:
\begin{description}[nosep]
\item[Exceptions:] 
A computation that can potentially throw exceptions of type $E$
can be represented by the monad $\Exc\,A = A + E$.
Returning a value $v$ is the obvious
left injection, while sequencing $m$ with $f$ is given by 
applying $f$ to $v$ if ${m = \inl\, v}$, or $\inr\, e$ if ${m = \inr\, e}$, i.e., 
when $m$ raised an exception.
The operation
$\throw : E \to \Exc\,\zero$ is defined by right injection. 
%
When we take $E = \one$, exceptions also give us a simple model of partiality 
(the monad being $\Div\,A = A + \one$).


\item[State:] A stateful computation can be
modeled as a state-passing function, \IE $\St\,A = S \to
A \times S$, where $S$ is the type of the state.
Returning a value $v$ is the function $\lambda s . \pair{v}{s}$ that produces
the value $v$ and the unmodified state, whereas binding $m$ to $f$ is obtained
by threading through the state, i.e. $\lambda s . {\bf let}~\pair{v}{s'} =
m~s~{\bf in}~ f~v~s'$. The state monad comes with operations $\stget : \St\,S =
\lambda s . \pair{s}{s}$ to retrieve the state, and $\stput : S \to \St\,\one =
\lambda s . \lambda s' . \pair{\ast}{s}$ to overwrite it.

\item[Nondeterminism:] A nondeterministic computation can be represented by a
  finite set of possible outcomes, i.e. $\NDet\,A = \mathcal{P}_{\mathsf{fin}}(A)$.
Returning a value $v$ is provided by the singleton $\aset{v}$, whereas
sequencing $m$ with $f$ amounts to forming the union $\bigcup_{v \in m}f\,v$.
This monad comes with an operation $\choice : \NDet\,\bool
= \left\{ \true, \false \right\}$, which nondeterministically chooses
a boolean value, and an operation $\fail : \NDet\,\zero = \emptyset$,
which unconditionally fails.
%

\item[Interactive input-output (IO):]
An interactive computation with input type $I$ and output type $O$
can be represented by the inductively defined monad
$\IO\,A = \mu Y . A + (I \to Y) + O \times Y$, which describes
three possible kinds of computations: either return a value ($A$),
expect to receive an input and then continue ($I \to Y$),
or output and continue ($O \times Y$). Returning $v$ is constructing a leaf, whereas sequencing $m$ with $f$
amounts to tree grafting: replacing each leaf with value $a$ in $m$
with the tree $f a$. 
The operations for IO are $\ioread : \IO\,I$ and
$\iowrite : O \to \IO\,\one$.

\ifprob
\item[Probabilities:] A probabilistic computation is a sub-probability
  distribution on possible outcomes, \IE{} $\Prob\,A$ represents measurable
  functions from $A$ to $[0;1]$.
  Considering the simple setting of finite discrete probabilities, as
  given by the \emph{finite Giry monad}, a distribution has finite support and expectation bounded by $1$. 
  Returning a value $v$ provides the Dirac distribution at $v$, while binding
  a distribution $m : \Prob\,A$ to a function $f : A \to \Prob\,B$ amounts to
  computing the distribution on $B$ given by $\abs{}{y}{\Sigma_{x \in \supp{m}}
    f\,x\,y}$.
  The operation $\flip : [0;1] \to \Prob\,\bool$ provides a
  biased distribution on booleans.
  \et{a word on the (sampling?) operation}\ch{Yes. At this point unclear
  whether we'll return to this monad, but it would be nice.}
\fi

\end{description}

\subsection{Reasoning About Computational Monads}
\label{sec:predicateTransformers}

Many approaches have been proposed for reasoning about effectful
programs; 
we review the ones closest to ours.
In an imperative setting, Hoare introduced a \emph{program logic} to
reason about properties of programs~\cite{Hoare69}.
The judgments of this logic are \emph{Hoare triples} 
$\hoare{\pre}{c}{\post}$.
Intuitively, if the
precondition $\pre$ is satisfied, then running the program $c$ 
leaves us in a situation where $\post$ is satisfied, provided that $c$
terminates.
%
For imperative programs---\IE statements changing the program's
state---$\pre$ and $\post$ are predicates over
\iffull the initial and the final state\else states\fi.

Hoare's approach can be directly adapted to the monadic setting by 
replacing imperative programs $c$ with monadic computations $m : \M\,A$.
This approach was first proposed in Hoare Type Theory~\cite{nmb08htt},
where a {\em Hoare monad} of the form 
$\texttt{HST}\; \pre\; A\; \post$ augments the state monad over $A$
with a precondition $\pre : S \to \prop$ and postcondition
$\post : A \times S \to \prop$.
So while preconditions are still predicates over initial states,
postconditions are now predicates over both final states and results.
%
%
While this approach was successfully extended to a few other
effects~\cite{NanevskiBG13, ynot-icfp08, DelbiancoN13}, there is still
no general story on how to define a Hoare monad or even just the shape
of pre- and postconditions for an arbitrary effect.


A popular alternative to proving properties of imperative programs is
Dijkstra's {\em weakest precondition calculus}~\cite{Dijkstra75}. The
main insight of this calculus is that we can typically compute a
weakest precondition $\WP(c, post)$ such that $\pre \Rightarrow \WP(c,
post)$ if and only if $\hoare{\pre}{c}{\post}$, and therefore partly
automate the verification process by reducing it to a logical
decision problem. \citet{fstar-pldi13} observed that it is
possible to adopt Dijkstra's technique to ML programs with state
and exceptions elaborated to monadic style. They
propose a notion of {\em Dijkstra monad} of the form
$\texttt{DST}\; A\; wp$, where $wp$ is a predicate transformer that
specifies the behavior of the monadic computation. These predicate
transformers are represented as functions that, given a postcondition
on the final state, and the result value of type $A$ or an exception of type $E$,
calculate a corresponding precondition on the initial state. Their
predicate transformer type can be written as follows:
\[
  W^{\ii{ML}} \quad=\quad \underbrace{((A + E)  \times S \to \prop)}_{\mathit{postconditions}}\; \to \underbrace{(S \to \prop)}_{\mathit{preconditions}}.
\]
In subsequent work, \citet{mumon} extend this to programs that
combine multiple sub-effects. They compute more efficient weakest
preconditions by using Dijkstra monads that precisely capture the actual
effects of the code, instead of verifying everything using $W^{\ii{ML}}$ above.
For example, pure computations are verified using a Dijkstra monad
whose specifications have type:
\[
  W^{\Pure}A \quad=\quad\Cprop\,A\quad=\quad (A \to \prop) ~\to~ \prop,  \\
\]
while stateful (but exception-free) computations are verified using
specifications of type:
\[
  W^{\St}A \quad=\quad (A \times S \to \prop)  ~\to~ (S \to \prop).
\]
Recently, \citepos{dm4free} {\em DM4Free} work shows that these
originally disparate specification monads can be uniformly derived
from computational monads defined in their \DM{} metalanguage.

An important observation underlying these techniques is that predicate
transformers have a natural monadic structure. For instance, it is not
hard to see that the predicate transformer type $\W^{\Pure}$ is
simply the continuation monad with answer type $\prop$, that
$\W^{\St}$ is the state monad transformer applied to $\W^{\Pure}$, and
that $\W^{\ii{ML}}$ is the state and exceptions monad transformers
applied to $\W^{\Pure}$.
It is this monadic structure that supports writing computations that carry their own specification. In the next section, we show that it is also the basis for what we call a \emph{specification monad}.

\section{A Gentle Introduction to Dijkstra Monads for All}
\label{sec:overview}

\ch{To save space the amount of signposting could be reduced in this
  section. It's anyway a bit excessive: ``let us spend 3 lines to
  explain what we're going to tell you in the next 5 line paragraph''
  kind of stuff}

In this section we introduce a few basic definitions and illustrate
the main ideas of our semantic framework on various relatively simple examples.
We start from the observation that the kinds of specifications most
commonly used in practice form {\em ordered monads}
(\autoref{sec:specificationmonads}).
On top of this we define {\em effect observations}, as just monad
morphisms between a computation and a specification monad
(\autoref{sec:observations}), and give various examples
(\autoref{sec:EffObsExamples}).
Finally, we explain how to use effect observations to obtain Dijkstra
monads, and how to use Dijkstra monads for program verification
(\autoref{sec:DijkstraMonadsExamples}).

\subsection{Specification Monads}
\label{sec:specificationmonads}
\label{sec:ordered-monads}

The realization that predicate transformers form
monads~\cite{fstar-pldi13, mumon, dm4free, jacobs14dijkstra, Jacobs15}
is the starting point to provide a uniform notion of specifications.
Generalizing over prior work, we show that this is true not only for
weakest precondition transformers, but also for strongest
postconditions, and pairs of pre- and postconditions
(see \autoref{sec:basic-spec-monads}).
Intuitively, elements of a specification monad can be used to specify
properties of some computation, \EG $\W^{\Pure}$ can specify pure or
nondeterministic computations, and $\W^{\St}$ can specify stateful
computations.

The specification monads we consider are {\em ordered}.
%
%
%
Formally, a monad $\W$ is ordered when $\W A$ is equipped with a
preorder $\leq^{\W A}$ for each type $A$, and $\bindW$ is monotonic in
both arguments:\iffull\footnote{This can be seen as a more concrete
  presentation of order-enriched monads~\cite{KatsumataS13,RauchGS16}}\fi
\[\forall (w_{1} \leq^{\W A} w'_{1}).\, \forall (w_{2}\,w'_{2} : A \to \W B).\, (\forall
  x:A.\, w_{2}\,x \leq^{\W B} w'_{2}\,x) \Rightarrow \bindW~w_{1}~w_{2}\leq^{\W B}\bindW~w'_{1}~w'_{2}\]
This order allows specifications to be compared as being either more or less precise.
For example, for the specification monads $\W^{\Pure}$ and $\W^{\St}$, the ordering is given by
\begin{align*}
w_{1} \leq w_{2} : \W^{\Pure}A \qquad&\Leftrightarrow \qquad \forall
  (p : A \to \prop).\, w_{2}\,p \Rightarrow w_{1}\,p\\
w_{1} \leq w_{2} : \W^{\St}A \qquad&\Leftrightarrow \qquad \forall
  (p : A \times S \to \prop) (s:S).\, w_{2}\,p\,s \Rightarrow w_{1}\,p\,s
\end{align*}
%
%
For $\W^{\Pure}$ and $\W^{\St}$ to form ordered monads,
it turns out that we need to restrict our attention to
\emph{monotonic} predicate transformers, \IE{} those mapping (pointwise) stronger
postconditions to stronger preconditions.
%
%
This technical condition, quite natural from the point of view of
verification, will be assumed implicitly for all the predicate
transformers, and will be studied in detail
in~\autoref{sec:basic-spec-monads}.

%
As explained in \autoref{sec:predicateTransformers},
%
a 
powerful way to construct specification monads is to
apply monad transformers to existing specification monads, e.g.,
applying $\ExcT\,\M\,A = \M\,(A + E)$ to
$\W^{\Pure}$ we get
\[\W^{\Exc}A \quad = \quad \ExcT\,\W^{\Pure}\,A \quad = \quad ((A + E) \to \prop) \to \prop \quad \cong\quad (A \to \prop) \to (E \to \prop) \to \prop \]
$\W^{\Exc}$ is a natural specification monad for programs that can throw exceptions,
transporting a normal postcondition in $A \to \prop$ and an exceptional
postcondition in $E \to \prop$ to a precondition in $\prop$.
%
%
%
Further specification monads using this idea will be introduced along
with the examples in~\autoref{sec:EffObsExamples}.
%

\subsection{Effect Observations}
\label{sec:observations}

Now that we have a presentation of specifications as elements of a
monad, we can relate computational monads to such specifications.
Since an object relating computations to specifications provides a
 particular insight to the potential effects of the computation, they
have been called {\em effect observations}~\cite{Katsumata14}.
As explained in \autoref{sec:intro}, a computational monad can have effect
observations into multiple specification monads, or multiple 
effect observations into a single specification monad.
%
%
Using the exceptions computational monad $\Exc$ as running example,
we argue that {\em monad morphisms} provide a natural notion of
effect observation in our setting, and we provide example monad
morphisms supporting this claim.
%
%
%
Further examples are explored in \autoref{sec:EffObsExamples}.

\paragraph{Effect observations are monad morphisms}
As explained in \autoref{sec:jungle}, computations throwing
exceptions can be modeled by monadic expressions $m : \Exc\,A = A + E$.
A natural way to specify $m$ is to consider the specification monad 
$\W^{\Exc}A = ((A + E) \to \prop) \to \prop$ and to map $m$ to the predicate
transformer $\theta^{\Exc}(m) \eqdef \abs{}{p}{p\,m} : \W^{\Exc}A$, 
applying the postcondition $p$ to the computation $m$.
The mapping $\theta^{\Exc} : \Exc \to \W^{\Exc}$ relating the computational
monad $\Exc$ and the specification monad $\W^{\Exc}$ is parametric in the return
type $A$, and it verifies two important properties with respect to the monadic
structures of $\Exc$ and $W^{\Exc}$.
First, a returned value is specified by itself:
\[ \theta^{\Exc}(\ret^{\Exc}\,v) = \theta^{\Exc}(\inl\,v) =
  \abs{}{p}{p\,(\inl\,v)} = \ret^{\W^{\Exc}}\,v\]
and second, $\theta$ preserves the sequencing
of computations:
\begin{align*}
\theta^{\Exc}(\bind^{\Exc}~(\inl\,v)~f) &= \theta^{\Exc}(f v) = \bind^{\W^{\Exc}}(\ret^{\W^{\Exc}}v)~(\theta^{\Exc}{\circ} f) = \bind^{\W^{\Exc}}~\theta^{\Exc}(\inl\, v)~(\theta^{\Exc} {\circ} f)\\
\theta^{\Exc}(\bind^{\Exc}~(\inr\,e)~f) &= \theta^{\Exc}(\inr\,e) = \bind^{\W^{\Exc}}~\theta^{\Exc}(\inr\,e)~(\theta^{\Exc}\circ f)
\end{align*}
These properties together prove that $\theta^{\Exc}$ is a monad
morphism. More importantly, they allow us to compute
specifications from computations \emph{compositionally}, \EG the
specification of $\bind$ can be computed from the specifications of
its arguments.
\ifsooner\km{How to illustrate this?}\ch{example?}\fi
This leads us to the following definition:

\begin{definition}[Effect observation]
 An \emph{effect observation} $\theta$ is a monad morphism from a computational
monad $\M$ to a specification monad $\W$. More explicitly, it is a
family of maps ${\theta_A : \M\,A \longrightarrow \W\,A}$, natural in $A$ and
such that for
any $v : A$, $m:\M\,A$ and $f : A \to \M\,B$ the following equations hold:
\begin{align*} \theta_{A}(\ret^{\M}\,v) &= \ret^{\W}\,v
&\theta_{B}(\bind^{\M}\,m\,f)
= \bind^{\W}\,(\theta_{A}\,m)\,(\theta_{B} \circ f) \end{align*}
\end{definition}


\paragraph{Specification monads are not canonical}
When writing programs using the exception monad, we may want to write
pure sub-programs that actually do not raise exceptions.
In order to make sure that these sub-programs are pure, we could use
the previous specification monad and restrict ourselves to
postconditions that map exceptions to false ($\bot$): hence raising an
exception would have an unsatisfiable precondition. However, as
outlined in \autoref{sec:intro}, a simpler solution is possible.
Taking as specification monad $\W^{\Pure}$, we can define the following effect
observation $\theta^{\bot} : \Exc \to \W^{\Pure}$ by
\begin{align*}
  \theta^{\bot}(\inl\,v)&= \abs{}{p}{p\,v}
  & \theta^{\bot}(\inr\,e) &= \abs{}{p}{\bot}
\end{align*}
%
This effect observation\ifsooner\ch{One would still need to check the monad
  morphism laws}\fi{} gives a {\em total correctness} interpretation to
exceptions, which prevents them from being raised at all.
As such, we have effect observations from $\Exc$ to both $\W^{\Exc}$ and $\W^{\Pure}$.

\paragraph{Effect observations are not canonical}
Looking closely at the effect observation $\theta^{\bot}$, it is clear that we
made a rather arbitrary choice when mapping every exception $\inr\,e$ to $\bot$.
Mapping $\inr\,e$ to true ($\top$) instead also gives us an effect observation, 
$\theta^{\top}: \Exc \to \W^{\Pure}$.
This effect observation assigns a trivial precondition to the $\throw$ operation,
providing a \emph{partial correctness} interpretation: given a program $m :
\Exc\,A$ and a postcondition $p:A \to \prop$, if $\theta^{\top}(m)(p)$ is
satisfiable and $m$ evaluates to $\inl\,v$ then $p\,v$ holds; but $m$ may
also raise any exception instead.
Thus, $\theta^{\bot}, \theta^{\top} :
\Exc \to \W^{\Pure}$ are two natural effect observations 
 into the {\em same} specification monad.
Even more generally, we can vary the choice for each exception; in fact,
effect observations $\theta: \Exc \to \W^{\Pure}$ are in one-to-one
correspondence with maps
$E \to \prop$~(see \autoref{sec:monad-algebras} for a general account of this correspondence).

\subsection{Examples of Effect Observations}
\label{sec:EffObsExamples}


When specifying and verifying monadic programs, there is generally a
large variety of options regarding both the specification monads and
the effect observations.
We will now revisit more computational monads from
\autoref{sec:jungle}, and present various natural effect observations for them.
%

%
\paragraph{Monad transformers}
Even though there is, in general, no canonical effect observation for a
computational monad, for the case of a monad $\mathcal{T}(\Id)$ (i.e.,
a monad obtained by the application of a monad transformer to the
identity monad) we can build a canonical specification monad, namely
$\mathcal{T}(\W^\Pure)$, and a canonical effect observation into it.
The effect observation is obtained simply by lifting the $\ret^{\W^{\Pure}} :
\Id \to \W^{\Pure}$ function through the $\mathcal{T}$ transformer.
%
This is the main idea behind our reinterpretation of the
{\em DM4Free} approach~\cite{dm4free}.
For instance, for the exception monad $\Exc = \ExcT(\Id)$ and the
specification monad $\W^{\Exc} = \ExcT\,(\W^{\Pure})$,
%
%
the effect observation
$\theta^{\Exc}$ arises as simply $\theta^{\Exc} = \ExcT(\ret^{\W^{\Pure}})
= \abs{}{m\,p}{p\,m}$.
%
%
%
%
More generally, for any monad transformer
$\mathcal{T}$ (e.g. $\StT, \ExcT, \StT \circ \ExcT, \ExcT \circ \StT$)
and any specification monad $W$ (so not just $\W^\Pure$, but also \EG
any basic specification monad from \autoref{sec:basic-spec-monads})
we have a monad morphism
\[\theta^{\mathcal{T}} \quad:\quad \mathcal{T}(\Id) \quad
  \xrightarrow{\mathcal{T}(\ret^{\W^{\Pure}})} \quad\mathcal{T}(\W^{\Pure})\]
providing effect observations for stateful computations with exceptions, or for
computations with rollback state.
%
%
%
%
However, not all computational monads arise as a monad transformer
applied to the identity monad. The following examples illustrate the
possibilities in such cases. 

\paragraph{Nondeterminism} The computational monad $\NDet$ admits effect
observations to the specification monad $\W^{\Pure}$.
Given a nondeterministic computation $m : \NDet\,A$
represented as a finite set of possible outcomes, and a
postcondition $post : A \to \prop$, we obtain a set $P$ of
propositions by applying $post$ to each element of $m$.
There are then two natural ways to interpret $P$ as 
a single proposition:
\begin{itemize}[leftmargin=*,nosep,label=$\vartriangleright$]
\item we can take the conjunction $\bigwedge_{p\in
      P} p$, which corresponds to the weakest
  precondition such that \emph{any} outcome of $m$ satisfies $\post$
  (\emph{demonic nondeterminism}); or 
\item we can take the disjunction $\bigvee_{\!p \in P} p$,
  which corresponds to the weakest precondition such that
  {\em at least one} outcome of m satisfies $\post$
  (\emph{angelic nondeterminism}).
\end{itemize}
To see that both these choices lead to monad morphisms $\theta^{\forall},
\theta^{\exists} : \NDet \to \W^{\Pure}$, 
it is enough to check that taking the
conjunction when $P = \aset{p}$ is a singleton is
equivalent to $p$, and that a conjunction of conjunctions $\bigwedge_{a\in
  A}\bigwedge_{p \in P_{\!a}} p$ is equivalent to a conjunction on
the union of the ranges $\bigwedge_{p \in \bigcup_{a\in A}P_{\!a}}p$---and similarly for disjunctions. Both conditions are straightforward to check.
%

\paragraph{Interactive Input-Output}
Let us now consider programs in the $\IO$ monad (\autoref{sec:jungle}).
%
We want to define an effect observation $\theta : \IO \to \W$, for some
specification monad $\W$ to be determined.
A first thing to note is that since no equations constrain
the $\ioread$ and $\iowrite$ operations, we can specify their
interpretations $\theta(\ioread) : \W\,I$ and
$\forall (o:O).\, \theta(\iowrite\, o) : \W\,\one$
separately from each other.

Simple effect observations for $\IO$ can already be provided using the
specification monad $\W^{\Pure}$.
%
The interpretation of the $\iowrite$ operation in this simple case needs to
provide a result in $\prop$ from an output element $o:O$ and a postcondition
$p : \one \to \prop$. 
%
Besides returning a constant proposition (like for
$\theta^{\bot}, \theta^{\top}$ in~\autoref{sec:observations}),
a reasonable interpretation is to forget
  the $\iowrite$
operation and return $p\,\ast$ (where $\ast$ is the unit value).
\iflater
\guido{new concern: isn't $f(o) \wedge p \ast$ a reasonable interp?
It's an order-and-multiplicity-free spec of writes,
if I'm not mistaken, although the $f$ must be fixed}\km{it is a valid one, but
is it reasonable ? (or maybe we should put natural)}%
\guido{not sure, I don't expect it to be very \emph{useful} (nor natural as you say),
but it sounds ``reasonable'' to me; just trying to make sure we don't make
a bogus claim}\ch{I don't see what bogus claims, we don't claim it's the only
  reasonable interpretation}\fi
%
%
For the definition of $\theta(\ioread) : (I \to \prop) \to \prop$,
we are given a postcondition $post : I \to \prop$ on the possible inputs 
and we need to build a proposition.
Two canonical solutions are to use either the universal quantification
$\forall (i:I).\, post\,i$, requiring that the postcondition is valid for
the continuation of the program for
any possible input; or the existential quantification $\exists (i:I).\, post\,i$,
meaning that there exists some input such that the
program's continuation satisfies the postcondition, 
analogously to the two modalities 
of evaluation logic~\cite{pitts1991evaluation, Moggi95}.

%
To get more interesting effect observations accounting for
inputs and outputs we can, for instance, extend $\W^{\Pure}$ with {\em
  ghost state}~\cite{OwickiG76} capturing the list of
executed IO events.\footnote{Importantly, the ghost state only appears in 
specifications and not in user programs; these still use only (stateless) $\IO$.}
We can do this by applying the state monad transformer with state type
$\mathrm{list}\,\mathcal{E}$ to $\W^{\Pure}$, obtaining the specification monad
$\W^{\HistST}\,A = (A \times \mathrm{list}\,\mathcal{E} \to \prop) \to \mathrm{list}\,\mathcal{E}
\to \prop$, 
%
for which we can provide interpretations of $\ioread$ and $\iowrite$
that also keep track of the
history of events via ghost state:
\newcommand{\EE}{\mathcal{E}}
\newcommand{\In}{\mathrm{In}}
\newcommand{\Out}{\mathrm{Out}}
\newcommand{\IOFr}{\IO^{\Fr}}
\newcommand{\IOFR}[2]{\mathrm{IOFree}~{#1}~{#2}}
\newcommand{\WFr}{\W^{\Fr}}
\begin{align*}
    \theta^{\HistST}(\iowrite\,o)\quad &= \quad \abs{}{(p:\one \times \mathrm{list}\,\EE
    {\to} \prop)\,(log : \mathrm{list}\,\EE)}{p\, \langle \ast, (\Out\,o) :: log\rangle}\quad &: \W^{\HistST}(\one)\\
    \theta^{\HistST}(\ioread)\quad &= \quad \abs{}{(p:I \times \mathrm{list}\,\EE
    {\to} \prop)\,(log : \mathrm{list}\,\EE)}{\forall i.\, p\, \langle i, (\In\,i) :: log\rangle}\quad &: \W^{\HistST}(I)
\end{align*}

%
%
This specification monad is however somewhat inconvenient in that
postconditions are written over the {\em global} history of events, instead of
over the events of the expression in question.
%
Further, one can write specifications that ``shrink'' the global history
of events, such as $\abs{}{p\,log}{p\,\langle \ast, []\rangle}$, which
\emph{no} expression satisfies.
For these reasons, we introduce an \emph{update monad}~\cite{ahman13update} 
variant of $\W^{\HistST}$, written $\W^{\Hist}$, 
which provides a more concise
way to describe the events. In particular, in $\W^{\Hist}$ the postcondition specifies  
only the events produced by the expression, while the precondition is still free to 
specify any previously-produced events, allowing us to define:
\begin{align*}
    \theta^{\Hist}(\iowrite\,o)\quad &= \quad \abs{}{(p:\one \times \mathrm{list}\,\EE
    {\to} \prop)\,(log : \mathrm{list}\,\EE)}{p\, \langle \ast, [\Out\,o]\rangle}\quad &: \W^{\Hist}(\one)\\
    \theta^{\Hist}(\ioread)\quad &= \quad \abs{}{(p:I \times \mathrm{list}\,\EE
    {\to} \prop)\,(log : \mathrm{list}\,\EE)}{\forall i.\, p\, \langle i, [\In\,i]\rangle}\quad &: \W^{\Hist}(I)
\end{align*}
While $\W^{\Hist} = \W^{\HistST}$, the two monads 
differ in their $\ret$ and $\bind$ functions. For instance, 
\begin{align*}
   \bind^{\W^{\HistST}}\, w\, f \quad & = \quad \abs{} {p~log} {w\, \big(\abs{}{\pair{x}{log'}}{f\, x\, p\, log'}\big)\, log}
   \\
   \bind^{\W^{\Hist}}\, w\, f \quad & = \quad \abs{} {p~log} {w\, \big(\abs{}{\pair{x}{log'}}{f\, x\, \big(\abs{}{(y,log'')}{p\, \langle y , log' +\!\!+ log'' \rangle}\big)\, (log +\!\!+ log')}\big)\, log}
\end{align*}
where the former overwrites the history, while the 
latter merely augments it with new events.

While $\W^{\Hist}$ provides a good way to reason about IO,
some $\IO$ programs do not depend on past interactions.
For these, we can provide an even more parsimonious 
specification monad by applying the writer transformer to $\W^{\Pure}$.
The resulting specification monad $\W^{\Fr}$ then allows us to define
\begin{align*}
  \label{eq:IOFrEffObs}
    \theta^{\Fr}(\iowrite\,o) \quad& =\quad \abs{}{(p:\one \times \mathrm{list}\,\EE
  {\to} \prop)}{p\,\langle  \ast, [\Out\,o] \rangle} \quad:\quad \W^{\Fr}(\one)  \\
    \theta^{\Fr}(\ioread) \quad &=\quad \abs{}{(p:I \times \mathrm{list}\,\EE
    {\to} \prop)}{\forall i.\, p\,\langle  i, [\In\,i] \rangle} \quad:\quad \W^{\Fr}(I) 
\end{align*}
This is in fact a special case of $\W^{\Hist}$ where the history is taken to be
$\one$~\cite{ahman13update}.

\newcommand{\IOSt}{\IO\St}
\newcommand{\WIOSt}{\W^{\IO\St}}

In fact, there is even more variety possible here, 
\EG it is straightforward to write specifications that speak only
of output events and not input events, and vice versa. 
It is also easy to extend this style of reasoning to combinations of 
IO and other effects. For instance, we can 
simultaneously reason about state changes and IO events 
by considering computations in 
$\IOSt\, A = S \to \IO(A \times S)$, resulting from applying 
the state monad transformer to $\IO$, together with the 
specification monad 
$\WIOSt\,A = (A \times S \times \mathrm{list}\,\mathcal{E} \to \prop) \to S \to \mathrm{list}\,\mathcal{E}
\to \prop$.
As such, we recover the style proposed by 
\citet{MalechaMW11}, though they also cover separation logic, which 
we leave as future work.
\iffull
\da{if we were to have room, we could spell out theta for in, out, get, put}
\fi
Being able to choose between specification monads and effect observations
allows one to keep the complexity of the specifications low when the
properties are simple, yet increase it if required.

\iffull
\ch{How does this relate to other previous work (beyond \cite{MalechaMW11}) on verifying IO programs:
\cite{FereePKOMH18, PenninckxTJ19, Chlipala15a,McBride11}?
TODO (Bob) can do at least \cite{McBride11}?
We should probably cite them anyway!?}
\fi


\ifprob
\paragraph{Discrete probabilities}
Taking the specification monad to be $\W^{\Pure}$, we can provide the following
simple effect observations:\ch{explain intuition, basically looking at the
  possibilistic (nondeterminism) interpretation of probabilities}
\begin{align*}
  \theta^{\forall}(\Sigma_{i}a_{i}x_{i}) &= \abs{}{p}{\forall i, a_{i} > 0 \Rightarrow p(x_{i})}&
  \theta^{\exists}(\Sigma_{i}a_{i}x_{i}) &= \abs{}{p}{\exists i, a_{i} > 0 \wedge p(x_{i})}
\end{align*}
In both cases we could consider adding an observation on the expectation of
convergence, for instance $\Sigma_{i} a_{i} = 1$ in order to specify that the
computations have to return with probability $1$.
We may want to specify that a property should hold with a certain probability
that is a predicate of type $p: X \to [0;1] \to \prop$.
Applying the state transformer with state $S = [0;1]$ provides an adequate
specification monad $\W^{\Prob}\,X = (X \to [0;1] \to \prop) \to [0;1] \to
\prop$.
Having the precondition to be a predicate on $[0;1]$ is crucial.
Consider the program $c = \texttt{if } \flip\,q \texttt{ then } m_{1} \texttt{ else
} m_{2}$ where $m_{1}$ and $m_{2}$ have respective specifications
$w_{1}$ and $w_{2}$.
We can give to $c$ the specification $q\cdot w_{1} + (1-q)\cdot w_{2} \eqdef
\abs{}{p\,r}{w_{1}\,p\,(qr) \wedge w_{2}\,p\,((1-q)r)}$, that is we can rescale specifications.

using a reader transformer on $\W^{\Pure}$ to obtain the specification monad $(X
\to [0;1] \to \prop) \to \prop$.
However, we run into trouble when defining an effect observation.

if we run a subcomputation with probability $p$ want to specify 
This is not enough however to specify computations whose probability of
convergence is not fixed in the effect observation.
In order to account for more flexible notions of probabilistic termination, we
use a state transformer with state $S = [0;1]$ on the specification monad,
carrying a bound on the probability of convergence.
\[\theta(\Sigma_{i}a_{i}x_{i}) = \abs{}{p\,r}{\forall i, p(x_{i}, a_{i}r)}\]

\guido{really needs an explanation}\ch{+1, couldn't get anything now}
\fi

\subsection{Recovering Dijkstra Monads}
\label{sec:DijkstraMonadsExamples}

\newcommand{\ST}[0]{\mathrm{ST}}
\newcommand{\STD}[2]{\ST~{#1}~{#2}}




We now return to Dijkstra monads, which provide a practical and
automatable verification technique in dependent type theories like
\fstar{}~\cite{mumon}, where they are a primitive notion, and
Coq, where they can be embedded via dependent types.
%
%
We explain how a Dijkstra monad can be obtained from a computational
monad, a specification monad, and an effect observation relating them.
Then we show how the obtained Dijkstra monad can be used for actual
verification.

\paragraph{Stateful computations}
Let us start with stateful computations as an illustrative example, taking
the computational monad $\St$, the specification monad
$\W^\St$, and the following effect observation:
\[
    \begin{array}{rcl}
        \theta^{\St}    & : & \St \rightarrow \W^\St \\
        \theta^{\St}(m) & = & \abs{}{post\,s_{0}}{post\,(m\,s_{0})} \\
    \end{array}
\]
We begin by defining the Dijkstra monad type constructor,
$\ST : (A:\Type) \rightarrow \W^\St~A \rightarrow \Type$.
The type $\STD{A}{w}$ contains all those computations
$c : \St~A$ that are correctly specified by $w$.
%
We say that $c$ is {\em correctly specified} by $w$ when $\theta^{\St}(c) \le w$,
that is, when $w$ is weaker than (or equal to) the specification
given from the effect observation.
%
Unfolding the definitions of $\le$ and $\theta^{\St}$, this
intuitively says that for any initial state $s_{0}$ and postcondition
$post : A \times S \to \prop$, the precondition $w\,post\,s_{0}$
computed by $w$ is enough to ensure that $c$ returns a value $v:A$ and
a final state $s_{1}$ satisfying $post\pair{v}{s_{1}}$; in other
words, $w\,post\,s_{0}$ implies the weakest precondition of $c$.
%
%


The concrete definition for the type of a Dijkstra monad can vary
according to the type theory in question.
For instance, in our Coq development, we define it (roughly) as a
dependent pair of a computation $c : \St~A$
and a proof that $c$ is correctly
specified by $w$.
In \fstar, it is instead a primitive notion.
In the rest of this section, we shall not delve into such representation details.
%


%
%
%
%

The Dijkstra monad $\ST$ is equipped with monad-like\et{why
  "monad-like"? \ch{because monad indexed monads are not plain monads}}
functions $\ret^{\ST}$ and $\bind^{\ST}$ whose definitions come from the
computational monad $\St$, while their specifications come from the
specification monad $\W^{\St}$.
%
%
The general shape for the $\ret$ and $\bind$ of the obtained Dijkstra
monad is:\footnote{If the representation of the Dijkstra monad
is dependent pairs, then the code here does not typecheck as-is and
requires some tweaking. For this section we will assume Dijkstra monads are
defined as \emph{refinements} of the computational monad, without
any explicit proof terms to carry around. In our Coq implementation
we use {\tt Program} and evars to hide such details.}
\[
    \begin{array}{rclcl}
        \ret^{\ST}  & = & \ret^\St & : & (v:A) \rightarrow \STD{A}{(\ret^{\W^{\St}}~v)} \\

        \bind^{\ST} & = & \bind^\St & : & (c:\STD{A}{w_c})
                            \rightarrow (f : (x:A) \rightarrow \STD{B}{(w_f\, x)}) 
                            \rightarrow \STD{B}{(\bind^{\W^\St}\,w_c\,w_f)}
    \end{array}
\]
which, after unfolding the state-specific definitions becomes:
%
\[
  \begin{array}{rclcl}
    \ret^{\ST}  & = & \ret^\St
    & :
    & (v:A) \rightarrow \STD{A}{(\abs{}{post\,s_0}{post~\pair{v}{s_{0}}})} \\
    \bind^{\ST} & = & \bind^\St
    & :
    & (c:\STD{A}{w_c}) \rightarrow (f : (x:A) \rightarrow \STD{B}{(w_f\, x)}) \\
                & & &
    & \rightarrow \STD{B}{(\abs{}{p\,s_0}{w_c~(\abs{}{\pair{x}{s_{1}}}{w_f~x~p~s_1})~s_0})} \\
  \end{array}
\]
%
%

The operations
of the computational
monad are also reflected into the Dijkstra monad, with 
their specifications are computed by $\theta^{\St}$. 
Given $\ii{op}^\St : (x_1:A_1) \rightarrow \cdots \rightarrow (x_n:A_n) \rightarrow \St~B$, 
we can define
%
\[
    \begin{array}{rclcl}
        \ii{op}^{\ST}  & = & \ii{op}^\St & : & (x_1:A_1) \rightarrow \cdots \rightarrow (x_n:A_n) \rightarrow \STD{B}{(\theta^{\St}(\ii{op}^{\St}~x_1~\ldots~x_n))} \\
    \end{array}
\]
%
Concretely, for state, we get the following two operations for the Dijkstra monad $\ST$:
\begin{align*}
  \stget \quad &: \quad  \STD{S}{(\abs{}{p\,s_{0}}{p\,\pair{s_{0}}{s_{0}}})},
  & \stput{}  \quad &: \quad (s:S) \to \STD{\one}{(\abs{}{p\,s_{0}}{p\,\pair{\ast}{s}})}.
\end{align*}
%

%

Given this refined version of the state monad, computing
specifications of (non-recursive) programs becomes simply a matter of
doing {\em type inference} to compositionally lift the program to a
specification and then unfolding the specification by (type-level) computation.
For instance, given
$
    \stmod\,(f : S \rightarrow S) = \bind^{\ST}\,\stget\,(\abs{}{x}{\stput (f x)}),
$
both \fstar{} and Coq can infer the type
\begin{align*}
  \STD{\one}{(\bind^{W^\St}\,(\abs{}{p\,s_{0}}{p\,\pair{s_{0}}{s_{0}}})~
  (\abs{}{s\,p\,s_{0}}{p\,\pair{\ast}{f s}}))} \quad
  &=\quad  \STD{\one}{(\abs{}{p\,s_0}{p\,{\pair{\ast}{f\, s_0}}})}
\end{align*}
which precisely describes the behavior of $\stmod$\iffull, both in terms
of the returned value and of its effect on the state\fi.
Program verification then amounts to proving 
that, given a programmer-provided type-annotation $\STD \one w$ for 
$\stmod f$, the specification $w$ is weaker than the inferred specification.


\km{From discussions with Nikita Zuslin, it seems that he had difficulties to
  understand how Dijkstra monads actually do the verification work, that is
  a program c annotated by a Dijkstra monad type D a w0 is ``verified'' by
  computing the wp wc of c and then proving w0 <= wc}
\ch{This could be a place to explain this, for instance by explaining what
  happens if the user were to annotate $\stmod$ with a weaker spec.}
\ch{And if this is too silly an example for this, let's do it for pyths below.}
\da{I added a sentence to $\stmod$ about this.}

\newcommand{\NDd}{\mathrm{ND_\maltese}}
\newcommand{\NDD}[2]{\NDd\,{#1}\,{#2}}

\paragraph{Demonic nondeterminism} The previous construction is 
independent from how the computational monad,
the specification monad, and the effect observation were obtained. The exact same approach can be followed for the $\NDet$ monad coupled with any
of its effect observations. We use the demonic one here, for which
the $\choice$ and $\fail$ actions for the Dijkstra monad have types:
\begin{align*}
    \choice^{\NDd} \quad &: \quad \NDD{\bool}{(\abs{}{p}{p~\true \wedge p~\false})} \qquad\qquad
    \fail^{\NDd} \!\!\!&: \quad \NDD{\zero}{(\abs{}{p}{\top})}
\end{align*}
%
With this, we can define and verify \fstar{} (or Coq) functions like the following:
\begin{lstlisting}
  let rec pickl (l:list 'a) : NDD 'a \,(fun p -> forall x. elem x l ==> p x) =
    match l with | [] -> fail () | x::xs -> if pick () then x else pickl xs
  let guard (b:bool) : NDD unit (fun p -> b ==> p ()) = if b then () else fail ()
\end{lstlisting}
The \ls{pickl} function nondeterministically chooses an element from a list,
guaranteeing in its specification that the chosen value belongs to it. The
\ls{guard} function checks that a given boolean condition holds, failing otherwise.
The specification of \ls{guard b} ensures that \ls{b} is true
in the continuation.\ch{Do we really need to show and explain the specs of pickl
  and guard? Isn't it more impressive to just say that they are inferred?}
Using these two functions, we can write and verify concise
nondeterministic programs, such as the one below that computes Pythagorean triples.
The specification simply says that every result (if any!)
is a Pythagorean triple, while in the implementation we have
some concrete bounds for the search:
\begin{lstlisting}
let pyths () : NDD (int & int & int) (fun p -> forall x y z. x*x + y*y = z*z ==> p (x,y,z)) =
  let l = [1;2;3;4;5;6;7;8;9;10] in let (x,y,z) = (pickl l, pickl l, pickl l) in guard (x*x + y*y = z*z); (x,y,z)
\end{lstlisting}

\paragraph{Input-Output} We illustrate 
Dijkstra monads for multiple effect observations from $\IO{}$.
%
First, we consider the context-free interpretation
${\theta^{\Fr} : \IO \to \W^{\Fr}}$,
for which $\IO$ operations have the interface:
\begin{align*}
  \ioread^{\IOFr}  & :\;\;   \IOFR{I}{(\abs{}{p}{\forall (i:I).\, p\,\langle i, [\In~i]\rangle})}\\[-1.5mm]
  \iowrite^{\IOFr} & :\;\; (o:O) \rightarrow
                          \IOFR{\one}{(\abs{}{p}{p\,\langle \ast, [\Out~o]\rangle})}
\end{align*}
We can define and specify a program that duplicates its input (assuming an
implicit coercion $ I <: O$):
\begin{lstlisting}
let duplicate () : IOFree unit (fun p -> forall x. p ((), [In x; Out x; Out x])) = let x=input() in output x; output x
\end{lstlisting}

However, with this specification monad, we cannot reason
about the history of \textit{previous} IO events.
%
\newcommand{\WHist}{\W^\Hist}
To overcome this issue, we can switch the specification
monad to $\WHist$ and obtain
%
%
\newcommand{\IOHist}{\IO^{\Hist}}
\newcommand{\IOHIST}[2]{\mathrm{IOHist}\,{#1}\,{#2}}
\begin{align*}
    \ioread^{\IOHist} &:\;\; \IOHIST{I}{(\abs{}{p\,h}{\forall i.\, p\,\langle i, [\In~i] \rangle})} \\[-1mm]
    \iowrite^{\IOHist} &:\;\; (o:O) \rightarrow \IOHIST{\one}{(\abs{}{p\,h}{p\,\langle \ast, [\Out~o] \rangle})}
\end{align*}
%
The computational part of this Dijkstra monad fully coincides with
that of $\IOFr$, but the specifications are much richer.
For instance, we can define the following computation:
\[
    \begin{array}{rclcl}
        \texttt{mustHaveOccurred} & = & \abs{}{\_}{\ret^{\IOHist}\,\ast} &:& (o:O) \rightarrow \IOHIST{\one}
                                            {(\abs{}{p\,h}{\Out~o \in h \wedge p\, \langle \ast, []\rangle})} \\
    \end{array}
\]
which has no computational effect, yet requires that a given value $o$ was already been outputted before
it is called.
%
This is \emph{weakening}
the specification of $\ret^{\IOHist}~\ast$
(namely, $\ret^{\WHist}~\ast = \abs{}{p\,h}{p\,\langle \ast, [] \rangle}$)
to have a stronger
precondition.
%
By having this amount of access to the history, one can verify that certain
invariants are respected.
For instance, the following program will verify successfully:
\begin{lstlisting}
let print_increasing (i:int) : IOHist unit (fun p h -> forall h'. p ((), h')) =
    output i; (* pure computation *) mustHaveOccurred i; (* another pure computation *) output (i+1)
\end{lstlisting}
The program has a ``trivial'' specification: it does not guarantee
anything about the trace of events, nor does it put restrictions on the previous log.
However, internally, the call to \texttt{mustHaveOccurred} has a
precondition that \ls{i} was already output, which can be proven from
the postcondition of \ls{output i}.
If this \ls{output} is removed, the program will (rightfully) fail
to verify.

\newcommand{\IOST}[2]{\mathrm{IOSt}\,{#1}\,{#2}}

Finally, when considering the specification monad $\WIOSt$,
we have both state and IO operations:
\begin{align*}
    \ioread^{\IOSt} &: \IOST{I}{(\abs{}{p\, s\,h}{\forall i.\, p\,\langle i, s, [\In~i] \rangle})} 
    &
    \stget^{\IOSt} &: \IOST{S}{(\abs{}{p\, s\,h}{p\,\langle s, s, [] \rangle})} \\[-1mm]
    \iowrite^{\IOSt} &: (o:O) \rightarrow \IOST{\one}{(\abs{}{p\,s\,h}{p\,\langle \ast, s, [\Out~o] \rangle})}
    &
    \stput^{\IOSt} &: (s:S) \rightarrow \IOST{\one}{(\abs{}{p\,\_\,h}{p\,\langle \ast, s, [] \rangle})}
\end{align*}
where $(\ioread^{\IOSt},\iowrite^{\IOSt})$ keep state unchanged, and $(\stget^{\IOSt},\stput^{\IOSt})$
do not perform any IO. With this, we can write and verify programs that combine state and
IO in non-trivial ways, \EG
\begin{lstlisting}
let do_io_then_rollback_state () : IOST unit (fun s h p -> forall i . p (() , s , [In i; Out (s+i+1)])) =
  let x = get () in let y = input () in put (x+y); (* pure computation *) let z = get () in output (z+1); put x
\end{lstlisting}
The program mutates the state in order to compute output from input, possibly 
interleaved with pure computations, but eventually
rolls it back to its initial value, as mandated by its specification.



\paragraph{Effect polymorphic functions}
Even though the operations $\ret$ and $\bind$ provided by a (strong) monad
can seem somewhat restrictive at first,\ch{no they don't; no clue what restriction
  you're talking about; writing useful code with {\bf only} return and bind?}\da{a generic list 
  map function is not useful?}
they still allow us to write functions that
are generic in the underlying computational monad. One example is the following \ls{mapW}
function on lists, generic in the monad \ls{W} (similar to the
\ls{mapM} function in Haskell):
\begin{lstlisting}
let rec mapW (l : list 'a) (f : 'a -> W 'b) : W (list 'b) =
  match l with [] -> ret [] | x :: xs -> bind (f x) (fun y -> bind (mapW xs f) (fun ys -> ret (y :: ys)))
\end{lstlisting}
When working with Dijkstra monads, we can use the \ls{mapW} function 
as a generic specification for the same computation when expressed 
using an arbitrary Dijkstra monad
\ls{D} indexed by \ls{W}:\footnote{These \iffull effect polymorphism \else last \fi
   examples are written in
  \fstar{} syntax, but only implemented in Coq, since Dijkstra monads
  are not first class in \fstar{}.}
\begin{lstlisting}
let rec mapD (l : list 'a) (w : 'a -> W 'b) (f : (a:'a) -> D 'b (w a)) : D (list 'b) (mapW l w) =
  match l with [] -> ret [] | x :: xs -> let y = f x in let ys = mapD xs w f in y :: ys
\end{lstlisting}
where \ls{mapD} takes the list \ls{l}, the specification for what is
to happen to each element of the list, \ls{w}, and an implementation
of that specification, \ls{f}. It builds an effectful computation that produces
a list, specified by the extension of the element-wise specification
\ls{w} to the whole list by \ls{mapW}.
%

Analogously, we can implement a generic iterator combinator
provided we have an invariant \ls{w : W unit} for the loop
\ls{body : nat -> D unit w} such that the invariant satisfies 
\ls{bind w (fun () -> w) <= w}\bob{is this the right way round?}\da{if weakening 
is contravariant like in \autoref{sec:dijkstraMonads}, then no 
(seems to have been since flipped around)}\da{Kenji has since flipped the 
direction of orders in the paper (to also match the Coq development)}:
\begin{lstlisting}
let rec for_in (range : list nat) (body : nat -> D unit w) : D unit w =
  match range with [] -> ()  | i :: range -> body i ; for_in range body
\end{lstlisting}
Here we use not only the monadic operations but also the
possibility to weaken the specification \ls{bind w (fun () -> w)}
computed from the second branch of the \ls{match} to the specification
\ls{w} by assumption.

\ch{What's not at clear here is that these 2 examples are verified
  only using the monad laws. In general, showing all these examples
  without explaining at least a bit how they are verified is a bit of a waste.}
\ch{So better to keep a single example here, but explain it better?}

In all the examples in this section, we used Dijkstra monads obtained
via the same general recipe (see \autoref{sec:dijkstraMonads} for details) from the same kinds
of ingredients: a computational monad, a specification monad, and an
effect observation from the former to the latter.
This enables a uniform treatment of effects for verification, and opens
the door for verifying rich properties of effectful programs.

\section{Defining Specification Monads}
\label{sec:metalanguage}

To enable various verification styles, in \autoref{sec:overview} we
introduced various specification monads arising from the application
of monad transformers to the monad of predicate transformers $\W^{\Pure}$.
%
%
In this section, we start by observing that $\W^{\Pure}$ is not the only 
natural basic specification monad on which to stack monad 
transformers (\autoref{sec:basic-spec-monads}).
We then present our \emph{specification metalanguage} \SM{},
as a means for defining correct-by-construction monad transformers
(\autoref{sec:monad-transformers}).
\SM{} is a more principled variant of the \DM{} language of
\citet{dm4free}, and similarly to \DM{}, we give \SM{} a semantics
based on logical relations.
Observing that not all \SM{} terms give rise to monad transformers
(\autoref{sec:continuationMonadTrans}), we extract conditions under
which we are guaranteed to obtain monad transformers, providing an
explanation for the somewhat artificial syntactic restrictions in \DM{}.
%
Finally, we also discuss a principled way to derive effect
observations into $\W^{\Pure}$ and $\W^{\St}$ from  
algebras of computational monads (\autoref{sec:monad-algebras}).


\subsection{Basic Specification Monads}
\label{sec:basic-spec-monads}

We consider several basic specification monads,
whose relationship is summarized by \autoref{fig:specmonads}.
%


\paragraph{Predicate monad}
Arguably the simplest way to specify a computation is to provide a
postcondition on its outcomes.
This can be done by considering the specification monad 
$ \pred\,A \eqdef A \to \prop $ (the covariant powerset monad) with order $p_{1} \leq^{\pred} p_{2} \iff 
\forall (a:A).\, p_{1}\,a \Rightarrow p_{2}\,a$.
To specify the behavior of returning values, we can
always map a value $v:A$ to the singleton predicate 
$\ret^{\pred}\,v \eqdef \abs {} y {(y = v)} : \pred\,A$. 
And given a predicate $p : \pred\,A$ and a
function $f : A \to \pred\,B$, the predicate on $B$ defined by $\bind^{\pred}\,p\,f\eqdef\abs{}b{\exists a.\, p\,a \wedge f\,a\,b}$ 
specifies the behavior of sequencing two computations, where the
first computation produces a value $a$ satisfying $p$ and, under this assumption, the second computation produces a value satisfying $f\,a$.
While a specification $p:\pred\,A$ provides information on
the outcome of the computation, it cannot require preconditions, so
computations need to be defined independently of any logical context.
To give total correctness specifications to computations with
non-trivial preconditions, for instance
specifying that the division function $\mathrm{div}\,x\,y$ requires $y$ to be
non-zero, we need more expressive specification monads.
%
%

%
\paragraph{Pre-/postcondition monad}
One more expressive specification monad is\da{was ''we study'';
  Is this a novel contribution of ours?}\km{From discussions last
  year with P.E.Dagand I think it's already known to people using characteristic
formula, but I don't know any reference}\ch{To me this seems closely related
  to Ynot~\cite{ynot-icfp08} and HTT~\cite{nmb08htt}. TODO: I would be very
  curious whether we can recover a version of Ynot by applying the State +
  Exceptions (+ Partiality) to this basic specification monad.}\ch{I am in
  fact very curious what happens when one applies monad transformers
  to this base specification monad; can one always re-factor things
  again in terms of pre- and postconditions? It would be interesting
  to show at least some examples, like State + Exceptions, and maybe
  also prove a general result about this.}\km{I never managed to get such a
  factorization but I would be extremely interested in having one}
the monad of pre- and postconditions 
${\prepost\,A \eqdef \prop \times (A \to \prop)}$, bundling
a precondition together with a postcondition.
%
Here the behavior of returning a value $v:A$ is specified by requiring
a trivial precondition and  ensuring as above a singleton postcondition:
$\ret^{\prepost}\,v \eqdef \pair{\top}{\abs {} a {a = v}} : \prepost\,A$.
And, given $p = \pair{pre}{post} : \prepost\,A$ and a function $f = \abs{}{a}{\pair{pre'~a}{post'~a}} : A \to
\prepost\,B$, the sequential composition of two computations 
is naturally specified by defining
\[\bind^{\prepost}\,p\,f\eqdef
  \pair{~\left(pre \wedge \forall a.\, post\, a \implies pre'~a\right)~}
  {~\abs{}b{\exists a.\, post\,a \wedge
    post'~a\,b}~} : \prepost\,B\]
The resulting precondition ensures that the precondition of the first computation
holds and, assuming the postcondition of the first computation,
the precondition of the second computation also holds.
The resulting postcondition is then simply the conjunction of the
postconditions of the two computations.
The order on $\prepost{}$ naturally combines the 
pointwise forward implication order on postconditions with 
the backward implication order on preconditions. 

We formally show that this specification monad is more expressive
than the predicate monad above:
Any predicate $p : \pred\,A$ can be coerced to $(\top, p) :
\prepost\,A$, and in the other direction, any pair $(pre, post) :
\prepost\,A$ can be approximated by the predicate $post$, 
giving rise to a Galois connection, as illustrated in
\autoref{fig:specmonads}.\ch{Could return to the div example to explain why
  this is an overapproximation, but it would take more space.}
While the monad $\prepost{}$ is intuitive for humans, generating
efficient verification conditions is generally easier for predicate
transformers~\cite{Leino05}.
%
\ifsooner
\km{A word on the fact that it is harder to prove monad laws in
  proof-assistants, eg. Coq ?}\ch{Aren't these anyway once and for all
  and already proved? It doesn't anyway seem like a relevant detail.}
\fi

\paragraph{Forward predicate transformer monad}
The predicate monad $\pred$ can be extended in an alternative way.
Instead of fixing a precondition as in $\prepost{}$, a specification
can be a function from preconditions to postconditions, for instance
producing the strongest postcondition of computation for any
precondition $\pre {:} \prop$ given as argument.
%
%
%
Intuitively, such a forward predicate transformer on $A$ has type
$\prop \to (A \to \prop)$.
However, to obtain a monad 
(\IE~satisfying the expected laws),
we have to consider the smaller type
$\strpost\,A \eqdef (\pre{:}\prop) \to (A \to \prop_{/\pre})$ of
predicate transformers \iffull that are \fi
monotonic with respect to $\pre$, where
$\prop_{/\pre}$ is the subtype of propositions implying $\pre$.
Returning a value $v : A$ is specified by the predicate transformer  
$\ret^{\strpost}\,v \eqdef \abs{}{\pre\,a}{\pre \wedge a = v}$, and
the sequential composition of two computations 
is specified as the predicate transformer $\bind^{\strpost}\,m\,f \eqdef \abs{}{\pre\,b}{\exists a.\,
  f\,a\,(m\,\pre\,a)\,b}$, for $m : \strpost\,A$ and $f : A \to \strpost\,B$.
%

%
\paragraph{Backward predicate transformer monad}
As explained in~\autoref{sec:predicateTransformers}, 
backward predicate transformers
can be described using the
continuation monad with propositions $\prop$ as the answer type, namely, 
${\Cprop\,A \eqdef (A \to \prop) \to \prop}$.
Elements $w : \Cprop\,A$ are predicate transformers mapping a
postcondition $\post : A \to \prop$ to a precondition
$w\,\post : \prop$, for instance the weakest precondition of the
computation.
Pointwise implication is a natural order on $\Cprop\,A$:
\[ w_{1} \leq w_{2} : \Cprop\,A \quad \Leftrightarrow \quad
  \forall (p : A \to \prop).\,  w_{2}\, p \Rightarrow w_{1}\,p  \]
%
However, $\Cprop$ is not an ordered monad with respect to this
order because its $\bind$ is not monotonic.
In order to obtain an ordered monad, we restrict our attention
to the submonad $\W^\Pure$ of $\Cprop$ containing the {\em monotonic}
predicate transformers, that is those $w : \Cprop\,A$ such that
\[ \forall (p_{1}\,p_{2} : A \to \prop). \quad (\forall (a:A).\, p_{1}\,a
  \Rightarrow p_{2}\,a) \quad \Rightarrow \quad w\,p_{1} \Rightarrow w\,p_{2},\]
%
%
which is natural in verification:
we want stronger postconditions to map to
stronger preconditions.

\ch{One point that's missing is the reason one chooses wps over all
  other base specification monads in practice: its bind doesn't
  introduce any quantifiers, while all the 3 monads above used
  quantifiers for defining bind. Need to show the definition of bind
  to make this point though.}

This specification monad is more expressive than the
pre-/postcondition one above~\cite{mumon}.\ch{more authoritative references?}
Formally, a pair $(\pre, \post) :\prepost\,A$ can be mapped
to the \iffull monotonic \fi predicate transformer
\[\abs{} {(p : A \to \prop)} {\pre \wedge (\forall (a:A).\, \post\,a
    \Rightarrow p\,a)} \quad : \quad \W^\Pure\,A,\]
and vice versa, a predicate transformer $w:\W^\Pure\,A$ can
be approximated
by the pair
\[ (\quad w\,(\abs{}a\top)\quad, \quad \abs{} a {(\forall p.\, w\,p
    \Rightarrow p\,a)}\quad) \quad : \quad \prepost\,A\]
These two mappings define a Galois connection, as illustrated in \autoref{fig:specmonads}. 
Further, this Galois connection exhibits $\prepost\,A$
as the submonad of $\W^\Pure\,A$ of {\em conjunctive}
predicate transformers, \IE predicate transformers $w$
commuting with non-empty conjunctions/intersections.
\label{sec:PrePostWPs}

Finally, both $\W^\Pure$ and $\strpost$ can be embedded into an even
more expressive specification monad $\relprepost$ consisting of
relations between preconditions and postconditions satisfying a few
conditions, the full details of which can be found in our Coq
formalization.
%

%
%
\begin{figure}
\[
  \begin{tikzcd}[row sep=tiny]
    & & \W^\Pure \ar[dl,shift left=0.3em] \ar[hookrightarrow, dr, shift left=0.3em]& \\
    \pred \ar[hookrightarrow, r,shift left=0.3em] & 
    \prepost \ar[l,shift left=0.3em] \ar[hookrightarrow, ur,shift left=0.3em]
    \ar[hookrightarrow,dr,shift left=0.3em]&  &
    \relprepost \ar[ul,shift left=0.3em] \ar[dl,shift left=0.3em] \\
    & & \strpost \ar[ul,shift left=0.3em] \ar[hookrightarrow, ur,shift left=0.3em] & \\
  \end{tikzcd}
\]
\vspace{-1.5em}
\caption{Relationships between basic specification monads (each pair forms a
  Gallois connection)}
\label{fig:specmonads}
\end{figure}

\subsection{Defining Monad Transformers}
\label{sec:monad-transformers}

We use \emph{monad transformers}~\cite{LiangHJ95} to construct more
complex specification monads from the basic ones above (and in some
cases also to derive effect observations~\autoref{sec:EffObsExamples}).
However, defining a monad transformer and proving that
it satisfies all the expected laws requires significant effort.
In this section, we introduce a \emph{Specification Metalanguage},
\SM{}, and a translation from \SM{} to correct-by-construction monad
transformers in a base dependent type theory $\bL$ (where $\bL$ is a
parameter of \SM).
More precisely, our translation takes as input a monad in \SM{}
subject to two extra conditions, \emph{covariance} and
\emph{linearity}, and produces a correct monad transformer
in $\bL$.
\newcommand{\SMRd}{\mathbb{Rd}}
\newcommand{\SMWr}{\mathbb{Wr}}
\newcommand{\SMExc}{\mathbb{Exc}}
\newcommand{\SMSt}{\mathbb{St}}
\newcommand{\MonSMSt}{\mathbb{MonSt}}
\newcommand{\SMCont}[1]{\mathbb{Cont}_{#1}}
\newcommand{\Ans}[0]{\mathcal{A}\mathrm{ns}} 
\SM{} is an expressive language in which many different monads can be
defined in a natural way, for example
%
%
%
\emph{reader} $\SMRd(X :\Type) \eqdef \mathcal{I} \to \bM\,X$; 
\emph{writer} $\SMWr(X:\Type) \eqdef \bM (X \times \mathcal{O})$; 
\emph{exceptions} $\SMExc(X:\Type) \eqdef \bM (X + \mathcal{E})$; 
\emph{state} $\SMSt(X:\Type) \eqdef \mathcal{S} \to \bM (X \times \mathcal{S})$;
\emph{monotonic state}
  $\MonSMSt(X)\eqdef (s_{0} : \mathcal{S}) \to \bM (X \times (s_{1} :
  \mathcal{S}) \times s_{0} \preccurlyeq s_{1})$, where $\preccurlyeq$ is some
  preorder on states $\mathcal{S}$; and
\emph{continuations} $\SMCont{\Ans}(X)
  \eqdef (X \to \bM\,\Ans) \to \bM\,\Ans$.
The symbol $\bM$ stands for an arbitrary base monad, and
the \emph{covariance} condition states that it appears only in the codomain of
arrows. The more involved \emph{linearity} condition concerns the $\bind$
of these monads.
With the exception of continuations (see \autoref{sec:continuationMonadTrans}), all these \SM{} monads satisfy these
extra conditions %
and thus lead to proper %
monad transformers.
%

%
\begin{definition}[Monad transformer] A monad transformer~\cite{LiangHJ95} is given by 
\begin{itemize}[nosep,label=$\vartriangleright$,leftmargin=*]
\item a function $\mathcal{T}$ mapping monads $\M$ to monads $\mathcal{T}\M$, 
\item equipped with a monad morphism $\lift_{\M} : \M \to \mathcal{T}\M$, 
\item assigning functorially to each monad morphism $\theta : \M_1 {\to} \M_2$
  a monad morphism ${\mathcal{T}\theta : \mathcal{T}\M_1{\to} \mathcal{T}\M_2}$,  
\item and such that the $\lift_{\M}$ is natural in $\M$, that is for any monad
  morphism $\theta : \M_{1} \to \M_{2}$,
  \[\mathcal{T}\theta \circ \lift_{\M_{1}} = \lift_{M_{2}} \circ \theta \]
\item moreover, they need to preserve the order structure present
on the (ordered) monads as well as the monotonicity of morphisms, and the lifts themselves should also be monotonic, 
\end{itemize}
i.e., $(\mathcal{T},\lift)$ is a pointed endofunctor on the
category of (ordered) monads~\cite{LuthG02}. 
%
\end{definition}
%
%
%
%
\begin{figure}
  \begin{mathpar}
    C ::= \bM A \alt C_1 \times C_2 \alt (x:A) \to C \alt C_1 \to C_2 \quad A
    \in \ii{Type}_{\mathcal{L}}
    \vspace{-0.2cm}
    \\
    t ::= \ret \alt \bind \alt \pair \tone \ttwo \alt \proj i\; t \alt x \alt
    \abs {\diamond} x t \alt \app \tone \ttwo \alt \abs {} x t \alt \app t u \quad u \in \ii{Term}_{\mathcal{L}}
  \end{mathpar}
  \vspace{-2em}
  \caption{Syntax of \SM{}}
  \label{fig:DMDef}
  \vspace{-1em}
\end{figure}
\begin{figure}
  \begin{mathpar}
    \inferrule{ }{A \vdD \ret~:~ A \to \bM A}
    \and
    \inferrule{ }{A,B \vdD \bind~:~\bM A \to (A \to \bM B) \to \bM B}
    \and
    \inferrule{\Gamma, x:C_{1} \vdD t : C_{2}}{\Gamma \vdD \abs \diamond x t : C_{1} \to C_{2}}
  \end{mathpar}
  \vspace{-1.2em}
  \caption{Selected typing rules for \SM{}}
  \label{fig:DMTyping}
\end{figure}
\begin{figure}
  \begin{mathpar}
    \elab \M {\bM A} = \M A \and \elab \M {C_{1} \times C_{2}} = \elab \M
    {C_{1}} \times \elab \M {C_{2}} \and \elab \M {(x:A) \to C} = (x:A) \to
    \elab \M C 
    \vspace{-0.15cm}
    \\
    \elab \M {C_{1} \to C_{2}} = (f : \elab \M {C_{1}} \to \elab
    \M {C_{2}}) \times (\forall (m_{1} \leq^{C_{1}} m'_{1}).\, f\,m_{1} \leq^{C_{2}} f\,m'_{1})
    \vspace{-0.15cm}
    \\
    m \leq^{\bM\,A} m' \eqdef m \leq^{\M}_{A} m'
    \and
    \pair{m_{1}}{m_{2}} \leq^{C_{1} \times C_{2}} \pair{m'_{1}}{m'_{2}} \eqdef
    m_{1} \leq^{C_{1}} m'_{1} \wedge m_{2} \leq^{C_{2}} m'_{2}
    \vspace{-0.15cm}
    \\
    f \leq^{(x:A) \to C[x]} f' \eqdef \forall (x :A).\, f\,x \leq^{C[x]} f'\,x
    \and
    f \leq^{C_{1} \to C_{2}} f' \eqdef \forall (m_{1}\leq^{C_{1}} m'_{1}) 
    .\, f\,m_{1} \leq^{C_{2}} f'\,m'_{1}
  \end{mathpar}
  \vspace{-2em}
  \caption{Elaboration from \SM{} to $\bL{}$}
  \label{fig:ElabDM}
\end{figure}
\paragraph{Building monad transformers}
The design of \SM{}, whose syntax is presented in \autoref{fig:DMDef},
has been informed by the goal of defining monad transformers.
First, since we want a mapping from monads to monads, we introduce
the type constructor $\bM$ standing for an arbitrary base monad, as well
as terms $\ret$ and $\bind$.
Second, in order to describe monads internally to \SM{}, we add function types
$(x:A) \to C[x]$ and $C_{1} \to C_{2}$.
We allow dependent function types only when the domain is in $\bL$, leading to
two different type formers.
We write dependent abstractions as $\abs{}xt$, whereas we
write the non-dependent type as $\abs{\diamond}xt$. 
In \autoref{fig:DMTyping} we present the typing rules of $\bM$, $\ret$, and 
$\bind$, leaving the remaining standard \SM{} typing rules for
\ifappendix
\autoref{sec:linearTyping}.
\else
the online appendix.
\fi
To define our monad transformers, we use monads {\em internal to} \SM{}, given by
%
\begin{itemize}[leftmargin=*,nosep,label=$\vartriangleright$]
\item a type constructor $X:\Type \vdD C[X]$;
\item terms $A : \Type \vdD \ret^{C} : A \to C[A]$ and ${A,B:\Type
  \vdD \bind^{C}:(A \to C[B]) \to C[A] \to C[B]}$;
\item such that the monadic laws are derivable in the
  equational theory of \SM{}.
\end{itemize}

Now, given a monad $C$ internal to \SM{}, we want to define the corresponding
monad transformer $\mathcal{T}^C$ evaluated at a monad $\M$ in the base language $\bL$, 
essentially as the substitution of $\M$ for $\bM$.
\ch{Starting around here things get too brutal and dry for me to follow.}%
\ch{How about illustrating each of the generic constructions
  for at least one of the example SM monads?}\ch{It's getting better,
  but even better would be to actually relate the general definitions and
  the examples.}%
In order to make this statement precise,
%
\ifsooner
\ch{Maybe I'm wrong, but apparently missing from this list: partiality
  (option)? (special case of Exceptions?) nondeterminism (in case
  I wanted to reason about sets of outcomes, not just angelic and demonic)?
  IO?} \da{Correct me if I'm wrong but at least the lists/bags-based 
  nondeterminism and IO would suffer from the corresponding monad-transformer 
  not being defined everywhere (being a non strictly positive inductive type).}
\fi
 %
%
we define a denotation $\elab{\M}{-}^{(\gamma)}$ in $\bL$ of \SM{} types
(\autoref{fig:ElabDM}) and terms (provided in the appendix
together with the equational theory) parametrized by $\M$.
This denotation preserves the equational theory of \SM{}, provided $\bL$ has
extensional dependent products and pairs.
As such, $C$ induces the following mapping from monads to monads:
\[ \mathcal{T}^C \quad:\quad (\M, \ret, \bind) \qquad \longmapsto \qquad (\llbracket C \rrbracket_{\M}, \llbracket
  \ret^{C} \rrbracket_{\M}, \llbracket \bind^{C} \rrbracket_{\M})\]
For instance, taking $C = \SMSt$, the definition evaluates to $\mathcal{T}^\SMSt\M = X \mapsto \mathcal{S} \to M (X \times \mathcal{S})$.
To build the $\lift$ for $\mathcal{T}^C$, the key observation
is that the denotation $\llbracket C
\rrbracket_{\M}$ of an \SM{} type $C$ in $\bL$ can be endowed with an
$\M$-algebra structure $\alpha^{C}_{\M} : \M\elab{\M}{C} \to \elab{\M}{C}$\footnote{
  An \emph{$M$-algebra} is an object $X$ together with a map $\alpha : MX \longrightarrow X$, 
  which is required to respect $\retM$ and $\bindM$.}.
This $\M$-algebra structure is defined by induction on the
structure of the \SM{} type $C$, using the free algebra when $C =
\bM A$ and the pointwise defined algebra in all the other cases.
This $\M$-algebra structure allows us to then define a lifting function
from the monad $\M$ to the monad $\llbracket C \rrbracket_{\M}$
as follows:
\[\lift^{C}_{\M, X} \quad:\quad
  \M(X)
  \xrightarrow{\M(\ret^{\elab{\M}{C}})}
  \M\elab{\M}{C}(X)
  \xrightarrow{\alpha^{C}_{\M,X}}
  \elab{\M}{C}(X) = \mathcal{T}^C\M(X)
  \]
For instance, $\lift^{\SMSt}_{\M,X}\,(m:\M\,X) =
\abs{}{(s:\mathcal{S})}{\M(\abs{}{(x:X)}{\langle {} x, s \rangle})\,m} :
\mathcal{S} \to \M(X \times S)$.
The result that \SM{} type formers are automatically equipped with an
algebra structure explains why \SM{} features products, but not sums since the
latter cannot be equipped with an algebra structure in general.
This $\lift^{C}_{\M} : \M \to \elab{\M}{C}$ needs to be \emph{natural}, that is, the
following diagram should commute: 
\[
  \begin{tikzcd}[column sep = huge]
    \M\,A \ar[r, "\M(\ret^{C}_{A})", ""'{name=srcl}] \ar[d, "\M\,f"'] &\M\elab{\M}{C}\,A \ar[r,
    "\alpha^{C}_{\M,A}",""'{name=srcr}] \ar[d, "\M\elab{\M}{C}\,f"] & \elab{\M}{C}\,A \ar[d, "\elab{\M}{C}\,f"]\\
    \M\,B \ar[r, "\M(\ret^{C}_{B})"', ""{name=dstl}] &\M\elab{\M}{C}\,B \ar[r,
    "\alpha^{C}_{\M,B}"',""{name=dstr}] & \elab{\M}{C}\,A
    \ar[from=srcl, to=dstl, phantom, "\circlearrowleft"]
    \ar[from=srcr, to=dstr, phantom, "?"]
  \end{tikzcd}
\]
for any $A,B$ and $f : A \to B$.
The left square commutes automatically by the naturality of $M(\ret^{C})$. For the right
square to commute, however, 
$\elab{\M}{C}\,f = \bind^{\elab{\M}C} (\ret^{\elab{\M}C} \circ f)$ should be an $\M$-algebra homomorphism.
We can ensure it by asking that $\bind^{C}$ maps functions to $\M$-algebra
homomorphisms, a condition that can be syntactically captured by a
linearity condition in a modified type system for \SM{} equipped with a {\em stoup},
which is a distinguished variable in the context such that the term is linear
with respect to that variable \cite{EggerMS14,MunchMaccagnoni13}.
We omit this refined type system here and refer to 
\ifappendix
\autoref{sec:linearTyping}
\else
the online appendix
\fi
for the complete details.
%
%
We call this condition on the monad $(C,\ret^{C},\bind^{C})$ internal to \SM{} \emph{the linearity of $\bind^{C}$}. 
%

%
\paragraph{Action on monad morphism}
To define a monad transformer, we still need to build a
functorial action mapping monad morphism $\theta : \M_{1} \to
\M_{2}$ between monads $\M_{1}, M_{2}$ in $\bL$ to a monad
morphism $\elab {\M_{1}} C \to \elab {\M_{2}} C$.
However, the denotation of the arrow $C_{1} \to C_{2}$ does not
allow for such a functorial action since $C_1$ necessarily contains a subterm
$\bM$ in a contravariant position.\ch{Why not? Please explain.}
In order to get an action on monad morphisms, we first build a
(logical) relation between the denotations. Given $\M_{1}, \M_{2}$
monads in $\bL$ and a family of relations
$R_{A} \subset \M_{1}A \times \M_{2}A$
indexed by types $A$, we build a relation
$\elabrel {\M_{1}} {\M_{2}}{R}C \subset \elab {\M_{1}} C\times
\elab {\M_{2}} C$ as follows
\begin{align*}
  m_{1} ~ \elabrelbase{\bM A} ~ m_{2} \qquad&\eqdef\qquad m_{1}~ R_{A} ~ m_{2}\\
  (m_{1}, m'_{1}) ~ \elabrelbase{C_{1} \times C_{2}} ~ (m_{2}, m'_{2}) \qquad&\eqdef\qquad
  m_{1} ~ \elabrelbase{ C_{1} } ~ m_{2} \wedge m'_{1} ~ \elabrelbase{ C_{2}
  } ~ m'_{2}\\
  f_{1} ~ \elabrelbase{ (x:A) \to C } ~ f_{2} \qquad&\eqdef\qquad \forall (x:A).\, f_{1}\;x ~ \elabrelbase{
  C\; x } ~ f_{2}\;x\\
  f_{1} ~ \elabrelbase{ C_{1} \to C_{2} } ~ f_{2} \qquad&\eqdef\qquad \forall m_{1}\, m_{2}.\,
  m_{1} ~ \elabrelbase{ C_{1} } ~ m_{2} \Rightarrow f_{1}\; m_{1} ~ \elabrelbase{ C_{2}
  } ~ f_{2}\;m_{2}
\end{align*}
Now, when a type $C$ in \SM{} comes with the data of an internal
monad, the relational denotation $\elabrel{\M}{\W}{-}{C}$ maps not
only families of relations to families of relations, but also preserves
the following structure that we call a \emph{monadic relation}:
\begin{definition}[Monadic relation]
  \label{def:monadicRelation}
A monadic relation $\mathcal{R} : \M \leftrightarrow
  \W$ between a computational monad $\M$ and a specification
  monad $\W$, consists of:
  \begin{itemize}[nosep,label=$\vartriangleright$]
  \item a family of relations $\mathcal{R}_{A} : \M\,A
    \times \W\,A \to \prop$ indexed by type $A$
  \item such that returned values are related
    $ 
    (\ret^{\M}\,v)\,\mathcal{R}_{A}\,(\ret^{\W}\,v)$ for any
    value $v : A$
  \item and such that sequencing of related values is related
    \begin{mathpar}
      \inferrule{
        m_{1} \,\mathcal{R}_{A}\, w_{1}\\
        \forall (x:A).\,  (m_{2}\,x)\,\mathcal{R}^{B}\, (w_{2}\,x)}
      {(\bindM\;m_{1}\;m_{2})\,\mathcal{R}^{B}\, (\bindW\;w_{1}\;w_{2})}
    \end{mathpar}
  \end{itemize}
\end{definition}
The simplest example of monadic relation is the graph of a monad morphism
$\theta : \M \to \W$.
Given a monadic relation, we extend the relational translation to terms and
obtain the so-called fundamental lemma of logical relations.
\begin{theorem}[Fundamental lemma of logical relations]
  For any monads $\M_{1}, \M_{2}$ in $\bL$, monadic
  relation $\mathcal{R} : \M_{1} \leftrightarrow \M_{2}$, term $\Gamma \vdD t :
  C$ and substitutions $\gamma_{1} : \elab{\M_{1}}{\Gamma}$ and $\gamma_{2} :
  \elab{\M_{2}}{\Gamma}$, if for all $(x:C') \in \Gamma$,
  $\gamma_{1}(x)~\elabrel{\M_{1}}{\M_{2}}{\mathcal{R}}{C'}~\gamma_{2}(x)$ then $\elab{\M_{1}}{t}^{\gamma_{1}}~\elabrel{\M_{1}}{\M_{2}}{\mathcal{R}}{C}~\elab{\M_{2}}{t}^{\gamma_{2}}$.
\end{theorem}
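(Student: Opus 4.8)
# Proof Proposal

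The plan is to proceed by induction on the typing derivation $\Gamma \vdD t : C$, establishing the fundamental lemma for each syntactic form in the grammar of \SM{} (\autoref{fig:DMDef}). The statement is the standard "logical relations respect the translation" result, so the skeleton is routine; the interesting content is concentrated in the constants $\ret$ and $\bind$, which is precisely why we introduced the notion of \emph{monadic relation} in \autoref{def:monadicRelation}. I would first state explicitly that the relational denotation $\elabrel{\M_1}{\M_2}{\mathcal{R}}{-}$ on types is exactly the one defined just before the theorem, and that it is defined by induction on $C$; then I would verify the \emph{fundamental property} at each term former, threading the two substitutions $\gamma_1, \gamma_2$ through.

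First I would handle the variable case: if $t = x$ with $(x:C')\in\Gamma$, then $\elab{\M_i}{x}^{\gamma_i} = \gamma_i(x)$ and the conclusion is exactly the hypothesis on $\gamma_1,\gamma_2$. Next the structural cases. For pairing $t = \pair{t_1}{t_2}$ and projections $\proj i\,t$, one unfolds the definition of $\elabrelbase{C_1\times C_2}$ and applies the induction hypotheses componentwise. For the two kinds of abstraction, $\abs{\diamond}{x}{t}$ and $\abs{}{x}{t}$, one uses the definitions of $\elabrelbase{C_1\to C_2}$ and $\elabrelbase{(x:A)\to C}$ respectively: given related arguments (resp. an argument $x:A$ in the base theory $\bL$), extend $\gamma_1,\gamma_2$ with that data — the extension stays in the relation by assumption — and invoke the induction hypothesis on the body. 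The application cases $\app{t_1}{t_2}$ and $\app{t}{u}$ (with $u \in \ii{Term}_{\bL}$) are the elimination duals: the induction hypothesis on $t_1$ gives a related pair of functions, the induction hypothesis on $t_2$ (resp. the fact that a base-theory term $u$ denotes the \emph{same} value on both sides, being independent of $\bM$) supplies related arguments, and one concludes. The only genuine content is the constant cases: for $\ret$, unfolding $\elabrelbase{A\to\bM A}$ we must show $(\ret^{\M_1}\,v)\,\mathcal{R}_A\,(\ret^{\M_2}\,v)$ for every $v:A$, which is exactly the second clause of \autoref{def:monadicRelation}; for $\bind$, unfolding $\elabrelbase{\bM A \to (A\to\bM B)\to \bM B}$ we must show that $m_1\,\mathcal{R}_A\,m_2$ and $(\forall x.\,f_1\,x\,\mathcal{R}_B\,f_2\,x)$ imply $(\bind^{\M_1}\,m_1\,f_1)\,\mathcal{R}_B\,(\bind^{\M_2}\,m_1\,f_1)$ — wait, $(\bind^{\M_1}\,m_1\,f_1)\,\mathcal{R}_B\,(\bind^{\M_2}\,m_2\,f_2)$ — which is precisely the third (inference-rule) clause of \autoref{def:monadicRelation}.

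The main obstacle, and the place I would spend real care, is the interaction with the \emph{base} type theory $\bL$: terms $u\in\ii{Term}_{\bL}$ and types $A\in\ii{Type}_{\bL}$ appear inside \SM{}, and the induction must treat them uniformly on both sides. The right way to phrase this is that the denotation $\elab{\M}{-}$ restricted to $\bL$-fragments is \emph{independent of} $\M$ (the base monad $\bM$ does not occur there, by the covariance discipline), so a $\bL$-subterm receives the \emph{diagonal} relation and the two translations literally coincide on it; this is what makes the dependent-function and dependent-application cases go through, since the binder variable $x:A$ ranges over a single set common to both interpretations. I would also need to be mildly careful that the relational denotation on types is well-defined — it is, since it recurses strictly on the structure of $C$ in \autoref{fig:DMDef} — and that it is compatible with the substitution $C[x]$ appearing in $(x:A)\to C$, i.e. $\elabrelbase{C[u]} = \elabrelbase{C}[u]$, which holds because base-theory substitution commutes with the (monadic-effect-free) denotation. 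With these points pinned down, the induction closes, and the special case where $\mathcal{R}$ is the graph of a monad morphism $\theta:\M\to\W$ — the only one used downstream — follows immediately, yielding the functorial action $\mathcal{T}^C\theta$ on monad morphisms required by the monad-transformer definition once one further checks that the relational interpretation of $\ret^C$ and $\bind^C$ are themselves graphs of functions (automatic, since logical relations of functional relations at covariant types stay functional).
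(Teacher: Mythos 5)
Your proposal is correct and takes essentially the same route as the paper's argument (a standard structural induction on the typing derivation, where the variable, pair, abstraction and application cases are routine unfoldings of the relational denotation, the $\ret$ and $\bind$ constants are discharged exactly by the two clauses of \autoref{def:monadicRelation}, and $\bL$-subterms denote identically on both sides), so there is nothing substantive to add. One small nit: the independence of $\bL$-fragments from the choice of monad follows simply from the grammar (base types and terms never mention $\bM$), not from the covariance condition, which is only needed for the later, separate step of showing that the relational interpretation of the graph of a monad morphism is again functional --- a distinction your closing remark in fact already reflects.
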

As a corollary, an internal monad $C$ in \SM{} preserves monadic
relations, the relational interpretation of $\ret^{C}$ and $\bind^{C}$ providing
witnesses to the preservation of the monadic structure.
In particular, any monad morphism $\theta : \M_{1} \to \M_{2}$ defines a monadic
relation $\elabrel{\M_{1}}{\M_{2}}{\mathrm{graph}(\theta)}{C} : \elab{\M_{1}}{C} \leftrightarrow
\elab{\M_{2}}{C}$.
%
It turns out that if $C$ is moreover \emph{covariant}, meaning that it does not
contain any occurrence of an arrow $C_{1} \to C_{2}$ where $C_{1}$ is a type in
\SM{}, then the relational denotation
$\elabrel{\M_{1}}{\M_{2}}{\mathrm{graph}(\theta)}{C}$ with respect to any  monad
morphism $\theta : \M_{1} \to \M_{2}$ is actually the graph of a monad morphism.
%
%
To summarize:
\begin{theorem}[Construction of monad transformer from \SM{}]
  \label{thm:MonadTransf}
\text{}
Given a monad $C$ internal to \SM{}
such that $\bind^{C}$ satisfies the linearity criterion, we obtain:
\begin{itemize}[leftmargin=*,label=$\vartriangleright$]
\item if $C$ is covariant, 
  then $\mathcal{T}^C$ equipped with $\lift^{C}_{\M}{:} \M {\to} 
  \mathcal{T}^C\M$ is a (ordered) monad transformer;
\item if $C$ is not covariant, $\mathcal{T}^C$ defines a pointed endofunctor on
  the category of (ordered) monads and monadic relations.\ch{Tried to make this
    more formal, but now missing the order? In general, the treatment
    of the order is quite handwavy in this whole section.}
\end{itemize}
\end{theorem}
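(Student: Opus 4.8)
The plan is to build the monad transformer $\mathcal{T}^C$ and its lift piece by piece, following the structure of the constructions already sketched in the section, and then verify each axiom of the monad transformer definition by induction on the \SM{} type $C$. Concretely, I would proceed in four stages, treating the covariant case first and then the general case.

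\emph{Stage 1: the mapping on monads.} For a monad $\M$ in $\bL$, I set $\mathcal{T}^C\M \eqdef (\elab{\M}{C}, \elab{\M}{\ret^C}, \elab{\M}{\bind^C})$. The monad laws for this structure follow from the fact that the denotation $\elab{\M}{-}$ preserves the equational theory of \SM{} (stated in the excerpt, assuming $\bL$ has extensional dependent products and pairs): since $\ret^C,\bind^C$ satisfy the monad laws in \SM{}'s equational theory by hypothesis, their denotations satisfy them in $\bL$. The order structure on $\elab{\M}{C}$ is given by the relations $\leq^C$ already defined in \autoref{fig:ElabDM}, and monotonicity of $\bind^{\elab{\M}C}$ is exactly what the elaboration of the arrow type builds in (the denotation of $C_1 \to C_2$ carries a monotonicity proof). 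So $\mathcal{T}^C\M$ is an ordered monad.

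\emph{Stage 2: the lift.} I define $\lift^C_{\M,X}$ as the composite $\M X \xrightarrow{\M(\ret^{\elab{\M}{C}})} \M\elab{\M}{C}X \xrightarrow{\alpha^C_{\M,X}} \elab{\M}{C}X$, where the $\M$-algebra structure $\alpha^C_\M$ is defined by induction on $C$ (free algebra for $C = \bM A$, pointwise for products and both arrow formers). I must check: (i) $\lift^C_\M$ is a monad morphism — the $\ret$-preservation is immediate from the algebra unit law, and the $\bind$-preservation is where \emph{linearity of $\bind^C$} is used, since it guarantees $\bind^{\elab{\M}C}(\ret^{\elab{\M}C}\circ f)$ is an $\M$-algebra homomorphism, which is precisely the condition making the naturality square in the excerpt commute and making the bind coherence go through; (ii) $\lift^C_\M$ is monotonic, which again follows because $\alpha^C_\M$ and $\ret^{\elab{\M}C}$ are assembled from monotone pieces and the linearity/covariance ensures no contravariant occurrences spoil it.

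\emph{Stage 3: the action on morphisms, and naturality of lift.} When $C$ is covariant, the key input is the observation derived from the Fundamental Lemma: for a monad morphism $\theta : \M_1 \to \M_2$, the relational denotation $\elabrel{\M_1}{\M_2}{\mathrm{graph}(\theta)}{C}$ is the graph of a monad morphism $\mathcal{T}^C\theta : \elab{\M_1}{C} \to \elab{\M_2}{C}$. Functoriality ($\mathcal{T}^C\mathrm{id} = \mathrm{id}$, $\mathcal{T}^C(\theta'\circ\theta) = \mathcal{T}^C\theta' \circ \mathcal{T}^C\theta$) follows from functoriality of $\mathrm{graph}(-)$ and uniqueness of the function whose graph is a given relation. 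Naturality of $\lift$, i.e. $\mathcal{T}^C\theta \circ \lift_{\M_1} = \lift_{\M_2} \circ \theta$, I would verify by induction on $C$: both sides are built from $\theta$, the algebra structures $\alpha^C_{\M_i}$, and $\ret^{\elab{\M_i}C}$, and at each type former the relevant naturality square commutes (the $\bM A$ case uses that $\theta$ commutes with the free-algebra structure, the product and arrow cases are pointwise). Finally, preservation of the order and of monotonicity of morphisms is inherited pointwise from the $\leq^C$ definitions. Packaging Stages 1–3 gives that $(\mathcal{T}^C,\lift^C)$ is a pointed endofunctor on ordered monads, proving the first bullet. For the second bullet ($C$ not covariant), Stages 1 and 2 are unchanged, but in Stage 3 the relational denotation $\elabrel{\M_1}{\M_2}{\mathcal{R}}{C}$ is only guaranteed (by the corollary to the Fundamental Lemma) to be a \emph{monadic relation}, not the graph of a morphism — because $C$ contains an arrow $C_1 \to C_2$ with $C_1$ an \SM{} type in a contravariant position, and logical relations at contravariant positions do not in general send graphs of functions to graphs of functions. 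Hence one only gets a pointed endofunctor on the category of ordered monads and monadic relations, which is exactly the claimed conclusion.

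\emph{The main obstacle} is Stage 2, specifically pinning down \emph{precisely} how the linearity condition on $\bind^C$ makes $\lift^C_\M$ a monad morphism: one has to show that the inductively-defined $\M$-algebra $\alpha^C_\M$ is compatible with $\bind^{\elab{\M}C}$ in the sense that the right-hand naturality square from the excerpt commutes, and that this compatibility is equivalent to the syntactic linearity (stoup) condition. Getting the induction to go through cleanly requires strengthening the statement to one that mentions the stoup variable explicitly — essentially proving that $\elab{\M}{t}$ for a term $t$ linear in its stoup variable denotes an $\M$-algebra homomorphism in that argument — which is the content deferred to the online appendix. Everything else is a routine, if lengthy, induction on the structure of \SM{} types.
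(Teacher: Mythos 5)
Your proposal is correct and follows essentially the same route as the paper: define $\mathcal{T}^C\M$ by elaboration (monad laws from preservation of the equational theory, order from the elaborated $\leq^C$), build $\lift^C_\M$ through the inductively defined $\M$-algebra structure with the syntactic linearity (stoup) condition supplying the needed algebra-homomorphism property—which is exactly what the paper defers to its appendix—and obtain the functorial action from the logical-relations fundamental lemma, with covariance ensuring the relational denotation of a graph is again a graph, and only a monadic relation otherwise. No gaps worth flagging.
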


We note that the resulting design for \SM{}
is close to Moggi's monadic metalanguage,
since it contains the same type formers: a unary type
former $\bM$, products $C_{1}\times C_{2}$ and functions
$A \to C$.
The main difference is that \SM{} is not parameterized on
a set of simple base types but on a dependent type theory
$\mathcal{L}$.
As such, \SM{} captures the essential elements of the metalanguage \DM{} of
\citet{dm4free}, leaving the non-necessary parts, such as sum
types, to the base language $\bL$.

\subsection{The Continuation Monad Pseudo-Transformer}
\label{sec:continuationMonadTrans}
%
Crucially, the internal continuation monad $\SMCont{\Ans}$ does \emph{not} verify
the conditions to define a monad transformer since it is not covariant in $\bM$.
We study this (counter-)example in detail since it extends the definition of
\citet{JaskelioffM10} to monadic relations and clarifies the prior work of
\citet{dm4free}, where a Dijkstra monad was obtained in a similar way.
%

While \SM{} gives us both the computational continuation monad $\elab \Id {\SMCont \Ans} =
\Cont_{\Ans}$ and the corresponding specification monad $ \elab \Cprop {\SMCont \Ans} =
\Cont_{\Cprop(\Ans)}$, we only get a monadic relation between the two and not a
monad morphism. We write this monadic relation as follows:
\[ \elab \Id {\SMCont \Ans} \quad
  \longleftarrow\!\!\elabrel \Id \Cprop \ret {\SMCont \Ans} \!\!\longrightarrow \quad
  \elab \Cprop {\SMCont \Ans}
\]
One probably wonders what are the elements related by this relation? Unfolding the
definition, we get that a computation $m : \elab \Id {\SMCont \Ans}(X)$ and a specification $w : \elab
\Cprop {\SMCont \Ans} (X)$ are related if
\begin{align*}
  &\;\;m~\elabrel \Id \Cprop \ret {\SMCont \Ans}~w\\
  &\Leftrightarrow \forall (k : X \to \Ans)\, (w_{k} : X \to \Cprop(\Ans)).\, (\forall (x:X).\, \ret\,(k\, x) = w_{k}\,x) \Rightarrow \ret\,(m\,k) = w\,w_{k}\\
  &\Leftrightarrow \forall (k : X \to \Ans).\, \ret\,(m\,k) = w\,(\abs {} x {\ret\,(k\,x)})\\
  &\Leftrightarrow \forall (k : X \to \Ans)\,(p:\Ans \to \prop).\, w\,(\abs {} {x\,q} {q\,(k\,x)})\,p = p\,(m\,k)
\end{align*}
For illustration, if we take $\Ans = \one$, the last condition
reduces to ${\forall (p:\prop).\, w\,(\abs{}{x\,q}q)\,p = p}$, in particular any
sequence $x_{0},\ldots, x_{n}$ induces an element $w = \abs{}{k\,p}{k\,
  x_{0}(\ldots k\, x_{n}\,p)} : \elab \Cprop {\SMCont \Ans} (X)$ that can be
seen as a specification revealing some intensional information about the
computation $m$ at hand, namely, that the continuation $k$ was called with the
arguments $x_{0}, \ldots, x_{n}$ in this particular order.
%
Computationally however, in the case of $\Ans = \one$, $m$ is
extensionally equal to $\abs {} k \ast : \Cont_{\one}$. \km{Can we specify
  computations using their continuation linearly?}

\ch{It would be great to step back and give a crisp summary what all
  this means for verifying programs that use continuations.
  And if we do this, we should also cite \citet{DelbiancoN13},
  for work that does serious verification of programs with
  continuations. Maybe something along the lines: scaling
  this idea to more interesting examples~\cite{DelbiancoN13}
  is future work.}
\ch{Now that we have this in Coq we should probably say something more
  about it?}
%
%
%
%
%

%

\subsection{Effect Observations from Monad Algebras}
\label{sec:monad-algebras}
While monad transformers $\mathcal{T}$ enable us to derive complex
specification monads, they can only help us to automatically derive
effect observations of the form
${\theta^{\mathcal{T}} : \mathcal{T}(\Id) \longrightarrow
\mathcal{T}(W)}$ (see \autoref{sec:EffObsExamples}), which
only slightly generalize the {\em DM4Free} construction.
In all other cases in \autoref{sec:overview}, we had to define effect observations by hand.
However, when the specification monad has a specific shape,
such as $\W^{\Pure}$, there is in fact a simpler way to define effect observations.
For instance, in \autoref{sec:observations} effect observations
$\theta^{\bot}, \theta^{\top} : \Exc \to \W^{\Pure}$ were used 
to specify the total and partial correctness of programs with
exceptions, by making a global choice of allowing or disallowing exceptions.
Here we observe that such hand-rolled effect observations can in fact be automatically
derived from $M$-algebras.\ch{Is there any intuitive explanation
  for what an $M$-algebra is?}


%
As shown by \citet{HylandLPP07}, there is a one-to-one
correspondence between monad morphisms $\M \to \Cont_R$ and $\M$-algebras $\M\,R
\to R$.
We can extend this to the ordered setting: for instance,  
effect observations $\theta : \M \to \W^{\Pure}$ correspond one-to-one to 
$\M$-algebras $\alpha : \M\,\prop \to \prop$ that are monotonic with respect
to the free lifting on $\M\,\prop$ of the implication order on $\prop$. 
Intuitively, $\alpha$ describes a global choice of how to assign a specification 
to computations in $M$ in a way that is compatible with  
$\retM$ and $\bindM$, e.g., such as disallowing all (or perhaps just some) exceptions.

Based on this correspondence, the effect observations $\theta^{\bot}$ and $\theta^{\top}$ arise from 
the $\Exc$-algebras $\alpha^\bot = \abs {} {\_} {\bot}$ and 
$\alpha^\top = \abs {} {\_} {\top}$.
Similarly, the effect observations for nondeterminism from
\autoref{sec:EffObsExamples} arise from the $\NDet$-algebras
$\alpha^{\forall}$ and $\alpha^{\exists}$, taking respectively the
conjunction and disjunction of a set of propositions in $\NDet{}(\prop)$,
as follows: $\theta^{\forall}(m) =
\abs{}{p}{\alpha^{\forall}\,(\NDet{}(p)\,m)}$ and $\theta^{\exists}(m) =
\abs{}{p}{\alpha^{\exists}\,(\NDet{}(p)\,m)}$.
Conversely, we can recover the $\NDet$-algebra $\alpha^{\forall}$ as
$\abs{}{m}{\theta^{\forall}_{\prop}(m)\,\mathrm{id}_{\prop}}$,
respectively $\alpha^{\exists}$ as
$\abs{}{m}{\theta^{\exists}_{\prop}(m)\,\mathrm{id}_{\prop}}$.

\da{State is now described separately below, and probability will 
probably need a treatment ala \cite{Voorneveld19}.}

Importantly, this correspondence is not limited to $\W^\Pure$,
but applies to continuation monads with any answer type. 
For instance, taking the answer type 
to be $S \to \prop$, we can recover the effect observation 
$\theta^{\St} : \St \rightarrow \W^\St$, where 
$\W^\St A \cong \mathrm{Mon}\Cont_{S \to \prop} A = 
(A \to (S \to \prop)) \to (S \to \prop)$, from the $\St$-algebra
$\alpha^{\St} = \abs {} {(f : S \to (S \to \prop) \times S)~(s : S)} {(\proj 1 \, (f\, s))\, (\proj 2 \, (f\, s))} 
  ~:~ \St (S \to \prop) \to S \to \prop$.



\section{Dijkstra monads from effect observations}
\label{sec:dijkstraMonads}
\newcommand{\DMon}{\mathcal{DM}\mathrm{on}}
\newcommand{\Mon}{\mathcal{M}\mathrm{on}}
\newcommand{\MonRel}{\mathcal{M}\mathrm{on}\mathcal{R}\mathrm{el}}
\newcommand{\fib}{\mathfrak{pre}}
\newcommand{\packFib}[0]{\smallint}


As illustrated in \autoref{sec:DijkstraMonadsExamples}, Dijkstra
monads can be obtained from 
effect observations $\theta : \M \to \W$ between 
computational and specification monads.
%
%
As we shall see this construction is generic and leads to a
categorical equivalence between Dijkstra monads and effect
observations.
In this section, we introduce more formally the notion of Dijkstra
monad using dependent type theory, seen as the internal language of a
comprehension category~\cite{Jacobs93}, and then build a category
$\DMon$ of Dijkstra monads.
In order to compare this notion of Dijkstra monads to effect
observations, we also introduce a category of monadic relations
$\MonRel$ and show that there is an adjunction
\begin{equation}
  \label{eq:intfibadj}
\packFib \;\;\dashv\;\; \fib \quad:\quad \MonRel\;\; \longrightarrow \;\;\DMon.
\end{equation}
Intuitively, an adjunction establishes a correspondence between
objects of two categories,
here $\MonRel$ and $\DMon$.
An adjunction always provides an equivalence of categories if we restrict our
attention to objects that are in one-to-one correspondence, those for which the
unit (resp. the counit) of the adjunction is an isomorphism.
When we restrict the adjunction above, we obtain an equivalence
between Dijkstra monads and effect observations.
For the sake of explanation, we proceed in two steps: first, we
consider Dijkstra monads and effect observations over specification
monads with a discrete order (\IE ordinary monads), describing
the above adjunction in this situation; later,
we extend this construction to general preorders, thus obtaining the actual adjunction we are
interested in.
We denote categories defined over non-discrete specification monads
with $\cdot^{\leq}$.

%

%
\newcommand{\TypeCat}{\mathcal{T}\!\!\mathit{ype}}
\newcommand{\OrdCat}{\mathcal{O}\hspace{-0.5mm}\mathit{rd}}
\newcommand{\Quo}{\mathcal{Q}}
\newcommand{\Discr}{\Delta}
\newcommand{\Forget}{\mathcal{U}}
\newcommand{\CoDiscr}{\mathrm{CoDiscr}}
\paragraph{Dependent type theory and comprehension categories}
We work in an extensional type theory with dependent products $(x:A) \to B$,
strong sums $(x:A) \times B$, an identity type $x =^{A} y$ for $x, y:A$ (where
the type $A$ is usually left implicit), 
a type of (proof-irrelevant) propositions
$\prop$, and quotients of equivalence relations.
This syntax is the internal language of a comprehension category~\cite{Jacobs93}
with enough structure and we will write $\TypeCat$ for any such category.
This interpretation of type theory allows us to call any object
$\Gamma \in \TypeCat$\et{pick something else than $\Gamma$? (suggests a type
  env)}\km{Yes objects of the base category are used to denote context usually,
  there is a bit of ambiguity between the context and the iterated sigma
  type/telescope built from it} a type.
\footnote{Under a mild condition that the
  category is \emph{democratic}~\cite{ClairambaultD14}.}

\paragraph{Dijkstra monads}
A \emph{Dijkstra monad} over a (specification) monad $\W$ is given by 
\begin{itemize}[leftmargin=*,nosep,label=$\vartriangleright$]
\item for each type $A$ and specification $w: \W\, A$, a type  $\mathcal{D}\;A\;w$
  of ``computations specified by $w$''
\item return and bind functions specified respectively by the return and bind
  of $\W$
  \begin{displaymath}
  \begin{array}{c}
    \retD : (x:A) \to \mathcal{D}\;A\;(\retW\;x)
    \\[1mm]
    \bindD : \mathcal{D}\;A\;w_{1} \to ((x:A) \to \mathcal{D}\;B\;w_{2}(x)) \to
    \mathcal{D}\;B\;(\bindW\; w_{1}\; w_{2})
  \end{array}
  \end{displaymath}
\item such that the following monadic equations about $\retD$ and $\bindD$ hold 
  \begin{displaymath}
  \begin{array}{c}
    \bindD\,m\,\retD = m
    \qquad
    \bindD\, (\retD\,x)\,f = f\,x
    \\[1mm]
    \bindD\,(\bindD\,m\, f)\,g = \bindD\,m\,(\abs{}x{\bindD\,(f\,x)\,g})
  \end{array}
  \end{displaymath}
  where $m:\mathcal{D}\,A\,w, x:A, f:(x:A) \to \mathcal{D}\,B\,(w'\,x), g :
  (y:B) \to \mathcal{D}\,C\,(w''\,y)$ for $A,B,C$ any types and $w:\W\,A, w':(x:A) \to \W\,B,
  w'': (y:B) \to \W\,C$.
  Note that the typing of these equations depends on the monadic equations
  for $\W$ and they would not be well-typed otherwise.
\end{itemize}
\bob{Examples here?}
%
%

%
In order to use multiple Dijkstra monads, that is multiple effects, in a
single program, we need a way to go from one to another, not only at the
 level of computations, but also at the level of specifications.
A morphism of Dijkstra monads from ${\mathcal{D}_{1}\,A\,(w_{1}:\W_{1}\,A)}$ to
${\mathcal{D}_{2}\,A\,(w_{2} : \W_{2}\,A)}$ provides exactly that: it is a pair $(\Theta^{\W},\Theta^{\mathcal{D}})$ of a monad morphism 
$\Theta^{\W} : \W_{1} \to \W_{2}$ mapping specifications of the source Dijkstra monad to
specifications of the target Dijkstra monad, and a family of maps
%
\[\Theta^{\mathcal{D}}_{A, w_{1}} \quad :\quad
  \mathcal{D}_{1}\,A\,w_{1} \quad \longrightarrow \quad \mathcal{D}_{2}\,A\,(\Theta^{\W}w_{1}) \]
 indexed by types $A$ and specifications $w_{1} : \W_{1}\,A$,
satisfying the following axioms
\begin{mathpar}
  \Theta^{\mathcal{D}}(\ret^{\mathcal{D}_{1}}\,x) =
  \ret^{\mathcal{D}_{2}}\,x,
  \and
  \Theta (\bind^{\mathcal{D}_{1}}\,m\, f) =
  \bind^{\mathcal{D}_{2}}\,(\Theta^{\mathcal{D}}\,m)\,(\Theta^{\mathcal{D}} \circ f).
\end{mathpar}
\km{Add an example of an actual dijkstra monad morphism}%
%
%
This gives a category $\DMon$ of Dijkstra monads and morphisms between them.

%
%

%
\paragraph{Monadic Relations} Given a monadic relation
$\mathcal{R} : \M \leftrightarrow \W$
(\autoref{def:monadicRelation}) between a computational monad
$\M$ and a specification monad $\W$, we construct a Dijkstra monad
$\fib ~\mathcal{R}$ on $\W$ as follows:
\begin{equation}
  \label{eq:fibDiscr}
  (\fib~\mathcal{R})~A~(w:\W\,A) \quad \eqdef \quad
  (m:\M\, A) \times m\,\mathcal{R}_{A}\,w
\end{equation}
\da{Add more whitespace between the arguments. A bit crammed at the moment. Also, 
do we really need the type info for $w$ on the left?}%
That is $(\fib~\mathcal{R})\,A\,w$ consists of those elements $m$
of $\M\,A$ that are related by $\mathcal{R}$ to the specification $w$.
When $\mathcal{R}$ is the graph of a monad morphism $\theta$ (or
equivalently, $\mathcal{R}$ is functional), $\fib (\mathcal{R}
: \M \leftrightarrow \W)$ maps an element $w : \W\,A$ to its preimage
$\theta^{{-1}}(w) = \aset{ m : \M\,A \alt \theta(m) = w}$.
Conversely, any Dijkstra monad $\mathcal{D}$ over $\W$ yields a
monad structure on
\[
\packFib \mathcal{D}\,A \eqdef (w:\W\,A) \times \mathcal{D}\,A\,w
\]
and the projection of the first component is a monad morphism $\pi_{1}
: \packFib \mathcal{D} \to \W$.
In order to explain the relation between these two operations $\fib$ and
$\packFib -$, we introduce the category $\MonRel$ of monadic relations.
An object of $\MonRel$ is a pair of monads $\M, \W$ together
with a monadic relation $\mathcal{R} : \M \leftrightarrow \W$
between them.
A morphism between $\mathcal{R}^{1} : \M_{1} \leftrightarrow
\W_{1}$ and $\mathcal{R}^{2} : \M_{2} \leftrightarrow \W_{2}$
is a pair $(\Theta^{\M}, \Theta^{\W})$ where $\Theta^{\M} : \M_1 \longrightarrow \M_2$ and $\Theta^{\W} : \W_1 \longrightarrow \W_2$ such that
\begin{equation}
\label{eq:MonRel}
\forall (m : M A)\, (w : W A).\, m\,\mathcal{R}^{1}_{A}\,w \implies \Theta^{\M}(m)\,\mathcal{R}^{2}_{A}\,\Theta^{\W}(w).
\end{equation}
The construction $\fib$ extends to a functor on $\MonRel$ by sending a
pair $(\Theta^{\W}, \Theta^{\M})$ to a pair $(\Theta^{\W},
\Theta^{\mathcal{D}})$, where $\Theta^{\mathcal{D}}_{A,w}$ is
the restriction of $\Theta^{\M}_{A}$ to the appropriate domain.
Conversely, $\packFib$ packs up a pair $(\Theta^{\W},
\Theta^{\mathcal{D}})$ as $(\Theta^{W}, \Theta^{\M})$,
where $\Theta^{\M}_{A}(w,m)= \Theta^{\mathcal{D}}_{A,w}(m)$.
Since $\Theta^{\M}$ maps the inverse image of $w$ to the inverse image
of $\Theta^{\W}(w)$, condition (\ref{eq:MonRel}) holds.
Moreover, this gives rise to a natural bijection
\[
  \MonRel(\packFib \mathcal{D}, \mathcal{R}) \qquad\cong\qquad \DMon(\mathcal{D},
  \fib\, \mathcal{R})
\]
that establishes the adjunction~(\ref{eq:intfibadj}).  We can restrict
(\ref{eq:intfibadj}) to an equivalence by considering only those objects
for which the unit (resp.~counit) of the adjunction is an
isomorphism.
Every Dijkstra monad $\mathcal{D}$ is isomorphic to its image
$\fib~(\packFib \mathcal{D})$, whereas a monadic relation
$\mathcal{R}$ is isomorphic to $\packFib (\fib~\mathcal{R})$ if and
only if it is functional, i.e., a monad morphism.
This way we obtain an equivalence of categories between $\DMon$ and
the category of effect observations on monads with discrete preorder.
%

%
\paragraph{The ordered setting}
Recall that in the examples of \autoref{sec:DijkstraMonadsExamples}, the Dijkstra monads
$\mathcal{D}\,A\,(w : \W\,A)$ we derived from effect observations 
${\theta : M \longrightarrow W}$ naturally made use of the order on 
$\W$ to compare programmer-provided specifications to type-inferred ones.
%
%
This order structure on $W$ can be naturally lifted to 
Dijkstra monads, by requiring $\mathcal{D}$ to be 
equipped with a \emph{weakening} structure
  \[\weaken \quad:\quad w_{1} \leq_{A} w_{2} \times \mathcal{D}\,A\,w_{1} \quad
    \longrightarrow \quad
    \mathcal{D}\,A\,w_{2}\]
  such that the following axioms hold (where we conflate the propositions
  $w_{1}{\leq} w_{2}$ and their 
  proofs) 
  \begin{mathpar}
    \weaken \langle w {\leq} w, m \rangle  = m,
    \and
    \weaken \langle w_{1} {\leq} w_{2} {\leq} w_{3}, m\rangle = \weaken \langle
    w_{2}{\leq} w_{3}, \weaken \langle w_{1} {\leq} w_{2}, m\rangle\rangle
  \end{mathpar}
  \begin{align*}
    \bindD\, (\weaken \langle w_m {\leq} w_m', m \rangle)\,
    (\abs{}{a}{\weaken&\langle w_f\,a{\leq} w_f'\,a, f\,a \rangle}) =\\
    &\weaken\langle \bindW\, w_m\,w_f{\leq}\bindW\, w_m'\,w_f', \bindD\, m\, f \rangle.
  \end{align*}
  \km{Provide the $\weaken$ for a specific Dijkstra monad ?}
\et{shouldn't it be  $w_{1} {\leq} w_{3}$ instead of $ w_{1} {\leq} w_{2} {\leq}
  w_{3}$? or, how is the 3-place witness defined?}\km{I was assuming that the
  reader would infer that it means ``the witness of $w_1 \leq w_3$
  obtained from the transitivity applied to $w_1 \leq w_2$ and $w_2 \leq w_3$''}\et{I'd suggest to write $w_1 \leq w_3$, and possibly explain (isn't that witness unique anyway?)}
%
Such pairs of an ordered monad $W$ and a Dijkstra monad $\mathcal{D}\,A\,(w : \W\,A)$ with a
weakening structure form a category $\DMon^{\leq}$, whose morphisms 
are pairs of a monotonic monad
morphism and a Dijkstra monad morphism preserving the weakening
structure.
%
%
%
Further, the definition of $\fib$ extends similarly straightforwardly to the 
ordered setting: given a monad morphism $\theta : \M \to \W$, we define 
%
%
\begin{equation}
  \label{eq:fibOrder}
  (\fib~\theta)~A~(w:\W\,A) \quad \eqdef \quad (m:\M\,A) \times {\theta(m) \leq_A w} 
\end{equation}
This definition coincides with (\ref{eq:fibDiscr}) when the order $\leq_A$ on $\W\,A$ is
discrete.
Moreover, we can equip $\fib~\theta$ with a weakening structure:
$\weaken \langle w_{1} \leq w_{2}, \langle m,\theta(m) \leq w_1\rangle\rangle \eqdef
  \langle m, \theta(m) \leq w_1 \leq w_2 \rangle$.
The same construction can be performed starting with an \emph{upward
  closed} monadic relation ${\mathcal{R} : \M \leftrightarrow \W}$, i.e., such
that $M$ has a discrete order and 
$
\forall m.\, \forall (w_{1} \leq^{\W}_{A} w_{2}).\,
 m\,\mathcal{R}_{A}\,w_{1}  \Rightarrow  m\, \mathcal{R}_{A}\,w_{2}.
$
Doing so, we obtain a functor $\fib : \MonRel^{\leq} \longrightarrow \DMon^{\leq}$ from the category of
upward-closed monadic relations to the category of ordered Dijkstra monads
with a weakening structure.
However, when trying to build a left adjoint $\packFib$ to $\fib$ exactly as before, 
there is a small mismatch with the expected construction on
practical examples.
Indeed, starting from a monad morphism ${\theta : \M \to
\W}$, $\packFib (\fib~\theta)$ reduces to $(\Sigma \M, \W,
\pi_{1})$ where $\Sigma\M\,A = (w:\W\,A) \times (m : \M\,A) \times\theta_{A}(m)\leq_{A} w$, 
which is unfortunately not isomorphic to $M$. The problem is that we 
get one copy of $m$ for each admissible specification $w : \W\,A$.
These copies, however, are non-essential since the weakening structure of 
$\fib~\theta$ identifies them. As such, to define $\packFib$, we need to further 
quotient them
\footnote{We conjecture that an alternative and
  more symmetric solution would be to equip our Dijkstra monads with an
  additional order, but this does not correspond to the examples we obtain in
  practice.}
, defining
\[\packFib{}\,\mathcal{D}\, A \eqdef{} \left ((w:\W\,A) \times \mathcal{D}\,A\,w \right ) / \sim \]
where $\sim$ is generated by $\langle {} w , c \rangle \!\sim\! \langle {} w',
\weaken{}\langle{} w \!\leq\! w', c \rangle \rangle$, 
%
giving us the desired adjunction $\packFib \dashv \fib$.
%

%

%
To summarize, we can construct Dijkstra monads with weakening out of effect
observations and the other way around.
Moreover, when starting from an effect observation $\theta : \M \to \W$, then
$\packFib (\fib~\theta)$ is equivalent to $\theta$.
This result shows that we do not lose anything when moving from effect
observations to Dijkstra monads, and that we can, in practice, use either the effect
observation or the Dijkstra monad presentation, picking the one that is most
appropriate for the task at hand.

\section{Algebraic effects and effect handlers for Dijkstra monads}
\label{sec:algebraic}

In \autoref{sec:jungle}, we noted that all our example computational 
monads come with corresponding canonical side-effect causing operations. 
This is an instance of a general approach to modeling computational effects algebraically 
using operations (specifying the sources of effects) and equations (specifying 
their behavior), as pioneered by \citet{PP:AOGE,PP:NCDM}. 
From the programmer's perspective, \emph{algebraic effects} naturally enable 
programming against an abstract interface of operations instead of a concrete 
implementation of a monad, with the accompanying notion of \emph{effect handlers} 
enabling one to modularly define different fit-for-purpose implementations of these 
abstract interfaces.


\subsection{Algebraic Effects for Dijkstra Monads}
\label{sec:algebraiceffects}


We begin by showing how effect observations naturally equip 
both the specification monad and the corresponding Dijkstra monad with
algebraic operations in the sense of \citet{PP:AOGE,PP:NCDM}. 
We observed several instances of this phenomenon for state, IO,
and nondeterminism in \autoref{sec:DijkstraMonadsExamples}, and we
can now explain it formally in terms of algebraic effects and effect observations.


\paragraph{Algebraic operations}
For any monad $M$, an \emph{algebraic operation} ${\algop : I \leadsto O}$
with input (parameter) type $I$ and output (arity) type $O$ is a family
$\algopM_A : I \times (O \to MA) \to MA$ that 
satisfies the following coherence law 
for all $i : I$, $m : O \to MA$, and  
$f : A \to MB$~\cite{PP:AOGE}:
\begin{equation}
  \label{eq:algopcoherence}
  \bindM\, (\algopM_A\, \langle i , m \rangle)\, f = \algopM_B\, \langle i , \abs{} o {\bindM\, (m\, o)\, f}\rangle
\end{equation}
For $\NDet$, the two operations are
$\choice : \one \leadsto \bool$ and $\fail : \one \leadsto \zero$.
For $\St$, the operations are $\stget : \one \leadsto S$ and
$\stput : S \leadsto \one$. Plotkin and Power also showed that such
algebraic operations are in one-to-one correspondence with
\emph{generic effects} $\geneffM : I \to M O$, which are 
often a more natural presentation for programming. For example, the
generic effect corresponding to the $\stput$ operation for $\St$ 
has type $S \to \St\,\one$. They are
interconvertible with algebraic operations as follows:
\begin{equation}
  \label{eq:geneffinterdefinable}
  \geneffM\, i \eqdef \algopM_O\, \langle i , \abs{} o \retM o \rangle
  \qquad
  \algopM_A\, \langle i,m \rangle \eqdef \bindM\, (\geneffM i)\, (\abs{} o m\, o)
\end{equation}
Plotkin and Power also show that signatures $\ii{Sig}$ of algebraic operations determine 
many computational monads (except continuations) 
once they are also equipped with suitable sets of equations $\ii{Eq}$. In the following we write 
$T_{(\ii{Sig},\ii{Eq})}$, abbreviated as $T$, for the monad determined by $(\ii{Sig},\ii{Eq})$.

\paragraph{Effect observations} In \autoref{sec:dijkstraMonads}, we
saw that Dijkstra monads are equivalent to effect
observations $\theta : M \to W$. When $M = T$, then since $\theta$ is a monad morphism,
it automatically transports any algebraic operations on the
computation monad $T$ to the (ordered) specification monad $W$:
\begin{equation}
  \label{eq:algopderivation}
  \algopW_A\, \langle i , w \rangle
  \eqdef
  \mu^{\W}_A\, (\theta_{W A}\, (\algopT_{WA}\, \langle i , \abs{} o {\retT (w\, o)} \rangle ))
\end{equation}
where $\mu^{\W} : W \circ W \to W$ is the \emph{multiplication} (or
\emph{join}) of $W$, defined as
$\mu^{\W}_A w \eqdef \bindW w\, (\abs{} {w'} {w'})$.

This derivation of algebraic operations is in fact a result of a more general 
phenomenon. Namely, given any monad morphism 
$\theta : M \to W$, we get a family of $M$-algebras on 
$W$, natural in $A$, by 
\begin{displaymath}
  \mu^{\W}_A \circ \theta_{W A}: M W A \longrightarrow W W A \longrightarrow W A
\end{displaymath}
Furthermore, the derived algebraic operations $\algopW_A$ (resp. the derived 
$M$-algebras on $W$) are monotonic with respect to the free lifting of 
the preorder $\leq^{\W\, A}$ on $WA$ to $T W A$ (resp. to $M W A$).

The derivation of operations on the specification monad from
operations on the computational monad, via the effect observation,
explains how we are able to systematically generate
(computationally natural) specifications
for operations in \autoref{sec:DijkstraMonadsExamples}. 
For instance, taking the effect observation 
$\theta^{\forall} : \NDet \to \W^{\Pure}$ for demonic nondeterminism, 
the induced operations we get on $\W^{\Pure}$ are
\begin{displaymath}
\begin{array}{l@{\qquad}l}
   \choice^{\W^{\Pure}}_A : \one \times (\bool \to \W^{\Pure} A) \to \W^{\Pure} A
   &
   \fail^{\W^{\Pure}}_A : \one \times (\zero \to \W^{\Pure} A) \to \W^{\Pure} A
   \\[1mm]
   \choice^{\W^{\Pure}}_A \langle () , w \rangle \eqdef \abs{} p {w\, \true\, p \wedge w\, \false\, p}
   &
   \fail^{\W^{\Pure}}_A \langle () , w \rangle \eqdef \abs{} p {\top}
\end{array}
\end{displaymath}

\da{currently I added the example of demonic nondeterminism, 
with the view that in the upcoming 6.3 something will be said 
about handlers for this effect. if that is not going to be the case, 
then perhaps we should either show exceptions or state as 
an example here}

\paragraph{Dijkstra monads}
Finally, we show that the Dijkstra monad $\mathcal{D} = \fib~\theta$ derived from a
given effect observation $\theta : T \to W$ in
(\ref{eq:fibOrder}) also supports algebraic operations, with their
computational structure given by the operations of $T$ and their
specificational structure given by the operations of $W$
derived in (\ref{eq:algopderivation}). This completes the process of lifting
operations from computational monads to Dijkstra monads that we
sketched in \autoref{sec:DijkstraMonadsExamples}.  In detail, we 
define an algebraic operation for $\mathcal{D}$ as
\begin{displaymath}
  \begin{array}{l}
    \algopD_A : (i : I) \to \big(c : (o : O) \to \mathcal{D}\, A\, (w\, o)\big)
    \to \mathcal{D}\, A\, (\algopW_A\, \langle i , w \rangle) \\[1mm]
    \algopD_A i\, c \eqdef 
    \big\langle ~ \algopT_A \langle i , \abs{} o {c\, o} \rangle ~ , ~ 
    \theta_A (\algopT_A \langle i , \abs{} o { c\, o} \rangle) \leq \algopW_A \langle i , w \rangle ~ \big\rangle
  \end{array}
\end{displaymath}

For instance, for $\NDd = \fib~\theta^{\forall}$, the induced operations 
have the following (expected) types:
\begin{displaymath}
\begin{array}{l}
   \choice^{\NDd}_A : (\_ : \one) \to \big(c : (b : \bool) \to \NDd\, A\, (w\, b)\big)
    \to \NDd\, A\, (\abs{} p {w\, \true\, p \wedge w\, \false\, p})
   \\[1mm]
   \fail^{\NDd}_A : (\_ : \one) \to \big(c : (z : \zero) \to \NDd\, A\, (w\, z)\big)
    \to \NDd\, A\, (\abs{} p {\top})
\end{array}
\end{displaymath}

As we have defined 
$\algopD$  in terms of algebraic operations for $T$ and $\W$, then it 
is easy to see that it also satisfies an appropriate variant of the 
algebraic operations coherence law (\ref{eq:algopcoherence}), namely 
\begin{displaymath}
    \bindD\, (\algopD_A\, \langle i , c \rangle)\, f = \algopD_B\, \langle i , \abs{} o {\bindD\, (c\, o)\, f}\rangle
\end{displaymath}

Finally, based on (\ref{eq:geneffinterdefinable}), we note that 
the generic effect corresponding to $\algop : I \leadsto O$ is given by 
\begin{displaymath}
   \geneffD\, i \eqdef \big\langle ~ \geneffT\, i ~,~ \theta_O\, (\geneffT\, i) \leq \geneffW\, i ~ \big\rangle 
     : I \to \mathcal{D}\, O\, (\geneffW\, i)
\end{displaymath}

\paragraph{Specifying operations in free monads} As we have seen above,
effect observations induce specifications for algebraic operations,
which in turn are used as the indices of the corresponding Dijkstra 
monad operations. Note that in the case of free monads, when 
$\ii{Eq} = \emptyset$, we have the freedom to assign arbitrary
specifications. Let $\mathrm{FreeM} \cong T_{(\ii{Sig},\emptyset)}$
be the free monad  over some signature \ii{Sig}:
\begin{displaymath}
  \mathrm{FreeM}~A = \mu X. A + \Sigma_{\algop_i : I_i \leadsto O_i \in \ii{Sig}}I_i \times (O_i \to X)
\end{displaymath}
To specify the operations, we assume that for each $\algop_i$, we have
a precondition $P_i : I_i \to \prop$ and a postcondition
$Q_i : I_i \times O_i \to \prop$. From these we can build
a $\mathrm{FreeM}$-algebra $h : \mathrm{FreeM}(\prop) \to \prop$ by
\begin{displaymath}
  \begin{array}{c}
    h(\inl\, \phi) = \phi \qquad\qquad
    h(\inr\, \langle \algop_i, \mathit{inp}, k \rangle) = P_i\, \mathit{inp} \land \forall \mathit{oup}.\, Q_i \langle \mathit{inp},\mathit{oup} \rangle \to k\, \mathit{oup}
  \end{array}
\end{displaymath}
Following \autoref{sec:monad-algebras}, we can derive  
an effect observation $\theta : \mathrm{FreeM} \rightarrow \W^{\Pure}$ from $h$, 
from which we can in turn derive a $\W^{\Pure}$-indexed Dijkstra monad
$\mathcal{D} = \fib\;\theta$ following \autoref{sec:dijkstraMonads}. 
The operations $\algop_i$ of the free
monad lift to generic effects in $\mathcal{D}$, with specifications
derived from the assumed $P_i$ and $Q_i$:
\begin{displaymath}
  \mathtt{gen}^{\mathcal{D}}_{\algop_i} : (\mathit{inp} : I_i) \to \mathcal{D}\;O_i\;(\lambda p.\, P_i\, \mathit{inp} \land \forall \mathit{oup}.\, Q_i\langle \mathit{inp},\mathit{oup}\rangle \to p\, \mathit{oup})
\end{displaymath}

As an example, consider the operation $\choice : \one \leadsto \bool$
introduced above but with the specification that always returns
$\true$. This is captured by the precondition $P_\choice\, \_ = \top$ and
postcondition $Q_\choice \langle \_ , b\rangle = (b = \true)$, which yields the following generic
effect (after simplifying its type):
\begin{displaymath}
  \mathtt{gen}^{\mathcal{D}}_{\choice} : \mathcal{D}\;\bool\;(\lambda p.\;p\, \true)
\end{displaymath}
In contrast to the demonic non-determinism specification of $\choice$ given
above for $\NDd$, this variant of $\choice$ always derives its weakest
precondition from the $\true$ case of the post condition.

In the next section, we will use the ability to assign arbitrary
specifications to operations in a free monad to define the proof
obligations required to verify effect handlers for those operations.

\subsection{Effect Handlers for Dijkstra Monads}

Of course, operations are only one side of  
algebraic effects: the other side concerns 
\emph{effect handlers} \cite{Plotkin13}. These are a generalization of exception handlers 
to arbitrary algebraic effects. They are defined by providing a concrete implementation 
for each (abstract) operation, such as $\stget$. 
Semantically, they denote user-defined $T$-algebras for the algebraic effect at hand.


In contrast to the general story for algebraic effects in
\autoref{sec:algebraiceffects}, our treatment of effect handlers for
Dijkstra monads is currently more ad hoc. We have two approaches,
which can roughly characterised in terms of how the
operations are assigned specifications. In the first approach, we do
not explicitly give a specification for each operation. Instead, the
specification is induced by the handler. This approach fits well with
the philosophy of Dijkstra monads with weakest precondition
specifications, i.e., automatically generating the most general specification
from the program. This approach works well for exceptions and allows us
to reconstruct the weakest precondition semantics for exceptions with
try/catch in the setting of Dijkstra monads
(\cite{Leino94,Sekerinski12}), and put the ad-hoc 
examples of \citepos{dm4free} {\em DM4Free} on a general footing.

Unfortunately, for \emph{resumable} operations
(i.e., everything except exceptions), the inevitable circularity
between the handler and the handled code leads to attempts to construct
inductive propositions that do not exist in Coq or
\fstar. To resolve this problem, we also demonstrate a second approach that
makes use of upfront specification of operations, as demonstrated at
the end of \autoref{sec:algebraiceffects}. This specification
of operations breaks the circularity, and allows handling of
operations that resume, such as $\choice$. However, this approach is
also not yet fully satisfactory: the operation clauses of the
handler must be verified \emph{extrinsically}, in contrast to the
usual methodology of Dijkstra monads.

\paragraph{Effect handling (1st approach)}
Following \citet{Plotkin13}, we define the \emph{handling} of $T$
(determined by some $(\ii{Sig},\ii{Eq})$) into a monad $M$
to be given by the following operation:
  \begin{displaymath}
    \begin{array}{@{}l}
    \handlewithi {T,M} : T\, A \to \big(h_{\algop_i} : I_i \times (O_i \to M B) \to M B\big)_{\algop_i : I_i \leadsto O_i \in Sig}
       \to (A \to M B) \to M B\\
    \handlewithi {T,M}\, (\retT_A\, a)\, h\, f \eqdef f\, a \\
    \handlewithi {T,M}\, (\algopT_A\, \langle i , t \rangle)\, h\, f \eqdef
      h_{\algop}\, \langle i , \abs{} o {\handlewithi {T,M}\, (t\, o)\, h\, f} \rangle
    \end{array}
  \end{displaymath}
where we leave the proof obligation that the operation 
clauses $h_{\algop}$ have to satisfy the equations in $Eq$ implicit. We refer 
the reader to \citet{Ahman18} for explicit treatment of such proof obligations.
  

As such, $h$ forms a $T$-algebra
$\alpha_{h} : T M B \to M B$, and
$\handlewithi {T,M}\, (-)\, h\, p\, f$ amounts to the induced
unique mediating $T$-algebra homomorphism
$ \alpha_{h} \circ T (f) : T A \to T M B \to M B $.

\paragraph{Specification monads} 
Based on the category theoretic view of effect handlers as user-defined $T$-algebras, 
we can define a notion of handling any monad $M$ into some other monad $M'$:
\begin{displaymath}
  \begin{array}{l}
    \handlewithi {M,M'} : M\, A \to (\alpha : M M' B \to M' B)  
      \to (A \to M' B) \to M' B
    \\
    \handlewithi {M,M'}\, m\, \alpha\, f \eqdef (\alpha \circ M(f)) \, m
  \end{array}
  \end{displaymath}
where we again leave the proof obligation ensuring that 
$\alpha$ is an $M$-algebra implicit.
%
%
Below we are specifically interested in $\handlewithi {M,M'}$ 
when $M$ and $M'$ are specification monads because, in contrast to
$T$, the structure of specification monads is not determined by 
$(\ii{Sig},\ii{Eq})$ alone.

\paragraph{Dijkstra monads}
Based on the smooth lifting of algebraic operations in \autoref{sec:algebraiceffects},
then when defining effect handling for the Dijkstra monad
$\mathcal{D} = \fib~\theta$ induced by some effect observation
$\theta : T \to W$ into some other Dijkstra monad $\mathcal{D}' = \fib~\theta'$
for $\theta' : M \to W'$, 
we would expect the computational
(resp. specificational) structure of handling to be given by that for
$T$ (resp. $W$).

However, simply giving an effect observation $\theta$ turns out to be
insufficient for handling $\mathcal{D}$ into $\mathcal{D}'$. Category
theoretically, the problem lies in the operation cases for $W$ giving
us a $T$-algebra $T W' B \to W' B$, but to use $\handlewithi {W,W'}$
(which we need to define the specification of handling)
we instead need a $W$-algebra $W W' B \to W' B$. To overcome this
difficulty, we introduce a more refined notion of effect observation,
relative to the specification monad $W'$ we are handling into.

\begin{definition}[Effect observation with effect handling]
An effect observation with effect handling for an ordered monad $W'$ is an effect 
observation $\theta : T \to W$ such that for any $T$-algebra \linebreak
$\alpha : T W' A \to W' A$, there is a choice of a $W$-algebra 
$\alpha_* : W W' A \to W' A$ that is (i) monotonic with respect 
to the orders of $W$ and $W'$, and (ii)  
which additionally satisfies the equation $\alpha_* \circ \theta_{W' A} = \alpha$.
\end{definition}

Intuitively, the condition (ii) expresses that 
$\alpha_*$ extends a $T$-algebra to a $W$-algebra 
in a way that is identity on the $T$-algebra structure, 
specifically on the algebraic operations corresponding to $\alpha$.

It is worth noting that needing to turn algebras 
$T W' A \to W' A$ into algebras 
$W W' A \to W' A$ is not simply a quirk  
due to working with Dijkstra monads, 
but the same exact need arises when giving a monadic semantics 
to a language with effect handlers 
using a monad different from $T_{(\ii{Sig},\ii{Eq})}$.

Using this refined notion of effect observation, we can now define 
handling for Dijkstra monads. Given an effect observation 
$\theta : T \to W$ with effect handling for $W'$ and another 
effect observation $\theta' : M \to W'$, we define the 
handling of $\mathcal{D} = \fib~\theta$ into  
$\mathcal{D}' = \fib~\theta'$ as the following operation
\begin{displaymath}
  \begin{array}{@{}l}
    \handlewithi {\mathcal{D},\mathcal{D}'}
    \begin{array}[t]{@{}c@{\hspace{0.3em}}l}
      :&\mathcal{D}\, A\, w_1 \\
      \to & \big(h^{W'}_{\algop_i} : I_i \times (O_i \to W' B) \to W' B\big)_{\algop_i : I_i \leadsto O_i \in Sig} \\
      \to & \big(h^{\mathcal{D}'}_{\algop_i} : \big( (i,c) : (i : I_i) \times \big((o : O_i) \to \mathcal{D}'\, B\, (w\, o)\big)\big) \to \mathcal{D}'\, B\, (h^W_{\algop} \langle i , w \rangle)\big)_{\algop_i \in Sig} \\
      \to & \big((a : A) \to \mathcal{D}'\, B\, (w_2\, a) \big) \\
      \to & \mathcal{D}'\, B\, (\handlewithi {W,W'}\, w_1\, (\alpha_{h^{W'}})_* \, w_2) 
    \end{array}
  \end{array}
\end{displaymath}
\begin{displaymath}
  \begin{array}{@{}l}
    \handlewithi {\mathcal{D},\mathcal{D}'}\, c_1\, h^{W'}\, h^{\mathcal{D}'}\, c_2
    \eqdef
    \big\langle~
    \begin{array}[t]{@{}l}
      \handlewithi {T,M}\, c_1\, h^{\mathcal{D}'}\, (\abs{} a {c_2\, a})\,  ~ , \\[0.5mm]
      \hspace{-2.75cm} \qquad \theta'_B\, \big(\handlewithi {T,M}\, c_1\, h^{\mathcal{D}'}\, (\abs{} a {c_2\, a})\big) \leq 
      \big(\handlewithi {W,W'}\, w_1\, (\alpha_{h^{W'}})_*\,  w_2\big) ~ \big\rangle
    \end{array}
  \end{array}
\end{displaymath}
where we again leave implicit the conditions ensuring that $h^{W'}$, $h^{\mathcal{D}'}$ are correct with respect to $\ii{Eq}$.
%
%

\paragraph{Exception handling}
One effect observation supporting handling is
${\theta^{\Exc} : \Exc \to \W^{\Exc}}$ from
\autoref{sec:observations}, where $\Exc$ is
determined by $(\{\throw\},\emptyset)$. To model
handling potentially exceptional computations into other similar
ones, as is often the case in languages with
exceptions but no effect system, we take
$\mathcal{D} = \mathcal{D}' = \EXC = \fib~\theta^{\Exc}$ and observe
that $\handlewithi {\EXC,\EXC}$ can be simplified to
\begin{align*}
  \mathtt{try\text{-}catch} : ~ & 
                                  \EXC\, A\, w_1 \to \big(h^{\W^{\Exc}}_{\throw} : E \to \W^{\Exc} B\big) \to  \big(h^{\EXC}_{\throw} : ( e : E) \to \EXC\, B\, (h^{\W^{\Exc}}_{\throw}\, e)\big) \\[-1mm]
  \to ~ & \big((a : A) \to \EXC\, B\, (w_2\, a) \big) \to \EXC\, B\, \big(\abs {} {p\, q} {w_1\, (\abs {} x {w_2\, x\, p\, q})\, (\abs {} e {h^{\W^{\Exc}}_{\throw}\, e\, p\, q} )} \big) 
\end{align*}
in part, by defining the extension
$\alpha_* : \W^{\Exc} \W^{\Exc} B \to \W^{\Exc} B$ of 
$\alpha : \Exc\, \W^{\Exc} B \to \W^{\Exc} B$ as
  \begin{equation}
  \label{eq:algebraextension}
    \alpha_*\, w
    \eqdef 
    \abs {} {p\, q} {\!w\, \big(\abs {} {{w'}\!} {\!\alpha\, (\inl\, {w'})\, p\, q}\big)\, \big(\abs {} {e} {\!\alpha\, (\inr\, e)\, p\, q}\big)} \\
    = 
    \abs {} {p\, q} {\!w\, \big(\abs {} {{w'}\!} {\!{w'}\, p\, q}\big)\, \big(\abs {} {e} {\!\alpha\, (\inr\, e)\, p\, q}\big)}
  \end{equation}
where the second equality holds because $\alpha$ is an 
$\Exc$-algebra and thus $\alpha\, (\ret^{\Exc}\, v) = \alpha\, (\inl\, v) = v$. 


On inspection, it turns out that $\mathtt{try\text{-}catch}$  
corresponds exactly to \citepos{Leino94} and \citepos{Sekerinski12} 
weakest exceptional preconditions for exception handlers.
Furthermore, with $\mathtt{try\text{-}catch}$ we can also put 
\citepos{dm4free} hand-rolled {\em DM4Free} 
exception handlers 
to a common footing. For 
example, we can define their integer division example as 
%
\vspace{-0.17cm}
\begin{lstlisting}
let div_wp (i j:int) = fun p q -> (forall x . j <> 0 /\ x = i / j ==> p x) /\ (forall e . j = 0 ==> q e)  
let div (i j:int) : EXC int (div_wp i j) = if j = 0 then raise div_by_zero_exn else i / j
let try_div (i j:int) : EXC int (fun p q -> forall x . p x) = try_catch (div i j) (fun _ p q -> p 0) (fun _ -> 0) (fun x -> x)
\end{lstlisting}
\vspace{-0.18cm}
where the specification of \ls{try_div} says that it never 
throws an exception, even not \ls{div_by_zero_exn}.


Of course $\mathtt{try\text{-}catch}$ is not the only way to handle exceptions. 
Another common use case is to handle a computation 
in $T A$ into one in $\Id (A + E)$. While this is 
trivial semantically, in a programming language 
where elements of $T$ are considered abstract, it allows 
one to get their hands on the values returned and 
exceptions thrown, analogously to \citepos{dm4free} use of monadic 
reification in {\em DM4Free}. To capture this, we take 
$\mathcal{D} = \EXC= \fib~\theta^{\Exc}$ 
and 
$\mathcal{D}' = \PURE = \fib~\theta^{\Pure}$, 
and define  
\begin{displaymath}
  \begin{array}{@{}l}
    \mathtt{reify} : ~ 
    \EXC\, A\, w \to \PURE\, (A + E)\, \big(\abs {} p {w\, (\abs {} x {p\, (\inl\, x)})\, (\abs {} e {p\, (\inr\, e)})} \big) 
    \\
    \mathtt{reify}\, c \eqdef 
    \handlewithi {\EXC,\PURE}\, c\, \big(\abs {} {e} {\!\ret^{\W^{\Pure}}\, (\inr\, e)}\big)  
    \big(\abs {} {e} {\!\ret^{\PURE}\, (\inr\, e)}\big)
    \big(\abs {} {x} {\!\ret^{\PURE}\, (\inl\, x)}\big)
  \end{array}
\end{displaymath}
%

\ch{How does this relate to the Ynot way of handling exceptions~\cite{ynot-icfp08},
  which is apparently inspired by \citet{BentonKennedy2001Exceptional}?}

\paragraph{Other (non-)examples}
Unfortunately, effect observations discussed in this paper other than
exceptions do not support effect handling. Specifically, we 
are unable to define the $\alpha_*$ operation for these 
effect observations, because it corresponds to attempting to construct
the specification of the handled computation knowing nothing of the
intended specification of the operations.


For IO, we actually know of another specification monad for which 
$\alpha_*$ can be defined, namely, the categorical coproduct of the IO 
and continuation monads \cite{HylandLPP07}, given by 
\[
   \W^{\IO} A \eqdef (A \to \prop) \times ((I \to \W^{\IO} A) \to \prop) \times (O \times \W^{\IO} A \to \prop) \to \prop
\]
Note that compared to the specification monads for IO from 
\autoref{sec:DijkstraMonadsExamples}, the postcondition(s) of $\W^{\IO}$ 
have a tree-like structure that enables one to recover enough information 
to (recursively) define $\alpha_*$.

There are however two major problems with using $\W^{\IO}$ as a 
specification monad. First, $\W^{\IO}$ is not 
well-defined in many categories of interest, such as $\mathsf{Set}$ \cite{HylandLPP07}. 
Second, defining $\W^{\IO}$ type theoretically requires non 
strictly positive inductive types, which leads to inconsistency 
in frameworks with impredicative universes such as Coq and
\fstar\cite{CoquandP88}.

\paragraph{Effect handling for upfront specified operations (2nd approach)}
We now describe an alternative approach to effect handling that avoids the 
above problems by making use of the upfront 
specified operations discussed at the end of
\autoref{sec:algebraiceffects}.
%
%
For simplicity, we assume that we are
handling into a pure computation of type $B$ with a postcondition
${R : B \to \prop}$.  We also assume that the computation to be handled
performs operations $\algop_i$ with the specifications $(P_i, Q_i)$ as
given above, yielding values of type $A$ satisfying some
postcondition $Q$, i.e., it has the type
$\mathcal{D}\;A\;(\lambda p.\;\forall a.\,Q\,a \to p\,a)$. 

The return
clause of the handler then gets to assume that $Q$ holds for its input
but must ensure that $R$ holds of its output. The operation cases of
the handler are more complex. We must first write each operation
clause without specification (i.e., as a function
$I_i \to (O_i \to B) \to B$), and then separately prove that, assuming
that the resumption argument is verified, then the final result is
verified. Note that we must separately program and verify the handler
clauses, contrary to the general methodology for programming with
Dijkstra monads. This is due to the higher-order nature of the
resumption argument.
Putting all this together, we get the following handling construct:
\begin{displaymath}
  \begin{array}{l@{}c@{\hspace{0.2em}}l}
    \mathtt{handle}
    &:  &\mathcal{D}\;A\;(\lambda p.\;\forall a.\,Q\, a \to p\,a) \\
    &\to&((a : A) \to Q\, a \to (b\mathord:B) \times R\, b) \\
    &\to&(h_i : I_i \to (O_i \to B) \to B)_{\algop_i} \\
    &\to&(\forall \mathit{inp}\,k.\, (\forall \mathit{oup}.\, Q_i \langle \mathit{inp}, \mathit{oup} \rangle \to R\,(k\,\mathit{oup})) \to P_i\, \mathit{inp} \to R\,(h_i\,\mathit{inp}\;k))_{\algop_i} \\
    &\to&(b\mathord:B) \times R\, b
  \end{array}
\end{displaymath}

\da{Should some of the $\to$s above be $\Rightarrow$s? 
Same question for the last bit of \autoref{sec:algebraiceffects} as well.}

For the ``always $\true$'' specification of $\choice$, we can write a
handler for it as $\lambda \_\,k.\,k\,\true$, which
yields the trivial proof obligation
${\forall k.\, (\forall b.\, b = \true \to R\,(k\,b)) \to R\,(k\,\true)}$. 
Note that this obligation would not hold if the
handler had relied upon invoking the resumption $k$ with $\false$. 

\newcommand\GenRec{\mathrm{GenRec}}
\newcommand\GenREC{\mathrm{GenREC}}
\newcommand{\call}[0]{\texttt{call}}

We used a variant of this second approach to verify programs with
general recursion in Coq, reconstructing from first principles
\fstar{}'s primitive support for total correctness, as well as its
semantic termination checking~\cite{mumon}.
Following \citet{McBride15}, we can describe a recursive function ${f :
(a:A) \to B\,a}$ by its body ${f_0 : (a:A) \to \GenRec\, (B\, a)}$, where
$\GenRec$ is the free monad on a single operation ${\call : (a:A) \leadsto B\,a}$
and the recursive calls to $f$ are replaced by uses of $\call$.
Given a well-founded order \ls{<<} on $A$, we ask that all
arguments to $\call$ are lower than the \emph{top-level} argument.
More precisely, given an invariant ${inv : (a:A) \to \W^\Pure\,(B\,a)}$ for $f$, we
define a family of effect observations ${\theta_a : \GenRec \to \W^\Pure}$ as described above, i.e., 
such that $\theta_a(\call\,a')$ strengthens $inv\,a'$ with the precondition $a'
\prec a$.
From these $\theta_a$s, we obtain a Dijkstra monad $\GenREC$
together with a handling construct
${\texttt{fix} : ((a:A) \to \GenREC\,(B\,a)\,(inv\,a)) \to (a:A) \to
  \PURE\, (B\,a)\, (inv\,a)}$. We have used this treatment of general 
  recursion to define and verify a simple Fibonacci example.

Compared to the ``specify at handling time'' approach above,
this ``specify upfront'' approach to effect handlers has the advantage that
it works for algebraic effects that involve resumptions. However,
there remain several unresolved questions with this approach,
including handling stateful computations and whether or not it is
possible to program and verify the handler clauses simultaneously to
be more in keeping with the general methodology of Dijkstra monads.

\section{Implementation and Formalization in \fstar{} and Coq}
\label{sec:implementation}

\paragraph{Dijkstra monads in \fstar{}}
We have extended the effect definition mechanism of \fstar{} to support
our more general approach to Dijkstra monads, in addition to the
previous {\em DM4Free} one.
\fstar{} users can now also define Dijkstra monads by providing both
a computational and a specification monad, along with an effect
observation or monadic relation between them, which provides more freedom
in the choice of specifications.
The SM language is not yet implemented in \fstar.
Nevertheless, this extension enables the verification of the examples of
\autoref{sec:DijkstraMonadsExamples}, for which effects such
as nondeterminism and IO were previously out of reach.
Once a Dijkstra monad is defined, the \fstar{} type-checker computes
weakest preconditions exactly as before and uses an SMT solver to
discharge them.
While internally \fstar{} only uses weakest preconditions as
specification monads, it is customary for users to write Hoare-style
pre- and postconditions, for which \fstar{} leverages the adjunction
from \autoref{sec:PrePostWPs}.

\paragraph{Dijkstra monads in Coq}
We have also embedded Dijkstra monads in Coq,
showing that the concept is applicable in languages beyond
\fstar. As with the \fstar implementation, programmers can supply their own
computational and specification monads, with an effect observation or monadic relation
between them. We implemented the base specification monads of
\autoref{sec:basic-spec-monads} and the construction of effect
observations from algebras of monads from 
\autoref{sec:monad-algebras},
thus providing a convenient way to build a specification monad and effect
observation at the same time.
The Coq development also constructs Dijkstra monads from effect
observations and proves their laws hold. Therefore, the
examples from \autoref{sec:DijkstraMonadsExamples} are verified in Coq
``all the way down''.  Verification in Coq follows the general pattern
of (a) writing the specification; (b) writing the program in monadic
style; and then (c) proving the resulting verification conditions using
tactic proofs. The Dijkstra monad setup automatically takes care of
the derivation of the weakest precondition transformer for the program.
\ch{Took this from the rebuttal, but the last sentence is not clear enough.}


\paragraph{Formalization of \SM{} in Coq}
We have formalized the \SM{} language of \autoref{sec:monad-transformers}
in Coq, taking Gallina as the base language $\bL$ and providing an
implementation of the denotation of SM terms and logical relation.
%
\SM{} is implemented using higher-order abstract syntax (HOAS) for the
$\abs{}{x}{t}$ binders and De Bruijn indices for the $\abs{\diamond}{x}{t}$ ones.
%
%
We build the functional version of the logical relation for a
covariant type $C$, but omit the linear type system.
Instead, the Coq version of \autoref{thm:MonadTransf} assumes a
semantic hypothesis requiring that the denotation of bind is
homomorphic, and using which it then derives the full monad
transformer (including all the laws).
\ifappendix
\autoref{sec:linearTyping} gives a paper proof that our syntactic
linearity condition entails this semantic hypothesis.
\else
A paper proof that our syntactic linearity condition entails the semantic
hypothesis can be found in the online appendix.
\fi
%
%
%

\ch{We should also explain that we can generate monad transformers (or
  pseudo) for all the example SM monads from Section 4.2. And we
  should explain to what extent these generated monad transformers are
  used in the verification examples.}

\section{Related work}
\label{sec:related}

This work directly builds on prior work on Dijkstra monads
in \fstar{}~\cite{fstar-pldi13, mumon}, in particular the {\em
  DM4Free} approach~\cite{dm4free}, which we discussed in detail in
\autoref{sec:intro} and \autoref{sec:predicateTransformers}.
Our generic framework has important advantages:
(1) it removes the previous restrictions on the computational monad;
(2) it gives much more flexibility in choosing the specification monad
and effect observation;
(3) it builds upon a generic dependent type theory, not on \fstar{} in particular.
\ch{Do we want to add the treatment of algebraic operations and
  handlers and to this list? In general, this list should be longer
  (see intro).}

\citet{Jacobs15} studies adjunctions between state transformers and
predicate transformers, obtaining 
a class of specification monads 
from the state monad transformer and an abstract notion of logical
structures.
He gives abstract conditions for the existence of such specification
monads and of effect observations.
\citet{HasuoGWP15} builds on the state-predicate adjunction of Jacobs
to provide algebra-based effect observations (in the style of 
\autoref{sec:monad-algebras})
for various computation and specification monads.
Our work takes inspiration from this, but provides a more concrete
account focused on covering the use of Dijkstra monads for
program verification.
In particular, we provide concrete recipes for building specification monads
useful for practical verification (\autoref{sec:metalanguage}).
Finally, we show that our Dijkstra monads\iffull\footnote{We formally define
  Dijkstra monads as monad-like structures indexed by a specification
  monad (\autoref{sec:dijkstraMonads}), while what \citet{Jacobs15}
  calls ``Dijkstra monads'' corresponds to our specification monads.}\fi{}
are equivalent to the monad morphisms built in these earlier works.



\citet{Katsumata14} uses graded monads to give semantics to
type-and-effect systems, introduces effect observations as monad
morphisms, and constructs graded monads out of 
effect observations by restricting the specification monads to their value
at $\one$.
We extend his construction to Dijkstra monads, showing that they are
equivalent to effect observations, and unify Katsumata's two 
notions of algebraic operation.
A graded monad can intuitively be seen as a non-dependent version of a
Dijkstra monad (a monad-like structure indexed by a monoid rather than a monad) 
but providing a unifying formal account is not
completely straightforward.
%
The framework of \citet{kaposi2019signatures} is a promising candidate for such
a unifying account that might provide an abstract proof of the results of
\autoref{sec:dijkstraMonads}
(see \ifcamera the online appendix\else
\autoref{sec:displayedAlgebras}\fi); we leave a full investigation as future work.

\citet{Katsumata13} gives a semantic account of \citet{Lindley2005}'s
$\top\top$-lifting, a generic way of
lifting relations on values to relations on monadic computations,
parameterized by a basic notion of relatedness at a fixed type. Monad
morphisms $MA \to ((A \to \prop) \to \prop)$, as used to generate
Dijkstra monads in \autoref{sec:DijkstraMonadsExamples}, are also unary
relational liftings $(A \to \prop) \to (MA \to \prop)$, and could be
generated by $\top\top$-lifting. Further, binary relational liftings
could be used to generate monadic relations that yield Dijkstra monads
by the construction in \autoref{sec:dijkstraMonads}. In both cases,
what is specifiable about the underlying computation would be
controlled by the chosen basic notion of relatedness.

\citet{RauchGS16} provide a generic verification framework for
{\em first-order} monadic programs.
Their work is quite different from ours, even beyond the
restriction to first-order programs, since their specifications are
``innocent'' effectful programs, which can observe the computational
context (\EG state), but not change it.
This introduces a tight coupling between computations and
specifications, while we provide much greater flexibility
through effect observations.
In fact, we can embed their work into ours, since their notion of
weakest precondition gives rise to an effect observation.

Generic reasoning about computational monads dates back 
to \citepos{Moggi89} seminal work, who proposes an
embedding of his computational metalanguage into higher-order logic.
Pitts \& Moggi's evaluation logic~\cite{pitts1991evaluation, Moggi95}
later introduces modalities to reason about the result(s)
of computations, but not about the computational context.
\citet{PlotkinP08} propose a generic logic for algebraic effects that
encompasses Moggi's computational $\lambda$-calculus, evaluation logic,
and Hennesy-Milner logic, but does not extend to Hoare-style
reasoning for state. 

\citet{SimpsonV18} and 
\citet{MatacheS19} explore logics for algebraic effects
by specifying the effectful behaviour of algebraic operations using 
a collection of effect-specific modalities instead of 
equations. Their modalities are closely related to how we 
derive effect observations $\theta : \M \to \W^{\Pure}$ and 
thus program specifications from 
$M$-algebras on $\prop$ in \autoref{sec:monad-algebras}, 
as intuitively the conditions they impose on their modalities 
ensure that these can be collectively treated as an 
$M$-algebra on $\prop$. 
In recent work concurrent to ours, \citet{Voorneveld19} 
studies a logic based on quantitative modalities by considering truth 
objects richer than $\prop$, including $S \to \prop$ for stateful and 
$[0,1]$ for probabilistic computation. While the state case we already 
briefly discussed in the context of deriving effect observations 
in \autoref{sec:monad-algebras}, it could be 
interesting to see if these ideas can be used to enable Dijkstra 
monads to be also used for reasoning about probabilistic programs.


In another recent concurrent work, \citet{SwierstraB19} study the predicate
transformer semantics of monadic programs with exceptions, state,
non-determinism, and general recursion. Their predicate transformer
semantics appears closely related to our effect observations, and their
compositionality lemmas are similar to our monad morphism laws.
We believe that some of their examples of performing verification directly
using the effect observation (instead of our Dijkstra monads), could
be easily ported to our framework.
Their goal, however, is to start from a specification and incrementally write
a program that satisfies it, in the style of the refinement
calculus~\cite{Morgan94}.
It could be an interesting future work direction to build a unified framework for
both verification and refinement, putting together the ideas of both works.

\da{In many ways what they are doing with derivations of partial programs 
also looks a lot like internalising typechecking rules for computations in 
Dijkstra monads. Not sure we want to say anything about this.}

\da{Also, we probably should say something about their attempt at 
partiality / general recursion as well.}

\ifdraft
\ch{More works to possibly consider:}
\begin{itemize}
\item FreeSpec, Thomas Letan et al~\cite{LetanRCH18}. Kenji: Thomas
  presented his work at the last Coq user meetup, and I discussed a
  bit with him. His approach is strictly first order for now and from
  what I understood he was really attached to proving things extrinsically.
\item From C to interaction trees: \cite{KohLLXBHMPZ19} --
  Both Matthieu Sozeau and Nik Swamy pointed us to this.
  And it's also related to our web server example with Cezar.
\item From \cite{SwierstraB19}: ``\citet{Boulme07} explores the
  possibility of a monadic, shallow embedding, by defining the
  Dijkstra Specification Monad. Where Boulmé’s work explores the
  lattice theoretic structure and fixpoint theory of refinement
  relation in Coq, it lacks custom refinement such as those presented here.''
\item Compare to CFML and characteristic formulas~\cite{Chargueraud2011CF}
  (CH: Arthur sent me an old unpublished note trying to compare
  characteristic formulas to WPs in $(X\to\prop)\to\prop$)
\item Danel wrote: Another piece of related work example that Max New
  pointed out to us in Shonan is the properties-indexed monad of him
  and others~\cite{McCarthyFNFF16}. (Bob mentioned in his summary of
  the Shonan meeting during our last call)
\item Danel wrote: Yet another piece of related work to be aware of is
  Alex Simpson's and Niels Voorneveld's work on specifying the
  behaviour of algebraic operations using modalities (an
  under-consideration extended version of their ESOP 2018 paper
  \cite{SimpsonV18}). Specifically, their decomposability requirements
  on a collection of modalities roughly state that they collectively
  form a monad algebra (akin to how we use algebras to specify effect
  observations).

  Niels has recently also submitted a paper (and gave a talk about it
  yesterday) where he generalises his and Alex's work by considering
  "quantitative modalities", i.e., modalities where the collection of
  propositions Prop is not necessarily Bool but can be more
  interesting (e.g, S -> Bool for state, [0,1] for probability and
  expectations, etc). One motivation for this was combinations of
  effect, particularly non-determinism and probability which didn't
  work out as they wanted in the boolean case.

  (Of course, in their papers their main motivation is to use the
  modalities to study the notion of contextual equivalence for
  effectful programs, not to use them immediately for specifications,
  though this is something on our todo list with Alex in the near
  future (in part in my MSCA).)

  Also, closely related to this is \citet{MatacheS19}, also about using
  modalities to specify the behaviour of algebraic operations.

  Niels has also put his quantitative modalities work up on
  arXiv~\cite{Voorneveld19} (that covers things like probability,
  compared to his and Alex's original ESOP
  paper~\cite{SimpsonV18}). This work is very related to defining
  predicate transformers by the means of monad algebras on (fancy
  notions of) prop.

\item \cite{ArthanMMO09}: At a high-level this paper focus on
  circuit-like/flowchart programs (in particular they do not have
  function types) whose semantic is given in an arbitrary traced
  monoidal category. They introduce verification functors as
  (monoidal, traced) functors from a symmetric traced monoidal
  category to the symmetric traced monoidal category of preordered
  sets and monotonic relations (these functors correspond somewhat to
  effect observations). As they remark the main difficulty is then to
  provide such functors and they do provide 4 examples: partial
  correctness for stateful computations (using strongest
  postcondition), total correctness for stateful computations,
  step-counting (specification giving information about the number of
  step taken) and a taylor expansion based system for programs
  computing analytic functions as streams.
\item See Zulip threads: `Probabilities'
\item One Monad to Prove Them All~\cite{DylusCT19}
\end{itemize}
\fi

\section{Conclusion and Future Work}
\label{sec:conclusion}

This work proposes a general semantic framework for verifying programs
with arbitrary monadic effects using Dijkstra monads obtained from
effect observations, which are monad morphisms from a computation to a
specification monad.
This loose coupling between the computation and the specification
monad provides great flexibility in choosing the effect observation
most suitable for the verification task at hand.
We show that our ideas are general by applying them to both Coq and
\fstar{}, and we believe that they could also be applied to other
dependently-typed \iffull programming \fi
languages\iffull, such as Agda~\cite{agda},
Lean~\cite{MouraKADR15}, NuPRL~\cite{nuprl}, Cedille~\cite{DiehlFS18},
or even Dependent Haskell~\cite{WeirichVAE17}\fi.

In the future, we plan to apply our framework to further computational
effects, such as probability~\cite{Giry}. 
It would also be interesting to investigate richer
specification monads, for instance instrumenting $W^\St$ with
information about framing, in the style of separation
logic\iffull~\cite{Reynolds2002SL}\fi.
Another interesting direction is to extend Dijkstra monads and our
semantic framework to relational reasoning, in order to obtain
principled semi-automated verification techniques for properties of
multiple program executions (\EG noninterference) or of multiple
programs (\EG program equivalence).
As a first step, we plan to investigate switching from (ordered)
monads to (ordered) relative monads for our specifications, by making return and
bind work on pairs of values.
\ifsooner
While this will probably not be enough for recovering the asymmetric
rules of relational Hoare logic~\cite{benton04relational}, it might be
enough for verifying properties such as secret independence in
cryptographic code~\cite{vale,AlmeidaBBDE16}.
\ch{Just that for this we still need a way to {\em prevent} branching
  on secrets, and we wouldn't have a crisp notion of secret, would we?
  And Kenji has ideas how to go beyond this?}
\fi

Finally, the SM language provides a general way to obtain
correct-by-construction monad transformers, which could be useful in
many other settings, especially within proof assistants.
%
%
Categorical intuitions also suggest potential \iffull principled \fi
extensions of \SM{}, \EG some form of refinement types.






\ifanon\else
\begin{acks}                            

  We thank Nikhil Swamy and the anonymous reviewers for their feedback.\ch{Anyone else?}
  This work was, in part, supported by the
  \ifcamera
  \grantsponsor{1}{ERC}{https://erc.europa.eu/}
  \else
  \grantsponsor{1}{European Research Council}{https://erc.europa.eu/}
  \fi
  under ERC Starting Grant SECOMP (\grantnum{1}{715753}).
  Guido Mart\'inez' work was done, in part, during an internship at Inria Paris
  funded by the Microsoft Research-Inria Joint Centre.
  This material is based upon work supported by the 
  \grantsponsor{2}{Air Force Office of Scientific Research}{https://www.wpafb.af.mil/afrl/afosr/} 
  under award number \grantnum{2}{FA9550-17-1-0326}.
\end{acks}
\fi

\ifappendix
\appendix

\section{Appendix}
\subsection{Dijkstra monads as displayed algebras, relation to graded monads}
\label{sec:displayedAlgebras}
\newcommand{\bC}[0]{\mathbb{C}}
\newcommand{\metaArr}[0]{\,\hat{\Rightarrow}\,}
\newcommand{\bU}[0]{\mathbb{U}}
\newcommand{\El}{\mathbb{E}\mathrm{l}}
\newcommand{\Dom}[0]{\mathrm{Dom}}

The framework developed by \citet{kaposi2019signatures} can be used to
capture the notion of Dijkstra monad in a more concise way: they can
be seen as display algebras of a signature $\Sigma^{mon}$. Concretely,
Kov{\'a}cs proposed (in private communication) the following signature 
to capture Dijkstra monads:
\newcommand{\Set}[0]{\mathrm{Set}}
\[
\begin{array}{lcl}
  \M & : & \Set \metaArr \bU,\\
  \ret & : & (A:\Set) \metaArr A \metaArr \El (\M\,A),\\
  (-)^{\dagger} & : & (A B : \Set) \metaArr (\Pi_{A} \M\,B) \Rightarrow \M\,A \Rightarrow \El(\M\,A),\\
  \texttt{bind-ret}&:&(A : \Set) \metaArr (m:\M\,A) \Rightarrow \Id~(\M\,A)~(\ret^{\dagger}\;m)~ m,\\
  \texttt{ret-bind}&:&(A\; B : \Set)(x:A) \metaArr (f : \Pi_{A}\M\,B)  \Rightarrow \Id~(\M\,B)~(f^{\dagger}\;(\ret\; x))~(f\;x),\\
  \texttt{bind-assoc}&:&(A\; B\; C : \Set)\metaArr (m:\M A) (f :\Pi_{A} \M\,B)(g : \Pi_{B}\M\,C) \\
                         & & \quad\Rightarrow \Id~(\M\,C)~(g^{\dagger}\;(f^{\dagger}\;m))~((\abs {} x {g^{\dagger}(f x)})^{\dagger}\;m)
  
\end{array}
\]
Here $\Pi$ is here the constructor for infinitary ($A$-indexed for any $\Set~A$)
products. Taking models of this signature in the CwF of sets and families gives
monads on $\Set$, and unary logical predicate gives the notion of
Dijkstra monad without weakening.

The general equivalence between morphisms of context and 

Taking models in the CwF of preordered
sets, monotonic functions and fibrations of preorders gives the notion of
ordered monad and the unary logical predicate should provide the notion of
Dijkstra monad with weakening.

\km{TODO:Rework this !}
Dijkstra monads and part of their equivalence to effect observations can be
alternatively presented as an instance of a general result about displayed
algebras and categories with families.
Indeed there is a signature $\Sigma^{mon}$ in the sense of \citet{kaposi2019signatures},
generating a category with families of $\Sigma^{mon}$-algebras with monads as
algebras and Dijkstra monads as displayed algebras.
This category with families verifies the hypothesis of Prop. 9 of
\cite{ClairambaultD14}; thus the category of Dijkstra monads over a fixed monad
$\W$ is equivalent to the slice category over $\W$, that is monad morphisms with target $W$.



\section{Typing rules of \SM{}}

\autoref{fig:DMTypingAppendix} presents the typing rules of \SM{}, many of which 
we omitted in the body of the paper.

\begin{figure}
  \begin{mathpar}
    \inferrule{\Gamma \vdL A}{\Gamma \vdD \bM A}
    \and
    \inferrule{\Gamma \vdD C_{i}\quad i=1,2}{\Gamma \vdD C_{1} \times C_{2}}
    \and
    \inferrule{\Gamma \vdL A \\ \Gamma, x:A \vdD C}{\Gamma \vdD (x:A) \to C}
    \and
    \inferrule{\Gamma \vdD C_{i}\quad i=1,2}{\Gamma \vdD C_{1} \to C_{2}}
    \\
    \inferrule{ }{A \vdD \ret~:~ A \to \bM A}
    \and
    \inferrule{ }{A,B \vdD \bind~:~\bM A \to (A \to \bM B) \to \bM B}
    \and
    \inferrule{ }{\Gamma_{1},x:C,\Gamma_{2} \vdash x : C}
    \\
    \inferrule{\Gamma, x:A \vdD t : C}{\Gamma \vdD \abs {} x t : (x:A) \to C}
    \and
    \inferrule
    {\Gamma \vdL u : A\\ \Gamma \vdD t : (x:A)\to C}
    {\Gamma \vdD \app t u : \subst C u x}
    \and
    \inferrule{\Gamma, x:C_{1} \vdD t : C_{2}}{\Gamma \vdD \abs \diamond x t : C_{1} \to C_{2}}
    \\
    \inferrule
    {\Gamma \vdD \ttwo : C_{1}\\ \Gamma \vdD \tone : C_{1} \to C_{2}}
    {\Gamma \vdD \app \tone \ttwo : C_{2}}
    \and
    \inferrule
    {\Gamma \vdD t_{i} : C_{i}}
    {\Gamma \vdD \pair \tone \ttwo : C_{1} \times C_{2}}
    \and
    \inferrule{\Gamma \vdD t : C_{1}\times C_{2}}{\Gamma \vdD \proj i\; t : C_{i}}
  \end{mathpar}
  \vspace{-1em}
  \caption{Typing rules for \SM{}}
  \label{fig:DMTypingAppendix}
\end{figure}

\section{Proof of monotonicity of the denotations from \SM{}}

We start by expliciting the denotation from \SM{} to $\bL$ on terms.
We write $\elab{\M}{t}^{\gamma} : \elab{\M}{C}$ for the denotation of the term $\Gamma \vdash t
: C$ with respect to a monad $\M$ and substitution $\gamma : \elab{\M}{\Gamma}$.
\begin{mathpar}
  \elab{\M}{\ret}^{\gamma} = \ret^{\M}
  \and
  \elab{\M}{\ret}^{\gamma} = \bind^{\M}
  \and
  \elab{\M}{\pair \tone \ttwo}^{\gamma} = \pair{\elab{\M}{\tone}^{\gamma}}{\elab{\M}{\ttwo}^{\gamma}}
  \and
  \elab{\M}{\proj{i}\,t}^{\gamma} = \proj{i} \elab{\M}{t}^{\gamma}
  \and
  \elab{\M}{x}^{\gamma} = \gamma(x)
  \and
  \elab{\M}{\abs{\diamond}{x^{C_{1}}}{t}} = \abs{}{x^{\elab{\M}{C_{1}}}}{\elab{\M}{t}^{\gamma[x:=x]}}
  \and
  \elab{\M}{\app \tone \ttwo}^{\gamma} = \app{\elab{\M}{\tone}^{\gamma}}{\elab{\M}{\ttwo}^{\gamma}}
  \and
  \elab{\M}{\abs{}{x^{A}}{t}}^{\gamma} = \abs{}{x^{A}}{\elab{\M}{t}^{\gamma[x:=x]}}
  \and
  \elab{\M}{\app t u}^{\gamma} = \app{\elab{\M}{t}^{\gamma}}u
\end{mathpar}

\begin{figure}
  \begin{mathpar}
   \inferrule{}{\Gamma \vdD \bind~(\ret\,x)~f \equiv f\,x} 
   \and
   \inferrule{}{\Gamma \vdD \bind~m~\ret \equiv m} 
   \and
   \inferrule{}{\Gamma \vdD \bind~m~(\abs{}{x}{\bind~(f\,x)~g}) \equiv \bind~(\bind~m~f)~g} 
   \and
   \inferrule{}{\Gamma \vdD\proj{i}~\pair\tone\ttwo \equiv t_{i}}
   \and
   \inferrule{}{\Gamma \vdD\pair{\proj{1}~t}{\proj{2}~t}\equiv t}
   \and
   \inferrule{}{\Gamma \vdD(\abs{}{x}{t})\,u \equiv t\{u/x\}}
   \and
   \inferrule{}{\Gamma \vdD\abs{}{x}{t\,x} \equiv t}
   \and
   \inferrule{}{\Gamma \vdD(\abs{\diamond}{x}{\tone})\,\ttwo \equiv \tone\{\ttwo/x\}}
   \and
   \inferrule{}{\Gamma \vdD\abs{\diamond}{x}{t\,x} \equiv t}
   \\
   + \text{reflexivity, symmetry, transitivity and congruence for all term constructors}
  \end{mathpar}
  \vspace{-1.5em}
  \caption{Equational theory for \SM{}}
  \label{fig:EqTheorySM}
\end{figure}

We prove the two missing lemmas in the Coq development to extend to the
case when the order on the monad $\M$ used for the denotation is not discrete.
From these lemmas, we obtain that applying a monad transformer defined via an
internal monad in \SM{} to a specification monad is still a specification monad
and that the $\lift$ are monotonic.

\begin{theorem}[Monotonicity of denotation]
  Let $\M$ be an ordered monad, $\Delta ;\Gamma \vdD t : C$ a term in \SM{},
  $\vdL \delta : \Delta$ a substitution for the $\bL$ context $\Delta$, $(\vdL
  \gamma_{i} : \elab{\M}{\Gamma})_{i=1,2}$ substitutions for the \SM{} context $\Gamma$ such that $\forall (x:C_{0}) \in \Gamma.\,
  \gamma_{1}(x) \leq^{C_{0}} \gamma_{2}(x)$. Then $\elab{\M}{t}^{\delta;\gamma_{1}}
  \leq^{C} \elab{\M}{t}^{\delta;\gamma_{2}}$.
\end{theorem}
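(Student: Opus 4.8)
The plan is to prove the statement by structural induction on the typing derivation $\Delta ;\Gamma \vdD t : C$. No strengthening is needed: the statement is already phrased in terms of a \emph{pair} of pointwise $\leq$-related \SM{}-substitutions $\gamma_{1}, \gamma_{2}$ (rather than a single substitution plus a ``reflexivity'' clause), which is exactly the generality required to push the induction through $\lambda^{\diamond}$-abstractions, where the two copies of the abstraction get fed related but not necessarily equal arguments. Throughout, the $\bL$-substitution $\delta$ stays fixed and shared, reflecting that only the $\bM$-decorated types carry order structure; and we use implicitly the (already established) well-definedness of the denotation $\elab{\M}{-}$, in particular that it lands in the $\Sigma$-types interpreting the arrow types of \autoref{fig:ElabDM}, so that each $\leq^{(-)}$ below is applied to legitimate elements.

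The base cases are where the hypotheses actually enter. If $t$ is a variable $x$ with $(x:C) \in \Gamma$, then $\elab{\M}{x}^{\delta;\gamma_{i}} = \gamma_{i}(x)$ and the conclusion $\gamma_{1}(x) \leq^{C} \gamma_{2}(x)$ is the assumption. If $t = \retM$ (at type $A \to \bM A$), both sides denote $\retM$, and $\retM \leq^{(x:A)\to\bM A} \retM$ unfolds to $\forall (x{:}A).\ \retM\,x \leq^{\bM A} \retM\,x$, which is reflexivity of the preorder $\leq^{\M}_{A}$. The one load-bearing base case is $t = \bindM$ (at type $\bM A \to (A \to \bM B) \to \bM B$): both sides denote $\bindM$, and $\bindM \leq \bindM$ at this nested arrow type unfolds, after two layers, to precisely $\forall (m_{1} \leq^{\bM A} m_{1}').\ \forall f, f'.\ (\forall x.\ f\,x \leq^{\bM B} f'\,x) \Rightarrow \bindM\,m_{1}\,f \leq^{\bM B} \bindM\,m_{1}'\,f'$, which is exactly the monotonicity law built into the definition of an ordered monad. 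This is the single point where the assumption that $\M$ is ordered is used (and in the relational variant it is likewise where the ordered-monad law for the specification monad $\W$ is used).

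The inductive cases are then each settled by unfolding the matching clause of $\elab{\M}{-}$ and of $\leq^{(-)}$ and invoking the induction hypothesis:
\begin{itemize}[nosep,label=$\vartriangleright$,leftmargin=*]
\item $t = \pair{\tone}{\ttwo} : C_{1} \times C_{2}$: $\leq^{C_{1}\times C_{2}}$ is componentwise, so apply the IH to $\tone$ and $\ttwo$.
\item $t = \proj i\, t' : C_{i}$: apply the IH to $t'$ and take the $i$-th conjunct.
\item $t = \app{t'}{u}$ with $u \in \ii{Term}_{\bL}$ and $t' : (x:A) \to C$: the IH for $t'$ gives $\forall (x{:}A).\ \elab{\M}{t'}^{\delta;\gamma_{1}}\,x \leq^{C[x]} \elab{\M}{t'}^{\delta;\gamma_{2}}\,x$; instantiate at the $\bL$-value of $u$ under $\delta$.
\item $t = \app{\tone}{\ttwo} : C_{2}$ with $\tone : C_{1} \to C_{2}$: the IH for $\tone$ gives $\forall (m_{1} \leq^{C_{1}} m_{1}').\ \elab{\M}{\tone}^{\delta;\gamma_{1}}\,m_{1} \leq^{C_{2}} \elab{\M}{\tone}^{\delta;\gamma_{2}}\,m_{1}'$, and the IH for $\ttwo$ gives $\elab{\M}{\ttwo}^{\delta;\gamma_{1}} \leq^{C_{1}} \elab{\M}{\ttwo}^{\delta;\gamma_{2}}$; combine by taking $m_{1} := \elab{\M}{\ttwo}^{\delta;\gamma_{1}}$ and $m_{1}' := \elab{\M}{\ttwo}^{\delta;\gamma_{2}}$.
\item $t = \abs{}{x^{A}}{t'} : (x:A) \to C$: it suffices to show $\forall (a{:}A).\ \elab{\M}{t'}^{\delta[x{:=}a];\gamma_{1}} \leq^{C[a]} \elab{\M}{t'}^{\delta[x{:=}a];\gamma_{2}}$, which is the IH for $t'$ with the shared $\bL$-substitution extended identically on both sides.
\item $t = \abs{\diamond}{x^{C_{1}}}{t'} : C_{1} \to C_{2}$: it suffices to show $\forall (m_{1} \leq^{C_{1}} m_{1}').\ \elab{\M}{t'}^{\delta;\gamma_{1}[x{:=}m_{1}]} \leq^{C_{2}} \elab{\M}{t'}^{\delta;\gamma_{2}[x{:=}m_{1}']}$, which is the IH for $t'$ applied to $\gamma_{1}[x{:=}m_{1}]$ and $\gamma_{2}[x{:=}m_{1}']$; these remain pointwise $\leq$-related since the originals are and $m_{1} \leq^{C_{1}} m_{1}'$.
\end{itemize}

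The main obstacle is not conceptual but bookkeeping: the $\bind$ base case requires patiently unwinding the two nested function-type orderings until the ordered-monad axiom becomes literally visible, and the $\lambda^{\diamond}$ case is the reason the statement must quantify over two related substitutions rather than assert a ``reflexivity'' statement — a single-substitution induction hypothesis would not supply what is needed when the abstraction body is evaluated at two distinct (merely related) arguments. Every other case is a one-line unfolding.
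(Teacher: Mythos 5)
Your proof is correct and follows essentially the same route as the paper's: induction on the typing derivation, with the ordered-monad monotonicity of $\bindM$ discharging the $\bind$ case and the two related substitutions extended with related arguments in the $\abs{\diamond}{x}{t}$ case. The only cosmetic difference is that you spell out the variable case explicitly (which the paper leaves implicit), so nothing further is needed.
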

\begin{proof}
  By induction on the typing derivation of $t$:
  \begin{description}
  \item[Case $t = \ret_{A}: A \to \bM A$,] by reflexivity
    \[\elab{\M}{\ret_{A}}^{\delta;\gamma_{1}} = \retM_{A}\leq^{A
      \to \bM A} \retM_{A} = \elab{\M}{\ret_{A}}^{\delta;\gamma_{2}}\]
\item[Case $t = \bind_{A,B}:(A \to \bM B) \to (\bM A \to \bM B)$,] by reflexivity, that holds
    because $\bind^{M}$ is monotonic
  \[\elab{\M}{\bind_{A,B}}^{\delta;\gamma_{1}} = \bindM_{A,B} \leq^{(A \to \bM
      B) \to (\bM A \to \bM B)} \bindM_{A,B} = \elab{\M}{\bind_{A,B}}^{\delta;\gamma_{2}}\]    
\item[Case $t = \pair \tone \ttwo : A \times B$,] by induction
  \begin{mathpar}
    {\elab\M\tone}^{\delta;\gamma_{1}} \leq^{A} {\elab\M\tone}^{\delta;\gamma_{2}} 
    \and
    {\elab\M\ttwo}^{\delta;\gamma_{1}} \leq^{B} {\elab\M\ttwo}^{\delta;\gamma_{2}} 
  \end{mathpar}
  so
  \[
    {\elab\M{\pair\tone\ttwo}}^{\delta;\gamma_{1}} =
    \pair{{\elab\M\tone}^{\delta;\gamma_{1}}}{{\elab\M\ttwo}^{\delta;\gamma_{1}}}
    \leq^{A\times B}
    \pair{{\elab\M\tone}^{\delta;\gamma_{2}}}{{\elab\M\ttwo}^{\delta;\gamma_{2}}}
    = {\elab\M{\pair\tone\ttwo}}^{\delta;\gamma_{2}}
  \]
\item[Case $t = \proj i t' : A_{i}$,] by induction and extensionality
  \[\pair{\proj 1 {\elab\M{t'}^{\delta;\gamma_{1}}}}{\proj 2
    {\elab\M{t'}^{\delta;\gamma_{1}}}}
  = \elab\M{t'}^{\delta;\gamma_{1}} \leq^{A_{1} \times A_{2}}
  \elab\M{t'}^{\delta;\gamma_{2}}
  =
  \pair{\proj 1 {\elab\M{t'}^{\delta;\gamma_{2}}}}{\proj 2
    {\elab\M{t'}^{\delta;\gamma_{2}}}}
\]
so \[\proj i {\elab\M{t'}^{\delta;\gamma_{1}}} \leq^{A_{i}} \proj i {\elab\M{t'}^{\delta;\gamma_{2}}}\]
\item[Case $t = \abs{}xt : (x:A) \to C$,] by induction for any $v:A$,
  \[\elab\M{t'}^{\delta[x:=v];\gamma_{1}} \leq^{C\{v/x\}}
    \elab\M{t'}^{\delta[x:=v];\gamma_{2}} \]
  we conclude by reduction since
  \[\elab\M{\abs{}x{t'}}^{\delta;\gamma_{1}}\,v =
    (\abs{}y{\elab\M{t'}}^{\delta[x:=y];\gamma_{1}})\,v =
    \elab\M{t'}^{\delta[x:=v];\gamma_{1}}
  \]
  
\item[Case $t = {t'}\, v:C\{v/x\}$ ,] by induction
  \[\forall v_{0}:A.\, \elab\M{t'}^{\delta;\gamma_{1}}\,v_{0} \leq^{C\{v_{0}/x\}} \elab\M{t'}^{\delta;\gamma_{2}}\,v_{0} \]
  so
  \[\elab\M{{t'}\,v}^{\delta;\gamma_{1}} = \elab\M{t'}^{\delta;\gamma_{1}}\,v
    \leq^{C\{v/x\}} \elab\M{t'}^{\delta;\gamma_{2}}\,v = \elab\M{{t'}\,v}^{\delta;\gamma_{2}}\]
 
\item[Case $t= \abs\diamond x t' : C_1 \to C_2$,] for any $m_{1} \leq^{C_{1}}
  m_{2}$, $\gamma_{1}[x:=m_{1}] \leq^{\Gamma,x:C_{1}} \gamma_{1}[x:=m_{2}]$ and by induction 
  \[\elab\M{t'}^{\delta;\gamma_{1}[x:=m_{1}]} \leq^{C_{2}}\elab\M{t'}^{\delta;\gamma_{2}[x:=m_{2}]}\]
  and we conclude since for $i=1,2$
  \[({\elab\M{\abs\diamond{x}{t'}}^{\delta;\gamma_{i}[x:=y]}})\,m_{i} =
    (\abs{}y{\elab\M{t'}^{\delta;\gamma_{i}[x:=y]}})\,m_{i} =
    \elab\M{t'}^{\delta;\gamma_{i}[x:=m_{i}]}
 \]
 
\item[Case $t=\tone\,\ttwo:C_{2}$,] by induction hypothesis applied to $\ttwo :C_{1}$,
  \[\elab\M\ttwo^{\delta;\gamma_{1}} \leq^{C_{1}}
    \elab\M\ttwo^{\delta;\gamma_{2}}\]
  so by induction hypothesis applied to $\tone : C_{1} \to C_{2}$
  \[\elab\M\tone^{\delta;\gamma_{1}}~\elab\M\ttwo^{\delta;\gamma_{1}} \leq^{C_{2}}
    \elab\M\tone^{\delta;\gamma_{2}}~\elab\M\ttwo^{\delta;\gamma_{2}}\]
\end{description}
\end{proof}

\begin{theorem}[Monotonicity of relational interpretation]
  Let $\Delta \vdD C\,type$, $\M_{1}, \M_{2}$ two ordered monads and
  $(\mathcal{R}_{A})_{A}$ a family of \emph{monotonic} relations
  $R_{A} : \M_{1}\,A \times \M_{2}\,A \to \prop$, that is $R_{A}$ is an ideal
  wrt the order on $\M_{1}\,A \times \M_{2}\,A$, then
  $\elabrel{\M_{1}}{\M_{2}}{R}{C}$ is monotonic.
\end{theorem}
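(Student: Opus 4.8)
The plan is to prove the statement by induction on the structure of the \SM{} type $C$ (equivalently, on the derivation of $\Delta \vdD C\,type$), showing that $\elabrelbase{C}$ is an ideal with respect to the product preorder ${\leq^{C}}\times{\leq^{C}}$ on $\elab{\M_{1}}{C}\times\elab{\M_{2}}{C}$, i.e.\ that whenever $m_{1}\,\elabrelbase{C}\,m_{2}$, $n_{1}\leq^{C}m_{1}$, and $n_{2}\leq^{C}m_{2}$, we also have $n_{1}\,\elabrelbase{C}\,n_{2}$. Along the way I will use the (standard) fact that each $\leq^{C}$ is a preorder, which can be established by a parallel induction or taken as part of the well-formedness of $\elab{\M_{i}}{C}$ as an ordered object. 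The base case $C=\bM A$ is immediate: there $\elabrelbase{\bM A}=R_{A}$ and $\leq^{\bM A}={\leq^{\M_{i}}_{A}}$, so the claim is exactly the hypothesis that the family $(R_{A})_{A}$ is monotonic, i.e.\ each $R_{A}$ is an ideal for ${\leq^{\M_{1}}_{A}}\times{\leq^{\M_{2}}_{A}}$.

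For the product case $C=C_{1}\times C_{2}$, both $\elabrelbase{C_{1}\times C_{2}}$ and $\leq^{C_{1}\times C_{2}}$ are defined componentwise, so the claim follows by applying the two induction hypotheses to the respective components. The dependent function case $C=(x:A)\to C'$ is analogous: both the relational interpretation and the order are pointwise statements universally quantified over $x:A$, so for each such $x$ I apply the induction hypothesis for $C'[x]$ to the instances at $x$ and conclude.

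The non-dependent function case $C=C_{1}\to C_{2}$ is the crux, and the only one exploiting the precise shape of the function-space order, namely $f\leq^{C_{1}\to C_{2}}f'$ iff $\forall(m_{1}\leq^{C_{1}}m_{1}').\,f\,m_{1}\leq^{C_{2}}f'\,m_{1}'$. Assume $f_{1}\,\elabrelbase{C_{1}\to C_{2}}\,f_{2}$, $g_{1}\leq^{C_{1}\to C_{2}}f_{1}$, and $g_{2}\leq^{C_{1}\to C_{2}}f_{2}$; to show $g_{1}\,\elabrelbase{C_{1}\to C_{2}}\,g_{2}$, take any $m_{1}\,\elabrelbase{C_{1}}\,m_{2}$ and aim for $g_{1}\,m_{1}\,\elabrelbase{C_{2}}\,g_{2}\,m_{2}$. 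Instantiating $g_{1}\leq^{C_{1}\to C_{2}}f_{1}$ with the reflexivity witness $m_{1}\leq^{C_{1}}m_{1}$ yields $g_{1}\,m_{1}\leq^{C_{2}}f_{1}\,m_{1}$, and likewise $g_{2}\,m_{2}\leq^{C_{2}}f_{2}\,m_{2}$; moreover $f_{1}\,m_{1}\,\elabrelbase{C_{2}}\,f_{2}\,m_{2}$ by unfolding $\elabrelbase{C_{1}\to C_{2}}$ at $m_{1}\,\elabrelbase{C_{1}}\,m_{2}$. The induction hypothesis for $C_{2}$ then gives $g_{1}\,m_{1}\,\elabrelbase{C_{2}}\,g_{2}\,m_{2}$, as required. (If in the Coq development "ideal'' means upward-closed rather than downward-closed, the argument dualises verbatim.)

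The main obstacle is exactly this function case: one must combine the three separate pieces of data ($g_{1}\leq f_{1}$, $g_{2}\leq f_{2}$, and $f_{1}\,m_{1}\,\elabrelbase{}\,f_{2}\,m_{2}$) into a single application of the $C_{2}$ induction hypothesis, which is where the specific definition of the function-space order---quantifying over ordered \emph{pairs} of arguments, not a single argument---is indispensable; everything else is routine unfolding. Together with the preceding monotonicity lemma for the denotation of \SM{} terms, this shows that the relational interpretation of an internal \SM{} monad preserves the relevant order-closure property of monadic relations, which is what is needed for the \SM{}-derived monad transformers to interact correctly with ordered specification monads and to produce the weakening structure on the associated Dijkstra monads.
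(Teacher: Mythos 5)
Your proof is correct and follows essentially the same route as the paper's: structural induction on $C$, with the base case given directly by the hypothesis on $R_{A}$, componentwise and pointwise arguments for products and dependent functions, and the crucial arrow case handled exactly as in the paper by instantiating the pair-quantified function-space order at reflexivity witnesses ($m\leq^{C_{1}}m$) and then invoking the induction hypothesis at $C_{2}$. The only cosmetic difference is the direction of closure---your main text proves the downward-closed reading of ``ideal'' while the paper's proof establishes upward closure in both arguments---and, as you yourself note, the argument dualises verbatim.
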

\begin{proof}
  by induction on the derivation of $C$ :
  \begin{description}
  \item[Case $C = \bM\,A$,]  $\elabrel{\M_{1}}{\M_{2}}{R}{\bM\,A} = R_{A}$ is
    monotonic by assumption
    
  \item[Case $C = C_1 \times C_2$,] suppose
    $(m_{1}, m_{2})~\elabrel{\M_{1}}{\M_{2}}{R}{C_{1} \times C_{2}}~(n_{1}, n_{2})$,
    $(m_{1}, m_{2}) \leq^{C_{1}\times C_{2}} (m'_{1}, m'_{2})$,
    $(n_{1}, n_{2}) \leq^{C_{1}\times C_{2}} (n'_{1}, n'_{2})$ then by induction
    hypothesis $m'_{1} \elabrel{\M_{1}}{\M_{2}}{R}{C_{1}} n'_{1}$ and
    $m'_{2} \elabrel{\M_{1}}{\M_{2}}{R}{C_{2}} n'_{2}$ so
    $(m'_{1}, m'_{2}) \elabrel{\M_{1}}{\M_{2}}{R}{C_{1} \times C_{2}} (n'_{1}, n'_{2})$
    
  \item[Case $C = (x:A) \to C'$,]
    suppose $f~\elabrel{\M_{1}}{\M_{2}}{R}{(x:A) \to C'}~g$, $f \leq^{(x:A) \to
      C'} f'$ and $g \leq^{(x:A) \to C'} g'$ then for any $v:A$,  
    $(f\,v)~\elabrel{\M_{1}}{\M_{2}}{R}{C'\{v/x\}}~(g\,v)$,
    $f\,v\leq^{C'\{v/x\}}f'\,v$, $g\,v\leq^{C'\{v/x\}}g'\,v$ so by inductive
    hypothesis $(f'\,v)~\elabrel{\M_{1}}{\M_{2}}{R}{C'\{v/x\}}~(g'\,v)$,
    hence $f'~\elabrel{\M_{1}}{\M_{2}}{R}{(x:A) \to C'}~g'$
   
  \item[Case $C = C_{1} \to C_{2}$,]
    suppose  $f~\elabrel{\M_{1}}{\M_{2}}{R}{C_{1} \to C_{2}}~g$, $f \leq^{C_{1} \to
      C_{2}} f'$ and $g \leq^{C_{1} \to C_{2}} g'$, for any $m~\elabrel{\M_{1}}{\M_{2}}{R}{C_{1}}~n$,
    $(f\,m)~\elabrel{\M_{1}}{\M_{2}}{R}{C_{2}}~(g\,n)$, $m \leq^{C_{1}} m$
    and $n \leq^{C_{2}} n$ so $f\,m \leq^{C_{2}} f'\,m$ and
    $g\,n\leq^{C_{2}}g'\,n$, hence by induction hypothesis 
    $(f'\,m)~\elabrel{\M_{1}}{\M_{2}}{R}{C_{2}}~(g'\,n)$
  \end{description}
\end{proof}

\section{Linear type system for \SM{}}
\label{sec:linearTyping}
\newcommand{\vdLin}[0]{\vdash_{\mathrm{lin}}}
We present a type system for \SM{} where contexts $\Gamma \alt \Xi$ are equipped
with distinguished position $\Xi$ called a stoup.
The stoup can be either empty or containing one variable of a type $C$ from
\SM{}.
Linear types are a refinements of types from \SM{} given by the following grammar
\[
  L := C \alt C_{1} \multimap C_{2} \alt L_{1} \times L_{2} \alt (x:A) \to L
  \alt L_{1} \to L_{2} 
\]
where $A \in \Type_{\bL}$, $C,C_{1},C_{2} \in \Type_{\SM}$.
In particular the linear function space $C_{1} \multimap C_{2}$ should be
understood as a subtype of $C_{1} \to C_{2}$ whose denotation ought to be a set
of homomorphisms with respect to the algebra structures on the denotations of its
domain and codomain, thus cannot be nested.
A linear judgement is of the form $\Delta ; \Gamma \alt \Xi \vdLin t : L$ with
the invariant that if $\Xi$ is non-empty then $\Xi = x:C_1$ and $L = C_2$ for
\SM{} types $\vdD C_1$ and $\vdD C_2$. 

\begin{figure}
\begin{mathpar}
  \inferrule{ }{A\alt - \vdLin \ret~:~ A \to \bM A}
  \and
  \inferrule{ }{A,B \alt - \vdLin \bind~:~\bM A \multimap (A \to \bM B) \to \bM B}
  \and
  \inferrule{ }{\Gamma \alt x : C \vdash x : C}
  \and
  \inferrule{(x:C) \in \Gamma}{\Gamma \alt - \vdash x : C}
  \and
  \inferrule
  {\Gamma \alt \Xi \vdLin t_{i} : C_{i}}
  {\Gamma \alt \Xi \vdLin \pair \tone \ttwo : C_{1} \times C_{2}}
  \and
  \inferrule{\Gamma \alt \Xi \vdLin t : C_{1}\times C_{2}}
  {\Gamma \alt \Xi \vdLin \proj i\; t : C_{i}}
  \and
  \inferrule{\Gamma, x:A \alt \Xi \vdLin t : C}
  {\Gamma \alt \Xi \vdD \abs {} x t : (x:A) \to C}
  \and
  \inferrule
  {\Gamma \alt \Xi \vdL u : A\\ \Gamma \alt \Xi \vdLin t : (x:A)\to C}
  {\Gamma \vdLin \app t u : \subst C u x}
  \and
  \inferrule{\Gamma \alt x:C_{1} \vdLin t : C_{2}}
  {\Gamma \alt - \vdLin \abs \diamond x t : C_{1} \multimap C_{2}}
  \and
  \inferrule{\Gamma, x:C_{1} \alt \Xi \vdLin t : C_{2}}
  {\Gamma \alt \Xi \vdLin \abs \diamond x t : C_{1} \to C_{2}}
  \and
  \inferrule{\Gamma \alt - \vdLin t : C_{1} \multimap C_{2}}
  {\Gamma \alt - \vdLin t : C_{1} \to C_{2}}
  \and
  \inferrule
  {\Gamma \alt \Xi \vdLin \ttwo : C_{1}\\ \Gamma \alt - \vdD \tone :
    C_{1} \multimap C_{2}}
  {\Gamma \alt \Xi \vdLin \app \tone \ttwo : C_{2}}
  \and
  \inferrule
  {\Gamma \alt - \vdLin \ttwo : C_{1}\\ \Gamma \alt \Xi \vdD \tone :
    C_{1} \to C_{2}}
  {\Gamma \alt \Xi \vdLin \app \tone \ttwo : C_{2}}
\end{mathpar}
\vspace{-1em}
\caption{Typing rules for SM with linearity condition}
\label{fig:lintyp}
\end{figure}

The value of this linear type system is provided by the following theorem:
\begin{theorem}[linear terms are homomorphisms]\label{thm:lth}
  Let $\M$ be a monad, $\Gamma \alt - \vdLin t : C_{1} \multimap C_{2}$ a term
  in $\SM{}$ and $\gamma : \elab{\M}{\Gamma}$, then the following diagram commutes
\[
  \begin{tikzcd}
    \M\elab{\M}{C_{1}} \ar[r, "\alpha^{C_{1}}_{\M}"] \ar[d, "\M\elab{\M}{t}^{\gamma}"] & \elab{\M}{C_{1}} \ar[d, "\elab{\M}{t}^{\gamma}"]\\
    \M\elab{\M}{C_{2}} \ar[r, "\alpha^{C_{2}}_{\M}"'] & \elab{\M}{C_{2}}
  \end{tikzcd}
\]
\end{theorem}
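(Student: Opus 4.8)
The plan is to prove a strengthened claim by induction on the derivation of the linear typing judgement $\Gamma \alt \Xi \vdLin t : L$, with the monad $\M$ fixed throughout. The theorem as stated only concerns empty‑stoup judgements with a linear‑arrow result type, but it cannot be proved in that form: the linear rules are mutually recursive with the non‑linear ones, linear subterms of type $C_1 \multimap C_2$ occur underneath non‑linear binders, and the stoup may or may not be inhabited. The right generalisation is a unary logical predicate. For a linear type $L$, write $\elab{\M}{L}$ for the denotation of the SM type obtained from $L$ by erasing every $\multimap$ to $\to$, and define $\mathrm{Hom}_L \subseteq \elab{\M}{L}$ by recursion on $L$: $\mathrm{Hom}_C$ is all of $\elab{\M}{C}$ when $C$ is a plain SM type; $\mathrm{Hom}_{C_1 \multimap C_2}$ is the set of $\M$‑algebra homomorphisms from $(\elab{\M}{C_1},\alpha^{C_1}_{\M})$ to $(\elab{\M}{C_2},\alpha^{C_2}_{\M})$; $\mathrm{Hom}_{L_1\times L_2} = \mathrm{Hom}_{L_1}\times\mathrm{Hom}_{L_2}$; $\mathrm{Hom}_{(x:A)\to L}$ is the set of functions whose value at every $v:A$ lies in $\mathrm{Hom}_{L[v/x]}$; and $\mathrm{Hom}_{L_1\to L_2}$ is the set of functions mapping $\mathrm{Hom}_{L_1}$ into $\mathrm{Hom}_{L_2}$. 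The claim to prove by induction is then: for every $\gamma : \elab{\M}{\Gamma}$ (no hypothesis on $\gamma$ is needed, since context variables carry plain SM types), if $\Xi$ is empty then $\elab{\M}{t}^{\gamma} \in \mathrm{Hom}_L$, and if $\Xi = (x : C_1)$ — in which case the stoup invariant forces $L = C_2$ to be a plain SM type — then the currying $m \mapsto \elab{\M}{t}^{\gamma[x:=m]}$ lies in $\mathrm{Hom}_{C_1\multimap C_2}$. The theorem itself is the special case $\Xi = {}-$, $L = C_1\multimap C_2$, since $\elab{\M}{\abs{\diamond}{x}{t}}^{\gamma}$ unfolds to exactly this currying and $\abs{\diamond}{x}{t}$ is, modulo the subsumption rule, the only way to introduce a $\multimap$.

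The one rule carrying genuine mathematical content is the one for $\bind : \bM A \multimap (A \to \bM B) \to \bM B$. By the construction of \autoref{sec:monad-transformers} the algebra on a base type is the free one, $\alpha^{\bM X}_{\M} = \mu^{\M}_X$, so what must be checked is that for each $f : A \to \M B$ the map $k \eqdef (m \mapsto \bindM\,m\,f) : \M A \to \M B$ is a homomorphism of free algebras, i.e.\ $k \circ \mu^{\M}_A = \mu^{\M}_B \circ \M k$. Writing $k = \mu^{\M}_B \circ \M f$, the left side is $\mu^{\M}_B \circ \M f \circ \mu^{\M}_A = \mu^{\M}_B \circ \mu^{\M}_{\M B} \circ \M\M f$ by naturality of $\mu^{\M}$, and the right side is $\mu^{\M}_B \circ \M\mu^{\M}_B \circ \M\M f = \mu^{\M}_B \circ \mu^{\M}_{\M B} \circ \M\M f$ by the monad associativity law $\mu^{\M}_B \circ \M\mu^{\M}_B = \mu^{\M}_B \circ \mu^{\M}_{\M B}$; the two agree. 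This single calculation — associativity of $\M$ together with naturality of its multiplication — is the semantic reason behind the syntactic linearity restriction that the stoup variable may only be passed to the first argument of $\bind$.

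Every other case is structural bookkeeping that goes through precisely because, again by \autoref{sec:monad-transformers}, $\alpha^{C}_{\M}$ is defined componentwise on products and pointwise on both function spaces. The stoup‑variable rule $\Gamma \alt x : C \vdLin x : C$ denotes the identity on $\elab{\M}{C}$, which is trivially a homomorphism; pairing and projection use that a pair of homomorphisms is a homomorphism into the product algebra and conversely; the introduction and elimination rules for $(x:A)\to L$ and for the non‑linear $L_1 \to L_2$ are handled by the pointwise clauses in the definition of $\mathrm{Hom}$ together with the (extensional) $\beta\eta$‑behaviour of the denotation; the subsumption rule from $C_1\multimap C_2$ to $C_1 \to C_2$ is sound because $\mathrm{Hom}_{C_1\multimap C_2}\subseteq\mathrm{Hom}_{C_1\to C_2}$ whenever $C_1,C_2$ are plain (both $\mathrm{Hom}_{C_i}$ being total); and the two application rules involving a linear function $\tone : C_1 \multimap C_2$ amount to post‑composing, or feeding, the homomorphism denoted by $\tone$ with the homomorphism supplied by the inductive hypothesis for $\ttwo$, using that composites of $\M$‑algebra homomorphisms are again $\M$‑algebra homomorphisms.

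The main obstacle is not any particular calculation but arranging the logical predicate so that the induction is genuinely self‑supporting: $\mathrm{Hom}$ on a non‑linear arrow must be strong enough to discharge the $\multimap$‑elimination rules yet weak enough to be re‑established by the $\multimap$‑to‑$\to$ subsumption rule, and one must verify that the stoup invariant (a non‑empty stoup forces a plain result type) is preserved by every rule, so that the two halves of the claim never need to interact. Once the predicate is pinned down, the remainder is a routine structural induction whose only non‑formal ingredient is the associativity computation above; this is, incidentally, exactly the semantic hypothesis that the Coq development assumes in place of the linear type system, and the argument sketched here is what a full formalisation of \autoref{fig:lintyp} would have to establish.
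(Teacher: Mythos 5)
Your proposal is correct, and its mathematical core — interpret $\multimap$ as the set of $\M$-algebra homomorphisms, generalize the statement to arbitrary stoups so the induction over linear typing derivations goes through, and discharge the one non-trivial case ($\bind$) by naturality of $\mu^{\M}$ plus monad associativity — is exactly the content of the paper's appendix proof. The architectural difference is where the predicate lives. The paper first defines a separate \emph{refined} denotation of linear types in which $C_1 \multimap C_2$ denotes the homomorphisms, proves by one induction that every linear derivation has a well-defined denotation there (with the stoup case yielding a homomorphism), and then needs a \emph{second} induction, via a binary logical relation, to show this refined denotation agrees with the standard denotation $\elab{\M}{t}^{\gamma}$, concluding the theorem when the stoup is empty and the context carries no linear annotations. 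You instead carve the homomorphism predicate directly out of the standard (erased) denotation as a unary semantic-typing predicate and run a single induction, which is shorter; the price is that you silently rely on the standard denotation being compositional along the erased linear rules (each linear rule erases to the corresponding \SM{} rule, the subsumption rule leaves the term and hence its denotation unchanged, and the stoup variable of a premise typed with empty stoup cannot occur in that premise), which is precisely the coherence the paper makes explicit with its "straightforward but tedious" relating induction. Both arguments otherwise match case for case — stoup variable as identity, pairing/projection via the componentwise algebra, the two application rules via closure of homomorphisms under composition and via evaluation being a homomorphism for the pointwise algebra — and your explicit $\mu$-calculation for $\bind$ is in fact more detailed than the paper's, which simply asserts that $\bindM$ is homomorphic in its first argument.
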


Indeed, for an internal monad $X \vdD C$ in \SM{}, the linearity condition on
$\bind^{C}$ requires a derivation of
\[ A,B\alt - \vdLin \bind^{C} : C\{A/X\} \multimap (A \to C\{B/X\}) \to C\{B/X\}\]
from which we can derive that
\[A,B, f: A \to B \alt - \vdLin
  \abs{\diamond}{x}{\bind^{C}~x~(\abs{}{y}{\ret^{C}\,(f\,y)})} : C\{A/X\}
  \multimap C\{B/X\}\]
that in turn proves that the right square in the diagram below
commutes thanks to \autoref{thm:lth}:
\[
  \begin{tikzcd}
    \M\,A \ar[r, "\M(\ret^{C}_{A})"] \ar[d, "\M\,f"'] &\M\elab{\M}{C}\,A \ar[r,
    "\alpha^{C}_{\M,A}"] \ar[d, "\M\elab{\M}{C}\,f"] & \elab{\M}{C}\,A \ar[d, "\elab{\M}{C}\,f"]\\
    \M\,B \ar[r, "\M(\ret^{C}_{B})"'] &\M\elab{\M}{C}\,B \ar[r, "\alpha^{C}_{\M,B}"'] & \elab{\M}{C}\,A
  \end{tikzcd}
\]
Thus, under the assumption that $\bind^C$ has a linear typing derivation (a
syntactic object), we prove that its denotation is homomorphic with respect to
the relevant $M$-algebra structure.

In order to prove the \autoref{thm:lth}, we need to :
\begin{itemize}
\item provide an interpretation of the linear types
\item show that linear derivations yield a denotation in this interpretation
\item prove using a logical relation that the linear interpretation of a term
  is related to its monotonic one
\end{itemize}
\newcommand\lelab[2]{(\!|#2|\!)_{#1}}
The interpretation of linear types is quite straightforward:
\begin{mathpar}
  \lelab\M{C} = \elab\M{C}
  \and
  \lelab\M{C_{1} \multimap C_{2}} = \aset{ f : \elab\M{C_1} \to \elab\M{C_2}
    \alt f \circ \alpha^{C_1}_{\M} = \alpha^{C_2}_{\M} \circ \M\,f}
  \and
  \lelab\M{L_{1} \times L_{2}} = \lelab\M{L_1} \times \lelab\M{L_2}
  \and
  \lelab\M{(x:A) \to L} = (x:A) \to \lelab\M{L}
  \and
  \lelab\M{L_{1} \to L_{2}} = \lelab\M{L_1} \to \lelab\M{L_2}
\end{mathpar}

\begin{theorem}[Denotation of linear typings]
  Let $\M$ be a monad, $\Delta ;\Gamma \alt \Xi \vdLin t : L$ a linear
  typing derivation of a term in \SM{},
  $\vdL \delta : \Delta$ a substitution for the $\bL$ context $\Delta$,
  $\vdL \gamma : \lelab{\M}{\Gamma}$ a substitution for the \SM{}
  context $\Gamma$, and $\vdL \xi : \lelab\M{\Xi}$. Then there is a well defined
  denotation $\lelab{\M}{t}^{\delta;\gamma;\xi} : \lelab{\M}{L}$ and if $\Xi =
  (x:C_1)$ then $L = C_2$ and $\abs{}{x}{\lelab{\M}{t}^{\delta;\gamma;x}} :
  \elab{\M}{C_1} \to \elab{\M}{C_2}$ is an $M$-algebra homomorphism.
\end{theorem}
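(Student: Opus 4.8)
The plan is to proceed by structural induction on the linear typing derivation of $t$, carrying both halves of the statement as a single, reinforced induction hypothesis: that $\lelab{\M}{t}^{\delta;\gamma;\xi}$ is well-defined and lands in $\lelab{\M}{L}$, and moreover, whenever the stoup is non-empty with $\Xi = (x:C_1)$, the type $L$ is forced to be an \SM{} type $C_2$ and the map $\abs{}{x}{\lelab{\M}{t}^{\delta;\gamma;x}} : \elab{\M}{C_1} \to \elab{\M}{C_2}$ is a homomorphism for the $\M$-algebra structures $\alpha^{C_1}_{\M}$ and $\alpha^{C_2}_{\M}$ constructed in \autoref{sec:monad-transformers} (the free algebra on $\bM A$, and the pointwise algebra on $\times$, on $(x:A)\to C$, and on $C_1 \to C_2$). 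Two structural facts drive the whole argument: (i) because $\alpha^C_{\M}$ is defined pointwise on products and on function types, pairing, projection, $\bL$-abstraction, and evaluation at a fixed argument all either are homomorphisms or send homomorphisms to homomorphisms; and (ii) $\M$-algebra homomorphisms are closed under composition.

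First I would dispatch the base cases. For $\ret : A \to \bM A$ the stoup is empty, so there is nothing to prove beyond well-definedness. For the stoup-variable rule $\Gamma \mid x:C \vdash x : C$ the denotation is $\xi$ itself, and $\abs{}{x}{x}$ is the identity, hence a homomorphism. The crucial base case is $\bind : \bM A \multimap (A \to \bM B) \to \bM B$: unfolding the interpretation of $\multimap$, its linearity in the monadic-value argument amounts to showing that $\abs{}{m}{\abs{}{f}{\bindM\,m\,f}}$ is a homomorphism from the free $\M$-algebra on $\M A$ (the multiplication $\mu^{\M}$) to the pointwise algebra on $(A\to\M B)\to\M B$, and this equation reduces exactly to the naturality of $\mu^{\M}$ and the associativity law of $\M$; hence $\bind$ does land in $\lelab{\M}{\bM A \multimap (A\to\bM B)\to\bM B}$.

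Next I would treat the structural rules, in each case distinguishing whether the stoup is empty (only well-definedness at issue) or non-empty (homomorphism property as well), always reading off from the rule which subterm carries the stoup. Pairing and projection are immediate from the pointwise product algebra. For $\abs{\diamond}{x}{t} : C_1 \multimap C_2$ the stoup is the bound variable, so $\abs{}{x}{\lelab{\M}{t}^{\delta;\gamma;x}}$ is, by the induction hypothesis, exactly a homomorphism and therefore an element of $\lelab{\M}{C_1 \multimap C_2}$; for $\abs{\diamond}{x}{t} : C_1 \to C_2$ the bound variable is added to $\Gamma$ and the stoup is threaded on, so pointwiseness of the function-space algebra yields the homomorphism property. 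For the linear application $t_1\,t_2$, the stoup flows into $t_2$ while $t_1$ denotes a closed homomorphism, so the result is a composite of homomorphisms and (ii) applies; for the other application rule $t_2$ is closed while $t_1$ carries the stoup and is valued in a function type, so the result is evaluation at a fixed point post-composed with the $\xi$-parametrised $t_1$, again a homomorphism by (i) and (ii). The $\bL$-abstraction and $\bL$-application rules are handled the same way, using that the $(x:A)\to C$ algebra is pointwise in the $\bL$-argument. The subtyping rule from $C_1 \multimap C_2$ to $C_1 \to C_2$ is stated with an empty stoup, and there the only obligation is $\lelab{\M}{C_1\multimap C_2}\subseteq\lelab{\M}{C_1 \to C_2}$, which holds since a homomorphism is in particular a function.

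The hard part will be getting the stoup bookkeeping exactly right rather than any individual computation: one has to verify, rule by rule, that the invariant ``a non-empty stoup forces $L$ to be an \SM{} type $C_2$ and forces $\abs{}{x}{\lelab{\M}{t}^{\delta;\gamma;x}}$ to be a homomorphism'' is preserved, and this is precisely what explains the shape of the linear type system --- why $\multimap$ cannot be nested, and why the two application rules and the two $\lambda^{\diamond}$ rules must be split according to whether the stoup is consumed or passed on. Once that invariant is in place, every remaining step is a routine unfolding of the pointwise algebra structures of \autoref{sec:monad-transformers} together with the monad laws for $\M$.
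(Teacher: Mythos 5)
Your proposal is correct and follows essentially the same route as the paper's proof: a structural induction on the linear typing derivation with the strengthened hypothesis (well-definedness plus the homomorphism property when the stoup is inhabited), discharging each rule via the pointwise $\M$-algebra structures, closure of homomorphisms under composition and evaluation at fixed arguments, and the identity map for the stoup variable. Your treatment of the $\bind$ case (reducing linearity in the first argument to naturality of $\mu^{\M}$ and associativity) is in fact more explicit than the paper, which simply asserts that $\bind^{\M}$ is a homomorphism between the relevant algebra structures.
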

\begin{proof}
 By induction on the linear typing derivation (each case corresponding to one
 derivation rule in \autoref{fig:lintyp}):
 \begin{description}
 \item[Case $t = \ret$,] $\lelab{\M}{\ret}^{\delta;\gamma;-} = \ret^\M : A \to
   \M A = \lelab{\M}{A \to \bM A}$
 \item[Case $t = \bind$,] $\lelab{\M}{\bind}^{\delta;\gamma;-} = \bind^\M : \M A \to
   (A \to \M B) \to \M B$
   with \[\M A \to (A \to \M B) \to \M B = \lelab\M{\bM A} \to \lelab{\M}{(A \to \bM B) \to \bM B}\]
   and $\bind^\M$ a homomorphism between the respective $M$-algebra structures

 \item[Case $t = x$ is linear,] $\lelab{\M}{x}^{\delta;\gamma;\xi} = \xi$ and
   the identity is an $M$-algebra map
\item[Case $t = x$ is not linear,] $\lelab{\M}{x}^{\delta;\gamma;-} = \gamma(x)$
  
\item[Case $t = \pair \tone \ttwo$,] $\lelab\M{\pair \tone \ttwo}^{\delta;\gamma;\xi} =
  \pair{\lelab\M{\tone}^{\delta;\gamma;\xi}}{\lelab\M{\ttwo}^{\delta;\gamma;\xi}}$ 
  and $\abs{}{\xi}{\pair {\lelab\M{\tone}^{\delta;\gamma;\xi}} {\lelab\M{\ttwo}^{\delta;\gamma;\xi}}}$ is an $M$-algebra map if and only if
  both $\abs{}{\xi}{\lelab\M{\tone}^{\delta;\gamma;\xi}}$ and
    $\abs{}{\xi}{\lelab\M{\ttwo}^{\delta;\gamma;\xi}}$ are $M$-algebra maps
\item[Case $t = \proj{i} t'$,] $\lelab\M{\proj{i} t'}^{\delta;\gamma;\xi} = \proj{i}{\lelab\M{t'}^{\delta;\gamma;\xi}}$ 
  and $\abs{}{\xi}{\proj{i}{\lelab\M{t'}^{\delta;\gamma;\xi}}}$ is an $M$-algebra
  map  whenever $\abs{}{\xi}{\lelab\M{t'}^{\delta;\gamma;\xi}}$ is an $M$-algebra map
\item[Case $t = \abs{}{x}{t'}$,] $\lelab\M{\abs{}{x}{t'}}^{\delta;\gamma;\xi} = \abs{}{x}{\lelab\M{t'}^{\delta[x:=x];\gamma;\xi}}$ 
  and $\abs{}{\xi}{\abs{}{x}{\lelab\M{t'}^{\delta[x:=x];\gamma;\xi}}}$ is an
  $M$-algebra map if and only if for any $\vdL x : A$,
  $\abs{}{\xi}{\lelab\M{t'}^{\delta[x:=x];\gamma;\xi}}$ is an $M$-algebra map
\item[Case $t = t' \,v$,] $\lelab\M{t'\,v}^{\delta;\gamma;\xi} = \lelab\M{t'}^{\delta;\gamma;\xi}\,v\{\delta\}$ 
  and $\abs{}{\xi}{\lelab\M{t'}^{\delta;\gamma;\xi}\,v\{\delta\}}$ is an
  $M$-algebra map whenever $\abs{}{\xi}{\lelab\M{t'}^{\delta;\gamma;\xi}}$ is an $M$-algebra map

\item[Case $t = \abs{\diamond}{x}{t'} : C_1 \multimap C_2$,]
  $\lelab\M{\abs{\diamond}{x}{t'}}^{\delta;\gamma;-} =
  \abs{}{x}{\lelab\M{t'}^{\delta;\gamma;x}} : \lelab\M{C_1} \to \lelab\M{C_2}$
  and it is and $M$-algebra map by induction hypothesis  
\item[Case $t = \abs{\diamond}{x}{t'} : L_1 \to L_2$,]
  $\lelab\M{\abs{\diamond}{x}{t'}}^{\delta;\gamma;\xi} =
  \abs{}{x}{\lelab\M{t'}^{\delta;\gamma[x:=x];\xi}}$ and
  $\abs{}{\xi}{\abs{}{x}{\lelab\M{t'}^{\delta;\gamma[x:=x];\xi}}}$ is an
  $M$-algebra map if and only if for any $\vdD x : \lelab\M{L_1}$,
 $\abs{}{\xi}{\lelab\M{t'}^{\delta;\gamma[x:=x];\xi}}$ is an $M$-algebra 
\item[Case $t : C_1 \to C_2$ is obtained from $t : C_1 \multimap C_2$,] the
  denotation of the term is the same, we just forget that it is an homomorphism
\item[Case $t = \tone\, \ttwo$, $\tone : C_1 \multimap C_2$,]
  $\lelab\M{\tone\,\ttwo}^{\delta;\gamma;\xi} =
  \lelab\M{\tone}^{\delta;\gamma;-}\,\lelab\M{\ttwo}^{\delta;\gamma;\xi}$
  and
  $\abs{}{\xi}{\lelab\M{\tone}^{\delta;\gamma;-}\,\lelab\M{\ttwo}^{\delta;\gamma;\xi}}$
  is an $M$-algebra map whenever
  $\abs{}{\xi}{\lelab\M{\ttwo}^{\delta;\gamma;\xi}}$ is an $M$-algebra map since
  $\lelab\M{\tone}^{\delta;\gamma;-}$ is an $M$-algebra map
\item[Case $t = \tone\, \ttwo$, otherwise,]
  $\lelab\M{\tone\,\ttwo}^{\delta;\gamma;\xi} =
  \lelab\M{\tone}^{\delta;\gamma;\xi}\,\lelab\M{\ttwo}^{\delta;\gamma;-}$
  and
  $\abs{}{\xi}{\lelab\M{\tone}^{\delta;\gamma;\xi}\,\lelab\M{\ttwo}^{\delta;\gamma;-}}$
  is an $M$-algebra map whenever
  $\abs{}{\xi}{\lelab\M{\tone}^{\delta;\gamma;\xi}}$ is an $M$-algebra map
\end{description}
\end{proof}

\newcommand{\relab}[2]{\,\langle{}\!| #2 |\!\rangle_{#1}\,}
Given a linear type $L$, we can forget all the linear annotations, obtaining a
type $\vdD |L|$ in \SM{}. In the same fashion, given a derivation $\Delta ;
\Gamma \alt \Xi \vdLin t : L$, we can obtain a derivation $\Delta ; |\Gamma,
\Xi| \vdD t : |L|$. In order to relate $\lelab{\M}{t}$ and $\elab\M{t}$, we
introduce the following relation $\relab\M{L} \subseteq \lelab{\M}{L} \times \elab\M{|L|}$ :
\begin{mathpar}
 m \relab\M{C} m' \iff m = m' 
 \and
 f \relab\M{C_1 \multimap C_2} f' \iff f = f' 
 \and
 \pair{x_1}{x_2} \relab\M{L_1 \times L_2} \pair{x_1'}{x_2'} \iff x_1
 \relab\M{L_1} x_1' \wedge x_2 \relab\M{L_2} x_2'
 \and
 f \relab\M{(x:A) \to L} f' \iff \forall (x:A).\, f\,x \relab\M{L} f'\,x
 \and
 f \relab\M{L_1 \to L_2} f' \iff (\forall x\, x'.\, x \relab\M{L_1} x' \rightarrow f\,x \relab\M{L} f'\,x')
\end{mathpar}
We extend componentwise this relation to context, and a straightforward but
tedious induction shows that for any linear derivation $\Delta ; \Gamma \alt
\Xi \vdLin t : L$ and context $\vdL \delta : \Delta$, $\vdL \gamma :
\lelab\M{\Gamma \alt \Xi}$, $\gamma' : \elab\M{|\Gamma, \Xi|}$, if $\gamma
\relab{\M}{\Gamma\alt \Xi} \gamma'$ then $\lelab\M{t}^{\delta;\gamma} \relab\M{L} \elab\M{t}^{\delta;\gamma'}$
where the right hand side denotation is obtained from the \SM{} derivation 
$\Delta ; |\Gamma, \Xi| \vdD t : |L|$. In the particular case where $\Xi$ is
empty and all type in $\Gamma$ do not contain any linear annotation, we obtain \autoref{thm:lth}.


\fi


\bibliographystyle{abbrvnaturl}
\bibliography{fstar}

\end{document}